\documentclass[a4paper]{article}

\usepackage{ltexpprt}
\newcommand*\samethanks[1][\value{footnote}]{\footnotemark[#1]}
\usepackage{comment}

\usepackage{mathtools}
\usepackage{amsfonts,amssymb,amsmath}
\usepackage{algorithm}
\usepackage{algorithmic}
\pagenumbering{arabic}

\usepackage{multirow}

\usepackage{xcolor}
\usepackage{float}

\DeclareMathOperator{\E}{{E}}
\DeclareMathOperator*{\argmax}{arg\,max}

\newcommand{\xue}[1]{\textcolor{red}{\bf (xue: #1)}}
\title{Near Optimal Jointly Private Packing Algorithms via Dual Multiplicative Weight Update}
\author{Zhiyi Huang\thanks{Department of Computer Science, the University of Hong Kong. This work is supported in by a RGC grant HKU27200214E. {\texttt{\{zhiyi,xzhu2\}@cs.hku.hk}}}
	\and Xue Zhu\samethanks}
\date{}




\begin{document}

\maketitle

\begin{abstract}
We present an improved $(\epsilon, \delta)$-jointly differentially private algorithm for packing problems.
Our algorithm gives a feasible output that is approximately optimal up to an $\alpha n$ additive factor as long as the supply of each resource is at least $\tilde{O}(\sqrt{m} / \alpha \epsilon)$, where $m$ is the number of resources.
This improves the previous result by Hsu et al.~(SODA '16), which requires the total supply to be at least $\tilde{O}(m^2 / \alpha \epsilon)$, and only guarantees approximate feasibility in terms of total violation.
Further, we complement our algorithm with an almost matching hardness result, showing that $\Omega(\sqrt{m \ln(1/\delta)} / \alpha \epsilon)$ supply is necessary for any $(\epsilon, \delta)$-jointly differentially private algorithm to compute an approximately optimal packing solution.
Finally, we introduce an alternative approach that runs in linear time, is exactly truthful, can be implemented online, and can be $\epsilon$-jointly differentially private, but requires a larger supply of each resource.

\end{abstract}

\section{Introduction}
\label{sec:intro}

Handling user data has become the focal point of modern computational problems, bringing up many new challenges including user privacy.
The consensus notion of privacy in theoretical computer science is differential privacy introduced by Dwork et al.~\cite{DBLP:conf/tcc/DworkMNS06}.
Informally, a mechanism is differentially private if the output distribution is insensitive to the change of an individual's data.
Since its introduction, the community has introduced differentially private algorithms for a variety of problems, including computation of numerical statistics (e.g., \cite{dwork2010differential, DBLP:journals/jacm/BlumLR13}), machine learning (e.g., \cite{DBLP:journals/jmlr/ChaudhuriH11, kasiviswanathan2011can}, game theory (e.g., \cite{DBLP:conf/focs/McSherryT07, DBLP:reference/algo/Huang16c}), etc.

However, many fundamental combinatorial problems, such as the bipartite matching problem, provably do not admit any differentially private algorithm with non-trivial approximation guarantee~\cite{DBLP:conf/stoc/HsuHRRW14}.
To resolve the situation, Hsu et al.~\cite{DBLP:conf/stoc/HsuHRRW14} adopt a relaxed notion called joint differential privacy by Kearns et al.~\cite{DBLP:conf/innovations/KearnsPRU14} and introduce jointly differentially private algorithms for the bipartite matching problem and more generally the welfare maximization problem w.r.t.\ gross substitutes valuations.
Hsu et al.~\cite{DBLP:conf/soda/Hsu0RW16} further develop the techniques in~\cite{DBLP:conf/stoc/HsuHRRW14}, and propose a general framework that solves a large family of convex programs in a jointly differentially private manner via a noisy version of the dual gradient descent method.

While our techniques can be applied to the generic convex programs studied in \cite{DBLP:conf/soda/Hsu0RW16}, we focus on the packing problem in this conference version due to space constraint.
Consider a packing problem with $n$ agents and $m$ resources where each agent has different values for different bundles of resources.
The value and resource demands of an agent are private.
The goal is to allocate bundles to agents so as to maximize the sum of values of all agents subject to the supply constrains of resources.
The algorithm by Hsu et al.~\cite{DBLP:conf/soda/Hsu0RW16} incurs an $\tilde{O}(m^2 / \epsilon)$ additive loss in terms of the objective and up to $\tilde{O}(m^2 / \epsilon)$ additive total violation of the supply constraints.
In other words, their algorithm guarantees up to $\alpha n$ additive loss in the objective if $n \ge \tilde{O}(m^2 / \alpha \epsilon)$, and the total violation is at most $\alpha$ fraction of the total supply if the supply per constraint is at least $\tilde{O}(m / \alpha \epsilon)$.
No non-trivial guarantee is given in terms of per constraint violation.

\medskip

\textbf{Our contributions.}
Our first result is an $(\epsilon, \delta)$-jointly differentially private algorithm (Sec.~\ref{sec:algorithm}) that improves the results by Hsu et al.~\cite{DBLP:conf/soda/Hsu0RW16} by two means:
(1) It reduces the supply requirement in terms of the dependence on the number of resources $m$; and (2) it provides exact feasibility with high probability.
\footnote{We note that Hsu et al.~\cite{DBLP:conf/soda/Hsu0RW16} can also obtain exact feasibility by shifting the error to the objective, under the extra assumption that each unit of resource provides at most value $1$ in the objective. Our result does not need such an assumption.}
Concretely, we show the following:

\begin{theorem}
\label{thm:algorithm}
For any packing problem with $m$ constraints, there is an $(\epsilon, \delta)$-jointly differentially private algorithm whose output is feasible and approximately optimal up to an $\alpha n$ additive factor as long as the supply of each resource is at least $\tilde{O}( \sqrt{m} / \alpha \epsilon)$.
\end{theorem}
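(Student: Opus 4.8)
The plan is to design a noisy dual multiplicative weight update (MWU) algorithm over the $m$ packing constraints, and to argue privacy and accuracy separately. First I would write the packing LP in its natural form: maximize $\sum_i \langle v_i, x_i\rangle$ subject to $\sum_i A_i x_i \le s$ (each $s_j \ge \tilde O(\sqrt m/\alpha\epsilon)$) and $x_i$ in agent $i$'s local demand polytope. The dual variables are prices $p \in \Delta_m$ (after normalizing by the supplies), and the key observation — the same one underlying Hsu et al.~\cite{DBLP:conf/soda/Hsu0RW16} — is that once prices $p$ are fixed, each agent's contribution decomposes: agent $i$ simply picks the utility-maximizing bundle $x_i(p) = \argmax_{x_i} \langle v_i - A_i^\top p, x_i\rangle$ using only her own private data. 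So the algorithm maintains prices via MWU, and at each round releases (a noisy version of) the current price vector; agents best-respond locally. Because the downstream allocation of agent $i$ depends on other agents' data only through the sequence of published prices, proving that the price sequence is $(\epsilon,\delta)$-differentially private immediately yields joint differential privacy by the billboard lemma.

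Next I would set up the MWU potential argument. Over $T$ rounds, MWU on $m$ experts guarantees that the average observed ``constraint violation'' vector $g^{(t)} = \frac{1}{n}(\sum_i A_i x_i^{(t)} - s)$ (normalized coordinatewise by $s_j$, so entries lie in a bounded range) is driven down: the standard regret bound gives $\max_j \frac{1}{T}\sum_t g^{(t)}_j \le \frac{1}{T}\sum_t \langle p^{(t)}, g^{(t)}\rangle + O(\sqrt{\ln m / T})$. The left term, averaged over rounds, is essentially the maximum relative violation of the constraint; the $\langle p^{(t)}, g^{(t)}\rangle$ term is controlled because best-responding agents only take a bundle when its value net of price is positive, which lower-bounds the Lagrangian and hence, by weak LP duality, ties the objective loss to $O(n\sqrt{\ln m/T})$. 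To make this private I would add Laplace (or Gaussian) noise of scale $\tilde O(T/\epsilon n)$ — actually $\tilde O(\sqrt{T\ln(1/\delta)}/\epsilon n)$ via advanced composition — to each coordinate of the per-round aggregate $\sum_i A_i x_i^{(t)}$ before feeding it into the weight update. Since one agent changes $\sum_i A_i x_i^{(t)}$ by $O(1)$ per coordinate per round, advanced composition over $T$ rounds gives the stated privacy. Balancing the MWU error term $O(n\sqrt{\ln m/T})$ against the aggregate noise term $\tilde O(\sqrt{T\ln(1/\delta)}/\epsilon)$ and optimizing $T$ yields an additive error of $\tilde O(\sqrt{n}\cdot\text{poly}\log/\sqrt\epsilon)$ in absolute terms, i.e.\ $\alpha n$ whenever $s_j \ge \tilde O(\sqrt m/\alpha\epsilon)$ after the coordinatewise $s_j$-normalization is unwound (the $\sqrt m$ enters through the conversion between the normalized violation bound and the absolute per-constraint supply, together with the $\sqrt{\ln m}$ from MWU).

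To upgrade approximate feasibility to \emph{exact} feasibility with high probability, I would run the MWU with slightly shrunk effective supplies $s_j' = s_j - \Theta(\tilde O(\sqrt m/\alpha\epsilon))$, so that the guaranteed bound on the violation of $s_j'$ translates into $\sum_i A_i x_i \le s_j$ exactly; one then checks that the optimal value of the shrunk program is within $\alpha n$ of the original optimum, again using $s_j$ large. A small technical point is that the MWU output is a \emph{sequence} of allocations $x_i^{(t)}$; the final allocation for agent $i$ is the time-average (or a uniformly random round), and I would verify that averaging preserves membership in agent $i$'s demand polytope (true if it is convex, e.g.\ for fractional packing or randomized rounding of integral bundles) and that feasibility holds in expectation, then boost to high probability by a concentration argument over the independent noise and any internal randomness.

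The main obstacle I expect is the accuracy analysis under noise: the classical MWU regret bound assumes the loss vectors fed to the weights are the \emph{true} violations, but here they are perturbed by Laplace noise that is correlated across rounds only through the weights' dependence on past noise. I would handle this by treating the noisy updates as MWU against an adversary whose loss vectors are true-violation-plus-noise, applying the regret bound to the perturbed sequence, and then paying back the $\sum_t \|\text{noise}^{(t)}\|_\infty$ term using a maximal inequality / union bound over the $T$ rounds and $m$ coordinates — this is where the $\tilde O$ polylog factors and the $\sqrt{\ln(1/\delta)}$ come from, and getting the dependence on $m$ down to $\sqrt m$ rather than $m$ hinges on measuring error in the right ($\ell_\infty$-after-$s_j$-normalization) norm so that only $\sqrt{\ln m}$, not $m$, appears. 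Matching the conjectured optimal $\sqrt m$ dependence, rather than $m$, is precisely the delicate point where this analysis must improve on \cite{DBLP:conf/soda/Hsu0RW16}.
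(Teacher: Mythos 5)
Your high-level architecture is the same as the paper's: a dual multiplicative-weight update over (supply-scaled) prices, agents best-responding locally so that the billboard lemma converts differential privacy of the price sequence into joint differential privacy of the allocation, Laplace noise on the per-round aggregate demand with advanced composition, a time-averaged output, and exact feasibility by running with slightly shrunk supplies. The gaps are in the two quantitative steps that actually produce the $\tilde{O}(\sqrt{m}/\alpha\epsilon)$ supply bound. First, your noise calibration drops the dependence on $m$: a single agent can move every one of the $m$ coordinates of $\sum_i A_i x_i^{(t)}$ by up to $1$, so privacy of the whole price sequence requires composing over $Tm$ per-coordinate Laplace releases (equivalently, accounting for $\ell_2$-sensitivity $\sqrt{m}$ per round), which forces per-coordinate noise of scale $\tilde{O}(\sqrt{Tm\ln(1/\delta)}/\epsilon)$, not $\tilde{O}(\sqrt{T\ln(1/\delta)}/\epsilon)$. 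This is exactly where the $\sqrt{m}$ in the theorem comes from (in the paper, $\epsilon'=\epsilon/\sqrt{8Tm\ln(2/\delta)}$), not from ``unwinding the $s_j$-normalization'' as you suggest.

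Second, and more seriously, your plan to pay for the noise in the regret analysis via $\sum_t \|\nu^{(t)}\|_\infty$ with a union bound over rounds and coordinates is too lossy to reach the claimed bound. What must be controlled is the first-order term $\sum_t \langle p^{(t)}-p,\nu^{(t)}\rangle$; the union-bound route bounds it by roughly $p_{\max}\cdot T\sigma$ (up to logs), whereas one needs $p_{\max}\cdot\sqrt{T}\sigma$, a $\sqrt{T}$ saving that the paper obtains from an Azuma-type martingale concentration for the Laplace noise (using that $p^{(t)}$ depends only on past noise and the noise has mean zero), combined with truncating the noisy sub-gradients to keep the MWU losses bounded and charging the truncation error to the Laplace tail. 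If you redo your balancing with the lossy accounting (and the corrected noise scale $\sigma=\tilde{O}(\sqrt{Tm\ln(1/\delta)}/\epsilon)$), the two requirements ``regret $\le \alpha n$ per round'' and ``noise error $\le \alpha n$ per round'' force $b \ge \tilde{\Omega}\big(\sqrt{n}\,m^{1/4}/\alpha\sqrt{\epsilon}\big)$, i.e.\ a supply requirement growing polynomially in $n$, comparable to the online variant (Theorem 3) rather than Theorem 1. Relatedly, your final error estimate $\tilde{O}(\sqrt{n}/\sqrt{\epsilon})$ is independent of $b$ and $m$ and so cannot by itself yield the theorem's supply condition: in the paper the supply enters through the cap $p_{\max}=\Theta(n/b)$ on the dual prices, which multiplies every error term and is also what closes the feasibility argument (placing all price mass on the most violated constraint and using $OPT\le n$), so this dependence needs to appear explicitly in your accounting.
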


Our algorithm is a noisy version of the multiplicative weight update algorithm running in the dual space.
To compute a primal packing solution, we maintain a set of dual prices that coordinates the primal decisions: an item is selected in the packing if and only if its value is greater than the total price of its resource demands.
The dual prices are then updated using the multiplicative weight update method (e.g., \cite{DBLP:conf/soda/AgrawalD15}) with Laplacian noise added to each iteration to ensure that the dual prices are differentially private.
Finally, the billboard lemma introduced by Hsu et al.~\cite{DBLP:conf/stoc/HsuHRRW14} indicates that the primal packing solution satisfies joint differential privacy if the coordinators, i.e., the dual prices, satisfy differential privacy.

The main technical challenges then arise from bounding the error introduced by the Laplacian noise.
First, the standard analysis of the multiplicative weight update framework requires the update weights to be bounded, while our noisy update weights are unbounded.
To resolve this problem, we truncate noisy update weights that are either too large or too small, and bound the error due to this truncation using the small-tail properties of Laplacian distributions (Lemma ~\ref{lem:5}).
Second, to obtain with-high-probability guarantees, one resorts to concentration bounds, which generally require the random variables to be bounded.
To this end, we introduce a concentration lemma for Laplacian distributed variables (Lemma~\ref{lem:4}).

Then, we complement our algorithm with an almost matching lower bound on the minimum supply required for any $(\epsilon, \delta)$-jointly differentially private algorithm to compute an approximately optimal solution (Sec.~\ref{sec:hardness}).

\begin{theorem}
\label{thm:hardness}
(Hardness for $(\epsilon, \delta)$-private
 algorithms) If there is an $(\epsilon, \delta)$-jointly differentially private algorithm that with high probability outputs a feasible solution for the packing problem that is approximately optimal up to an additive $O(\alpha n)$, then the supply per constraint must be at least
\[
\Omega \left( \tfrac{\sqrt{m \ln(1/\delta)}}{\alpha \epsilon} \right) ~.
\]
\end{theorem}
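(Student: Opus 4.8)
The strategy is to reduce from the known lower bounds on $(\epsilon,\delta)$-differentially private release of one-way marginals, which are proved via fingerprinting codes and exhibit exactly the characteristic $\sqrt{m\ln(1/\delta)}/(\alpha\epsilon)$ scaling: any $(\epsilon,\delta)$-differentially private mechanism that, given a dataset $C\in\{0,1\}^{N\times m}$, with constant probability outputs estimates $\hat a_1,\dots,\hat a_m$ with $\bigl|\hat a_j-\tfrac1N\sum_{i=1}^{N}C_{ij}\bigr|\le\alpha$ for all $j$ must have $N=\Omega\bigl(\sqrt{m\ln(1/\delta)}/(\alpha\epsilon)\bigr)$ (equivalently, the dual statement about tracing attacks on a fingerprinting codebook). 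I will embed such a codebook into a packing instance: there are $m$ resources with a common supply $s=\Theta(N)$, and $n=\Theta(N)$ agents, one agent per row of $C$, so that agent $i$'s private valuation encodes exactly the codeword $C_i\in\{0,1\}^m$ and consumes at most one unit of each resource when served. The instance is designed so that any allocation whose objective is within $O(\alpha n)$ of the optimum can be post-processed into a vector of per-resource loads that correlates sufficiently strongly with the column sums of $C$ to run the tracing attack (after dividing by $n$, to produce the estimates $\hat a_j$).

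\textbf{Key steps.} First, build the gadget: a served agent consumes one unit of resource $j$ precisely when $C_{ij}=1$, the objective rewards serving as many agents as possible, and the common supply $s$ is calibrated so that the supply constraints are (nearly) tight at the optimum. Then a near-optimal allocation must serve almost every agent, dropping only $O(\alpha n)$ of them; since each dropped agent disturbs the total load across all resources by at most $m$, the average per-resource discrepancy between the near-optimal load and the optimal load is $O(\alpha n)$, which is an $O(\alpha)$ fraction of the load range $s=\Theta(n)$. Second, invoke joint differential privacy: by definition, the portion of the output seen by all agents other than $i$ is $(\epsilon,\delta)$-differentially private in agent $i$'s valuation, and since changing agent $i$ perturbs any single resource's load by at most one unit, the derived estimator $C\mapsto(\hat a_1,\dots,\hat a_m)$ is $(\epsilon,\delta)$-differentially private up to an additive $O(1/N)$ per coordinate, which is absorbed into $\alpha$. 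Third, feed this estimator into the marginals lower bound to obtain $N=\Omega\bigl(\sqrt{m\ln(1/\delta)}/(\alpha\epsilon)\bigr)$; since $s=\Theta(N)$, the supply per constraint is $\Omega\bigl(\sqrt{m\ln(1/\delta)}/(\alpha\epsilon)\bigr)$. Boosting the constant success probability to ``with high probability'' is standard.

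\textbf{Main obstacle.} The delicate part is the joint calibration of the gadget and the supply. If the supply constraints are not close to tight at the optimum, then ``serve every agent'' is optimal and the allocation carries no information about $C$, so no lower bound can follow; on the other hand, the supply must be chosen uniformly across resources and independently of $C$. Thus one must exploit the structure of the fingerprinting codebook (columns roughly balanced, the appropriate consensus/robustness properties) so that the threshold calibration is possible and the rounded per-resource load of any near-optimal allocation qualifies as a valid ``pirate codeword'' for the coalition of all agents. Carrying out the accuracy bookkeeping — converting an $O(\alpha n)$ global slack in the objective into an $O(\alpha)$ error per normalized marginal while keeping the $\pm1$ perturbation forced by joint privacy subdominant — is the second technical point that needs care.
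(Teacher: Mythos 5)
Your high-level instinct — reduce to a fingerprinting-style $(\epsilon,\delta)$-DP lower bound for answering $m$ counting queries/marginals — is the same one the paper uses (it invokes Steinke--Ullman for arbitrary counting queries with average error). But your reduction has a genuine gap at the privacy step, and it is precisely the step where the paper's construction differs from yours. In your single-population gadget, the marginal estimate $\hat a_j$ is the (normalized) load of resource $j$, i.e.\ $\frac1N\sum_i C_{ij}x_i$. This quantity is not a post-processing of the jointly private output alone: computing it requires the raw codewords $C_{ij}$ in addition to the allocation bits $x_i$. Moreover, under \emph{joint} differential privacy agent $i$'s own component $x_i$ is completely unprotected, so even the map from $C$ to the loads is not $(\epsilon,\delta)$-DP; for instance, whenever ``serve everyone'' is feasible, a data-independent (hence perfectly private) packing algorithm yields loads equal to the exact column sums, i.e.\ a maximally non-private estimator. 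Your patch — ``the estimator is DP up to an additive $O(1/N)$ per coordinate, absorbed into $\alpha$'' — does not repair this: being within $1/N$ of a (per-$i$, different) DP quantity is not $(\epsilon,\delta)$-DP, since an arbitrarily small deterministic, data-dependent shift can separate the two output distributions entirely, and the marginals lower bound you want to cite applies only to mechanisms that actually satisfy DP. To make your route rigorous you would have to reprove a robust tracing/fingerprinting lower bound for ``JDP plus bounded single-user influence on the statistic,'' which your proposal does not do. The calibration problem you flag (making the supply tight uniformly over resources, independently of $C$) is a second unresolved hole, and you only gesture at it.

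The paper resolves both issues simultaneously with a two-population construction that your proposal is missing. The dataset is embedded only in a set $A$ of ``data'' agents (agent $i$ demands $a_{ij}=q_j(d_i)$, value $1$), while a large pool $B$ of \emph{filler} agents with public, data-independent demands (any size-$m/2$ subset, value $1/4$, hence half the value density of $A$ agents) is added. A filling lemma shows OPT essentially uses up every resource, so any solution optimal up to $O(\alpha n)$ must serve all but $O(\alpha n)$ agents of $A$ and leave only $O(\alpha b m)$ units unallocated in total. The query answers are then read off as $\tilde q_j = b$ minus the amount of resource $j$ allocated to $B$: this is computable from the allocations to $B$ alone (their demands are public), and by the definition of joint differential privacy the allocations to $B$ are $(\epsilon,\delta)$-DP with respect to any single agent in $A$, so $\tilde q_j$ is a genuine post-processing of a DP output. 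The filler agents also make the constraints tight regardless of the dataset, which is exactly the calibration your gadget needs but cannot achieve with one population and a fixed uniform supply. You would need to add an analogue of this filler/readout mechanism (or replace the black-box citation with a robust tracing argument) for your proof to go through.
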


This lower bound significantly improves the best previous bound of $\Omega\big({1}/{\sqrt{\alpha}}\big)$ by Hsu et al.~\cite{DBLP:conf/stoc/HsuHRRW14}.
It matches the supply requirement in Theorem~\ref{thm:algorithm} up to log factors, certifying that our algorithm has achieved near optimal trade-offs between privacy and accuracy.
The proof of this hardness result draws a novel connection between jointly differentially private packing algorithms and differentially private query release algorithms.
We show that one can take an arbitrary jointly differentially private packing algorithm as a blackbox and use it to construct a differentially private query release algorithm for answering arbitrary counting queries.
Then, the hardness result follows from existing hardness for query release by Steinke and Ullman~\cite{steinke2017between}.

Having achieved the optimal tradeoffs between privacy and accuracy, we turn to other aspects of the algorithm such as running time.
A common drawback of the algorithms in the previous dual gradient descent approach by Hsu et al.~\cite{DBLP:conf/soda/Hsu0RW16} and that in Theorem~\ref{thm:algorithm} is the cubic dependence in $n$ in the running time.
\emph{Is there a privacy-preserving algorithm that goes over each agent's data only once and still finds an approximately optimal packing solution?}

To this end, we introduce an alternative approach that can be interpreted as a privacy-preserving version of the online packing algorithm in the random-arrival model by Agrawal and Devanur~\cite{DBLP:conf/soda/AgrawalD15}.
In each round of dual update, instead of computing the best responses of all agents to calculate an accurate dual subgradient, the new approach picks one agent and uses his best response to calculate a proxy subgradient and updates the dual prices accordingly; the agent gets his best response bundle.
The algorithm updates the dual prices for exactly $n$ rounds, using each agent to compute the proxy subgradient exactly once.
The ordering that the algorithm picks the agent is chosen uniformly at random at the beginning.

The new approach runs in linear time, significantly faster than the other approaches whose running time has cubic dependence in $n$.
However, it needs a larger supply of each resource, i.e., at least $\tilde{O}\big( \frac{\sqrt{nm}}{\alpha \epsilon} \big)$, in order to compute an approximately optimal solution.
Whether there exists an jointly differentially private algorithm that both runs in linear time and achieves the optimal tradeoffs between privacy and accuracy characterized in Theorem~\ref{thm:algorithm} and Theorem~\ref{thm:hardness} is an interesting open problem.

Further, the new approach can be implemented in the online random-arrival setting, where the agents show up one by one in a random order and the algorithm must decide the allocation to each agent at his arrival.
It also achieves exact truthfulness if we charge each agent the dual prices in his round because every agent gets his best response bundle.
Previous approach by Hsu et al.~\cite{DBLP:conf/icalp/HsuRRU14, DBLP:conf/soda/Hsu0RW16} gets only approximate truthfulness.

Last but not least, the approach in this section can be implemented in an $\epsilon$-jointly differentially private manner, provided that the supply of each resource is at least $\tilde{O} \big( \frac{m\sqrt{n}}{\alpha\epsilon} \big)$.
Neither the dual multiplicative weight update approach in Theorem~\ref{thm:algorithm} nor the approach in previous work~\cite{DBLP:conf/soda/Hsu0RW16} can achieve $\epsilon$-joint differential privacy, unless the supply $b \ge n$ in which case the problem is trivial.

\begin{theorem}
	\label{thm:online}
	For any packing problem with $m$ constraints, there is a {linear time}, {truthful}, and $(\epsilon, \delta)$-jointly differentially private algorithm whose output is feasible and approximately optimal up to an $\alpha n$ additive factor as long as the supply of each resource is at least $\tilde{O}( \sqrt{mn} / \alpha \epsilon)$.	
	Further, the algorithm can work in the online random-arrival model, and can get $\epsilon$-joint differential privacy if the supply of each resource is at least $\tilde{O}( m \sqrt{n} / \alpha \epsilon)$.
\end{theorem}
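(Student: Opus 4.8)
The algorithm is a privatized online version of the random-arrival packing algorithm of Agrawal and Devanur~\cite{DBLP:conf/soda/AgrawalD15}. We first draw a uniformly random arrival order $\pi$ over the $n$ agents. We maintain dual prices $p^{(t)}$ on the $m$ resources, initialized uniformly. In round $t$, agent $\pi(t)$ reports, is offered the current prices $p^{(t-1)}$, picks and receives a best-response bundle $x^{(t)} \in \argmax_{S}\big(v_{\pi(t)}(S) - \langle p^{(t-1)}, d_S\rangle\big)$, and is charged $\langle p^{(t-1)}, d_{x^{(t)}}\rangle$. The algorithm then builds a proxy dual subgradient from this single agent, $g^{(t)} \propto \tfrac{b}{n}\,\mathbf{1} - d_{x^{(t)}}$, perturbs it coordinatewise by a noise vector $\xi^{(t)}$ (Laplace for the $\epsilon$-private variant, Gaussian for the $(\epsilon,\delta)$-private variant), truncates the resulting noisy update weights to a bounded range to keep the multiplicative-weight width under control, and updates $p^{(t)}_j \propto p^{(t-1)}_j \exp(-\eta\,\widehat g^{(t)}_j)$. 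Each agent is touched exactly once, so the running time is $O(nm)$, the decisions are immediate and irrevocable, and $\pi$ is precisely the random-arrival order.

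\textit{Privacy.} By the billboard lemma of Hsu et al.~\cite{DBLP:conf/stoc/HsuHRRW14}, it suffices to show the price trajectory $(p^{(0)},\dots,p^{(n)})$ is differentially private with respect to every single agent, since each agent's allocation is a deterministic function of the (public) prices active in his round together with his own data. Fix an agent $i$, a neighbouring database differing only in $i$'s report, and condition on $\pi$ with $i=\pi(\tau)$. The prices $p^{(0)},\dots,p^{(\tau-1)}$ are functions of earlier agents only and are unchanged; $p^{(\tau)}$ is obtained from $p^{(\tau-1)}$ by a \emph{single} noisy update whose pre-noise value moves by at most the per-round sensitivity of $g^{(\tau)}$, namely $O(\sqrt m)$ in $\ell_2$ and $O(m)$ in $\ell_1$ (a bundle has $m$ coordinates in a bounded range); and $p^{(\tau+1)},\dots,p^{(n)}$ are a randomized post-processing of $p^{(\tau)}$ and the unchanged later agents. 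Hence the Gaussian mechanism with per-coordinate scale $\tilde{O}(\sqrt m/\epsilon)$ makes the trajectory $(\epsilon,\delta)$-differentially private in $i$, and the Laplace mechanism with per-coordinate scale $\tilde{O}(m/\epsilon)$ makes it $\epsilon$-differentially private; averaging over the data-independent $\pi$ preserves this. The point is that we pay for the noise only once per agent rather than by composition over $n$ rounds, and the $\sqrt m$ gap between the $\epsilon$ and $(\epsilon,\delta)$ bounds is exactly the $\ell_1$-versus-$\ell_2$ sensitivity gap.

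\textit{Utility.} We extend the potential/regret analysis of~\cite{DBLP:conf/soda/AgrawalD15}, tracking two error sources. First, the random order: the per-round proxy subgradient is an unbiased one-sample estimate of the true dual subgradient given the current prices, so the martingale concentration argument of~\cite{DBLP:conf/soda/AgrawalD15} bounds this contribution by $\tilde{O}(\sqrt n)$-type terms. Second, the injected noise: the multiplicative-weight potential bound degrades by terms controlled by $\big\|\sum_{t\le n}\xi^{(t)}\big\|$, whose per-coordinate magnitude is, with high probability, the per-round scale times $\sqrt n$, i.e.\ $\tilde{O}(\sqrt{mn}/\epsilon)$ for Gaussian noise and $\tilde{O}(m\sqrt n/\epsilon)$ for Laplace noise; for the $\epsilon$-private variant we invoke the Laplacian concentration inequality (Lemma~\ref{lem:4}), and the truncation of extreme noisy weights is handled as in the proof of Theorem~\ref{thm:algorithm} via the small-tail property (Lemma~\ref{lem:5}), so the width-induced error is negligible. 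Combining, both the total constraint violation and the objective loss are $\tilde{O}(\sqrt{mn}/\epsilon)$ (resp.\ $\tilde{O}(m\sqrt n/\epsilon)$); thus the output is exactly feasible and within $\alpha n$ of the optimum once the supply per resource is at least $\tilde{O}(\sqrt{mn}/\alpha\epsilon)$ (resp.\ $\tilde{O}(m\sqrt n/\alpha\epsilon)$).

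\textit{Truthfulness, and the main obstacle.} In round $t$, agent $\pi(t)$ faces take-it-or-leave-it prices $p^{(t-1)}$ that depend only on $\pi$, the noise, and earlier agents' reports, never on his own report; since he receives a bundle maximizing $v_{\pi(t)}(S)-\langle p^{(t-1)},d_S\rangle$ and pays $\langle p^{(t-1)},d_{x^{(t)}}\rangle$, reporting truthfully is a dominant strategy, giving exact truthfulness. The main obstacle is the utility analysis: one must absorb both the arrival-order randomness and the accumulated privacy noise into the multiplicative-weight potential while keeping the width bounded through truncation, so that the supply requirement stays at $\tilde{O}(\sqrt{mn}/\alpha\epsilon)$ and does not blow up by extra $\mathrm{poly}(m)$ or $\mathrm{poly}(\log n)$ factors.
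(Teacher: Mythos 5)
Your proposal matches the paper's own proof of Theorem~\ref{thm:online} essentially step for step: one best-responding agent per round in a uniformly random order, a single-sample noisy proxy subgradient driving a multiplicative-weight dual update, privacy from the fact that each agent influences only his own round (so noise is paid once per agent over the $m$ coordinates, giving the $\sqrt{m}$ versus $m$ scales) combined with the billboard lemma, utility via sampling-without-replacement/martingale concentration for the random order plus Laplacian concentration (Lemmas~\ref{lem:4}, \ref{lem:5}) for the noise, and exact truthfulness and online implementability because each agent faces posted prices that do not depend on his own report. The only deviations are immaterial: Gaussian instead of the paper's Laplace noise in the $(\epsilon,\delta)$ variant, and your stated absolute loss bound should carry the price-cap factor $p_{\max}\approx n/b$ (so the objective loss is $O(\alpha n)$ rather than $\tilde{O}(\sqrt{mn}/\epsilon)$), which is exactly what the theorem requires.
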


We present in Table~\ref{tab:alg-compare} a brief comparison of the dual gradient descent approach by Hsu et al.~\cite{DBLP:conf/soda/Hsu0RW16}, the dual multiplicative weight update approach in Theorem~\ref{thm:algorithm}, and the dual online multiplicative weight update approach in Theorem~\ref{thm:online}.

\begin{table*}
	\renewcommand\arraystretch{1.5}
	\renewcommand{\thefootnote}{\fnsymbol{footnote}}
	\centering
	\begin{tabular}{|c|c|c|c|c|c|c|}
		\hline
		& \multirow{2}{*}{Min supply} & \multirow{2}{*}{\shortstack{Exact\\ feasibility}} & \multirow{2}{*}{$\epsilon$-JDP} & \multirow{2}{*}{Online} & \multirow{2}{*}{\shortstack{Exact\\ truthfulness}} & \multirow{2}{*}{\shortstack{Running\\ time (in $n$)}} \\
		& & & & & & \\
		\hline
		Dual GD~\cite{DBLP:conf/soda/Hsu0RW16} & $\tilde{O} \big( \frac{m}{\alpha \epsilon} \big)$\footnotemark[2] & No & No & No & No & $\tilde{O} (n^3)$ \\	
		\hline
		Dual MWU (\S\ref{sec:algorithm}) & $\tilde{O} \big( \frac{\sqrt{m}}{\alpha \epsilon} \big)$ & Yes & No & No & No & $\tilde{O} (n^3)$ \\
		\hline		
		Dual online MWU (\S\ref{sec:online}) & $\tilde{O} \big( \frac{\sqrt{mn}}{\alpha \epsilon} \big)$ & Yes & Yes\footnotemark[3] & Yes & Yes & $\tilde{O} (n)$ \\
		\hline
		\multicolumn{7}{p{16cm}}{\footnotesize \footnotemark[2]~ Hsu et al.~\cite{DBLP:conf/soda/Hsu0RW16} requires the total supply of all resources to be at least $\tilde{O} \big( \frac{m^2}{\alpha \epsilon} \big)$. We divide it by $m$ and interpret the result as the (average) supply per resource for a direct comparison with the supply requirements in this paper.} \\[-1.5ex]
		\multicolumn{7}{p{16cm}}{\footnotesize \footnotemark[3]~ To get $\epsilon$-JDP, we need a larger supply of at least $\tilde{O} \big(\frac{m\sqrt{n}}{\alpha \epsilon} \big)$ of each resource.}
	\end{tabular}
	
	\caption{A comparison of different approaches}
	\label{tab:alg-compare}
\end{table*}

\bigskip

\textbf{Other related work.}
McSherry and Talwar~\cite{DBLP:conf/focs/McSherryT07} propose the exponential mechanism as a generic method for designing differentially private algorithms for optimization problems for which the set of feasible outcomes do not depend on user data.
As one may expect, such a generic method is not computationally efficient in general.
Bassily et al.~\cite{DBLP:conf/focs/BassilyST14} introduce an efficient implementation of the exponential mechanism when the set of feasible outcomes further forms a compact subset in the Euclidean space and the objective is a convex function.
However, these techniques are not directly applicable to our problem as the set of feasible outcomes of packing problems crucially depends on user data.
Hsu et al.~\cite{DBLP:conf/icalp/HsuRRU14} systematically study what linear programs can be solved in a differentially private manner and what cannot.
However, the packing linear program provably cannot be solved differentially privately~\cite{DBLP:conf/stoc/HsuHRRW14}.

Since the introduction of joint differential privacy by Kearns et al.~\cite{DBLP:conf/innovations/KearnsPRU14}, it has found a wide range of applications, including equilibrium selection~\cite{DBLP:conf/sigecom/RogersR14}, max flow~\cite{DBLP:conf/sigecom/RogersRUW15}, mechanism design~\cite{DBLP:conf/innovations/KearnsPRU14}, privacy-preserving surveys~\cite{ghosh2014buying}, etc.
A common technical ingredient of these work is the billboard lemma introduced by Hsu et al.~\cite{DBLP:conf/stoc/HsuHRRW14}, which also serves as an important building block of the analysis in this paper.

The packing problem has been extensively studied in both the offline (e.g.,~\cite{plotkin1995fast,srinivasan1999improved}) and online settings~(e.g., \cite{buchbinder2009online}).
We note that a lot of these work use the primal dual technique.
Our work can be viewed as an adoption of these techniques in the privacy preserving context.

\section{Preliminaries}

\subsection{Packing problem and the (partial) dual}

Consider a packing problem with $n$ agents and $m$ resources.
Let $[\ell]$ denote the set $\{1, 2, \dots, \ell\}$ for any positive integer $\ell$.
Each agent $i \in [n]$ demands one of $\ell_i$ bundles of resources.
If we allocate a bundle $k \in [\ell_i]$ to agent $i$, his valuation will be $\pi_{ik} \in [0,1]$ and an $a_{ijk} \in [0,1]$ amount of resource $j$ will be consumed for every $j \in [m]$; if we do not allocate any bundle to agent $i$, his valuation will be $0$ and no resource will be consumed.
The parameters associated with an agent $i$ is the private data of the agent.
Let $U$ denote the data universe and let $D \in U^n$ denote a dataset of $n$ agents.
Each resource $j \in [m]$ has supply $b_j$.
The goal is then to choose a subset of the items that maximizes the total valuation subject to the supply constraints.
This can be formulated as the following packing linear program:
\begin{align*}
\textrm{maximize} ~ & \textstyle \sum_{i\in[n]} \sum_{k \in [\ell_i]} \pi_{ik} x_{ik} \\
\textrm{subject to} ~ & \textstyle \sum_{i\in[n]} \sum_{k \in [\ell_i]} a_{ijk} x_{ik} \le b_j & & \forall j \in [m] \\
& \textstyle \sum_{k \in [\ell_i]} x_{ik} \le 1 & & \forall i \in [n] \\
& x_{ik} \ge 0 & & \forall i \in [n]
\end{align*}
Let $X_i = \{ x_i \in [0, 1]^{\ell_i} : \sum_{k \in [\ell_i]} x_{ik} \le 1\}$ denote the set of feasible decisions associated with agent $i$ and $X = X_1 \times \dots \times X_n$ denote the feasible decisions of all agents if we ignore the supply constraints.
The partial Lagrangian of the above program is:
\begin{equation*}
 \textstyle
\max_{x \in X} \min_{p \in [0, \infty)^m} ~ \left( \sum_{i \in [n]} \sum_{k \in [\ell_i]} \pi_{ik} x_{ik}
- \sum_{j \in [m]} p_j \big( \sum_{i \in [n]} \sum_{k \in [\ell_i]} a_{ijk} x_{ik} - b_j \big)\right)
\end{equation*}

Let $L(x, p)$ denote the above partial Lagrangian objective, that is,
\begin{align*}
L(x, p) & \textstyle = \sum_{i \in [n]} \sum_{k \in [\ell_i]} \pi_{ik} x_{ik}
 - \sum_{j \in [m]} p_j \big( \sum_{i \in [n]} \sum_{k \in [\ell_i]} a_{ijk} x_{ik} - b_j \big) \\
& \textstyle = \sum_{j \in [m]} b_j p_j
 + \sum_{i \in [n]} \sum_{k \in [\ell_i]} x_{ik} \big( \pi_{ik} - \sum_{j \in [m]} a_{ijk} p_j \big)
\end{align*}

Let $D(p) = \max_{x \in X} L(x,p)$ denote the dual objective.
We shall interpret $p_j$ as the unit price of resource $j$ for any $j \in [m]$.
Given a set of prices $p$, an optimal solution of the optimization problem $\max_{x \in X} L(x,p)$ is:
\begin{equation}
x^*_{ik}(p)= \begin{cases} 1, & \mbox{if $\pi_{ik} - \sum_{j=1}^m a_{ijk} p_j \geq 0$ and $k = \argmax_{k' \in [\ell_i]} \pi_{ik'} - \sum_{j=1}^m a_{ijk'} p_j$;} \\
0, &  \mbox{otherwise.} \end{cases}
\end{equation}
Here, we break ties in lexicographical order in the maximization problem of the first case.

The following lemma follows by the Envelope theorem (e.g.,~\cite{milgrom2002envelope}).

\begin{lemma}
\label{lem:envelope}
Given any set of prices $p$, $\nabla_p L\big(x^*(p),p\big)$ is a sub-gradient of $D(p)$.
\end{lemma}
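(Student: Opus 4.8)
The plan is to invoke the Envelope theorem for the value function $D(p) = \max_{x \in X} L(x,p)$, where the feasible set $X = X_1 \times \dots \times X_n$ is a fixed compact convex polytope that does \emph{not} depend on the parameter $p$. First I would observe that for each fixed $x \in X$, the map $p \mapsto L(x,p)$ is affine in $p$: indeed, from the second displayed expression for $L(x,p)$, we have $L(x,p) = \sum_{j} b_j p_j + \sum_{i,k} x_{ik}(\pi_{ik} - \sum_j a_{ijk}p_j)$, which is linear in each $p_j$ plus a constant, hence differentiable in $p$ with $\nabla_p L(x,p) = \big(b_j - \sum_{i,k} a_{ijk} x_{ik}\big)_{j \in [m]}$. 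In particular $D$ is a pointwise maximum of affine functions of $p$, so it is convex, and $\nabla_p L(\cdot, p)$ is jointly continuous in $(x,p)$.

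Next I would apply the Envelope theorem (e.g.,~\cite{milgrom2002envelope}) in its standard form for maximization over a parameter-independent constraint set: if $x^*(p) \in \argmax_{x \in X} L(x,p)$, then $D$ is differentiable at $p$ whenever the maximizer is unique and its (sub)differential is generated by the partial derivatives $\nabla_p L(x^*(p),p)$ evaluated at the maximizers. Concretely, for convex $D$ written as a max of affine functions, the classical fact (Danskin's theorem) is that $\partial D(p) = \mathrm{conv}\{ \nabla_p L(x,p) : x \in \argmax_{x' \in X} L(x',p)\}$, so any single selection $\nabla_p L(x^*(p),p)$ lies in $\partial D(p)$ and is therefore a subgradient of $D$ at $p$. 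It remains only to check that $x^*(p)$ as defined in equation~(1) is indeed a maximizer of $L(\cdot,p)$ over $X$; this is immediate because $L(x,p)$ decomposes additively across agents as $\sum_j b_j p_j + \sum_i \big(\sum_{k} x_{ik}(\pi_{ik} - \sum_j a_{ijk}p_j)\big)$, and over the simplex-like set $X_i$ the inner sum is maximized by putting all mass on the coordinate $k$ with the largest reduced value $\pi_{ik} - \sum_j a_{ijk}p_j$ when that value is nonnegative, and zero otherwise — which is exactly~(1) (with the stated lexicographic tie-breaking).

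I expect the only subtlety — not really an obstacle — to be the handling of non-uniqueness of the maximizer $x^*(p)$: at prices $p$ where ties occur (some reduced value is exactly $0$, or two bundles of an agent are tied), $D$ may be non-differentiable, and one should be careful to claim only that $\nabla_p L(x^*(p),p)$ is \emph{a} subgradient rather than \emph{the} gradient. This is precisely why the lemma is stated in terms of sub-gradients, and the convex-analytic version of the Envelope theorem / Danskin's theorem cited above delivers exactly this conclusion without any further work. Hence the lemma follows.
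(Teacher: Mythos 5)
Your proposal is correct and follows essentially the same route as the paper, which simply asserts the lemma via the Envelope theorem citation~\cite{milgrom2002envelope} without further argument; your Danskin-style justification (noting $L(x,\cdot)$ is affine in $p$, so $D(q)\ge L(x^*(p),q)=D(p)+\langle\nabla_p L(x^*(p),p),q-p\rangle$ for all $q$) fills in exactly the details the paper leaves implicit, including the check that the closed form~(1) is indeed a maximizer and the careful handling of ties.
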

\subsection{Joint differential privacy}

Next, we present necessary preliminaries for differential privacy and joint differential privacy.
Two datasets $D, D^{'} \in U^n$ are $i$-neighbors if they differ only in their $i$-th entry, that is, $D_j=D^{'}_j$ for all $j \neq i$.
The notion of differential privacy by Dwork et al.~\cite{DBLP:conf/tcc/DworkMNS06} requires the output distributions to be similar for any neighboring datasets.

\begin{Definition}[Differential privacy~\cite{DBLP:conf/tcc/DworkMNS06}]
A mechanism $\mathcal{M}: {U}^n \to X$ is $(\epsilon,\delta)$-differentially private if for any $i \in [n]$, any $i$-neighbors $D, D'\in {U}^n$, and any subset $S \subseteq X$, we have:
\[
\textstyle
\Pr \big[\mathcal{M}(D)\in S \big] \leq \exp(\epsilon) \cdot \Pr\big[\mathcal{M}({D}') \in S\big]+\delta ~.
\]
\end{Definition}

The Laplace mechanism by Dwork et al.~\cite{DBLP:conf/tcc/DworkMNS06} computes numerical statistics of a dataset differentially privately by adding Laplacian distributed noise to the output.
We shall use the Laplace mechanism to maintain a sequence of differentially private dual prices.

\begin{Definition}[Laplace mechanism~\cite{DBLP:conf/tcc/DworkMNS06}]
Suppose $f:  {U}^n\rightarrow \mathbb{R}$ has sensitivity $\sigma$, i.e., $|f(D) - f(D')| \le \sigma$ for any neighboring $D$ and $D'$.
Given a database $D\in {U}^n $, the Laplace mechanism outputs $f(D)+Z$, where $Z \sim Lap(\sigma/\epsilon)$.
\end{Definition}

\begin{lemma}[\cite{DBLP:conf/tcc/DworkMNS06}]
\label{lem:laplace}
The Laplace mechanism is $(\epsilon,0)$-differentially private.
\end{lemma}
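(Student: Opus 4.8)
The plan is to establish the $(\epsilon,0)$ guarantee pointwise at the level of probability density functions and then integrate up to arbitrary events. First I would recall that $Z \sim \mathrm{Lap}(\sigma/\epsilon)$ has density $z \mapsto \tfrac{\epsilon}{2\sigma}\exp(-\epsilon|z|/\sigma)$ on $\mathbb{R}$, so for a fixed dataset $D$ the output $\mathcal{M}(D) = f(D) + Z$ is a real-valued random variable whose density with respect to Lebesgue measure is $p_D(t) = \tfrac{\epsilon}{2\sigma}\exp\!\big(-\tfrac{\epsilon}{\sigma}\,|t - f(D)|\big)$. The whole argument will hinge on this density being log-Lipschitz in $t$ with a constant calibrated to the sensitivity $\sigma$.

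Next, for any index $i$, any $i$-neighbors $D, D'$, and any point $t \in \mathbb{R}$, I would bound the ratio of densities:
\[
\frac{p_D(t)}{p_{D'}(t)} = \exp\!\Big(\tfrac{\epsilon}{\sigma}\big(|t - f(D')| - |t - f(D)|\big)\Big) \le \exp\!\Big(\tfrac{\epsilon}{\sigma}\,|f(D) - f(D')|\Big) \le \exp(\epsilon),
\]
where the first inequality is the reverse triangle inequality $|t - f(D')| - |t - f(D)| \le |f(D) - f(D')|$ and the second uses the sensitivity hypothesis $|f(D) - f(D')| \le \sigma$.

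Finally I would lift this uniform pointwise bound to an arbitrary measurable set $S \subseteq \mathbb{R}$ by integration:
\[
\Pr[\mathcal{M}(D) \in S] = \int_S p_D(t)\,dt \le \exp(\epsilon)\int_S p_{D'}(t)\,dt = \exp(\epsilon)\,\Pr[\mathcal{M}(D') \in S],
\]
which is precisely the $(\epsilon,\delta)$-differential privacy inequality with $\delta = 0$, and the argument is symmetric in $D$ and $D'$. There is no substantive technical obstacle here: the only points requiring care are getting the direction of the reverse triangle inequality right and noting that a bound on densities that holds at every point transfers verbatim to every event, which is immediate since $\Pr[\mathcal{M}(D) \in S]$ is just the integral of $p_D$ over $S$. (If one wished to be pedantic, one could also remark that the definition as stated maps into the abstract output space $X$, whereas the Laplace mechanism outputs a scalar; this is a harmless cosmetic mismatch, and the displayed chain of inequalities is the content of the lemma.)
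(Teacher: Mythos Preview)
Your argument is correct and is exactly the standard proof: bound the density ratio pointwise via the reverse triangle inequality calibrated to the sensitivity, then integrate over an arbitrary event. There is nothing to compare against, because the paper does not give its own proof of this lemma---it is stated with a citation to Dwork et al.\ and used as a black box.
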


One can combine differentially private subroutines to obtain algorithms for more complicated tasks; the privacy parameter will scale gracefully.
This is formalized as the composition theorem:

\begin{lemma}[Composition Theorem~\cite{DBLP:conf/focs/DworkRV10}]
\label{lem:composition}
Suppose $\mathcal{A}$ is a $T$-fold adaptive composition of $(\epsilon,\delta)$-differentially private mechanisms.
Then, $\mathcal{A}$ satisfies $(\epsilon', T\delta+\delta')$-differential privacy for
\[
\epsilon'=\epsilon\sqrt{2T\ln(1/\delta')}+T\epsilon(e^{\epsilon}-1) ~.
\]
\end{lemma}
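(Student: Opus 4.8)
The plan is to analyze the \emph{privacy loss random variable} of the composed mechanism and prove a concentration bound for its behaviour over the $T$ rounds. Fix an index $i$ and an arbitrary ordered pair of $i$-neighbors $D, D'$; it suffices to bound $\Pr[\mathcal{A}(D)\in S]$ in terms of $\Pr[\mathcal{A}(D')\in S]$, since the other direction follows by symmetry of the neighbor relation. Let $\mathcal{M}_1,\dots,\mathcal{M}_T$ be the component mechanisms, where $\mathcal{M}_t$ may depend on the earlier outputs $o_{<t} = (o_1,\dots,o_{t-1})$, and for a transcript $o=(o_1,\dots,o_T)$ write $L(o) = \ln\frac{\Pr[\mathcal{A}(D)=o]}{\Pr[\mathcal{A}(D')=o]}$. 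The first step is the chain rule $L(o) = \sum_{t=1}^T L_t(o_t\mid o_{<t})$ with $L_t(o_t\mid o_{<t}) = \ln\frac{\Pr[\mathcal{M}_t(D;o_{<t})=o_t]}{\Pr[\mathcal{M}_t(D';o_{<t})=o_t]}$, and the observation that when $o$ is drawn from $\mathcal{A}(D)$ and we use the filtration generated by $o_1,o_2,\dots$, the partial sums of $L_t - \mathbb{E}[L_t\mid o_{<t}]$ form a martingale whose increments are controlled uniformly, because each $\mathcal{M}_t$ is $(\epsilon,\delta)$-DP \emph{for every fixed history} $o_{<t}$.

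Second, I would treat the pure case $\delta = 0$. Two one-line facts suffice, uniformly over the history: $|L_t(o_t\mid o_{<t})| \le \epsilon$ almost surely (the ratio of the two conditional probabilities lies in $[e^{-\epsilon},e^{\epsilon}]$), and $\mathbb{E}_{o_t\sim\mathcal{M}_t(D;o_{<t})}[L_t\mid o_{<t}] \le \epsilon(e^{\epsilon}-1)$. The latter is the bound $D_{\mathrm{KL}}(p\|q)\le \epsilon(e^{\epsilon}-1)$ for $\epsilon$-pointwise-close distributions $p,q$, which I would prove via $D_{\mathrm{KL}}(p\|q) \le D_{\mathrm{KL}}(p\|q)+D_{\mathrm{KL}}(q\|p) = \sum_o q(o)(e^{\ell(o)}-1)\ell(o)$ where $\ell(o)=\ln(p(o)/q(o))\in[-\epsilon,\epsilon]$, together with the one-variable inequality $(e^{\ell}-1)\ell \le (e^{\epsilon}-1)\epsilon$ on that interval. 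Given these, Hoeffding's lemma bounds the conditional moment generating function: since the range of $L_t$ has length $2\epsilon$, $\mathbb{E}[e^{s(L_t-\mu_t)}\mid o_{<t}] \le e^{s^2\epsilon^2/2}$. Multiplying through the filtration by the tower rule and applying Markov's inequality to $e^{sL}$, then optimizing $s$, yields $\Pr[L > T\epsilon(e^{\epsilon}-1)+z] \le \exp(-z^2/(2T\epsilon^2))$; setting $z = \epsilon\sqrt{2T\ln(1/\delta')}$ makes the right-hand side equal to $\delta'$. Finally, ``privacy loss at most $\epsilon'$ except with probability $\delta'$'' gives $(\epsilon',\delta')$-DP by the standard splitting argument $\Pr[\mathcal{A}(D)\in S] \le \Pr[\mathcal{A}(D)\in S,\ L\le\epsilon'] + \delta' \le e^{\epsilon'}\Pr[\mathcal{A}(D')\in S] + \delta'$, with $\epsilon' = T\epsilon(e^{\epsilon}-1) + \epsilon\sqrt{2T\ln(1/\delta')}$.

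Third, I would upgrade to general $\delta > 0$ using the reduction that an $(\epsilon,\delta)$-DP mechanism is, on any fixed neighboring input pair, within total variation distance $\delta$ of a mechanism whose output distribution is $\epsilon$-pointwise-close to that of the neighbor (equivalently, it has a ``dense core'' of mass at least $1-\delta$ on which the privacy loss is at most $\epsilon$). Applying this in each round and taking a union bound, with probability at least $1-T\delta$ over the transcript every round lands in its core; conditioned on this event the round-wise losses obey the same $|L_t|\le\epsilon$ and $\mathbb{E}[L_t\mid o_{<t}]\le\epsilon(e^{\epsilon}-1)$ bounds, so the concentration argument above applies verbatim and contributes a further $\delta'$. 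Hence $\mathcal{A}$ is $(\epsilon', T\delta+\delta')$-DP with the stated $\epsilon'$.

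I expect the main obstacle to be making the two reductions fully rigorous: (a) the adaptive chain rule and martingale setup, where one must check that the conditional MGF bound holds for \emph{every} realization of $o_{<t}$ (this is exactly where the uniform $(\epsilon,\delta)$-DP of each $\mathcal{M}_t$ is used, and it is what lets the tower rule go through); and (b) the dense-core reduction for $\delta > 0$ together with the bookkeeping that failure probabilities add up to precisely $T\delta + \delta'$ and no more. I would also be careful to invoke Hoeffding's lemma on the moment generating function directly rather than passing through a crude Azuma bound on $\pm 2\epsilon$ increments, since the sub-Gaussian parameter $\epsilon^2$ (not $(2\epsilon)^2$) is what produces the factor $\sqrt{2}$ rather than $2\sqrt{2}$ in front of $\epsilon\sqrt{T\ln(1/\delta')}$.
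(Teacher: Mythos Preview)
The paper does not prove this lemma; it is stated as a known result with a citation to Dwork, Rothblum, and Vadhan and is used as a black box in the privacy analysis. Your sketch is a correct outline of the standard proof from that reference: decomposing the privacy loss along the adaptive transcript, bounding each conditional expectation by $\epsilon(e^{\epsilon}-1)$ via the KL inequality for $\epsilon$-close distributions, applying Hoeffding's lemma with sub-Gaussian parameter $\epsilon^2$ to get the $\epsilon\sqrt{2T\ln(1/\delta')}$ deviation term, and handling general $\delta$ by the dense-core (simulation) argument plus a union bound contributing $T\delta$. There is nothing to compare against in the present paper beyond noting that your approach matches the cited source.
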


As mentioned in the introduction, for many optimization problems including the packing problem considered in this paper, we need to consider a relaxed notion called joint differential privacy proposed by Kearns et al.~\cite{DBLP:conf/innovations/KearnsPRU14}.
Informally, joint differential privacy is defined w.r.t.\ problems whose outputs are comprised of $n$ components, one for each of the $n$ agents.
It relaxes the requirement of differential privacy so that for any $i$-neighboring datasets, only the output components of agents other than $i$ need to be similarly distributed.

\begin{Definition}[Joint differential privacy~\cite{DBLP:conf/innovations/KearnsPRU14}]
A mechanism $\mathcal{M}: {U}^n \to X = X_1 \times \dots \times X_n$ is $(\epsilon,\delta)$-jointly differentially private if for any $i \in [n]$, any $i$-neighbors $D, D'\in {U}^n$, and any subset $S_{-i} \subseteq X_{-i}$, we have:
\[
\textstyle
\Pr \big[\mathcal{M}_{-i}(D)\in S_{-i} \big] \leq \exp(\epsilon) \cdot \Pr\big[\mathcal{M}_{-i}({D}') \in S_{-i} \big]+\delta ~.
\]
\end{Definition}

The most important connection between joint differential privacy and differential privacy is the following billboard lemma established by Hsu et al.~\cite{DBLP:conf/stoc/HsuHRRW14}.
This lemma has been the cornerstone of many recent work on joint differential privacy and plays a crucial role in this paper as well.

\begin{lemma}[Billboard Lemma~\cite{DBLP:conf/stoc/HsuHRRW14}]
\label{prilem3}
Suppose $\mathcal{M}: {U}^n \rightarrow {Y}^T$ is $(\epsilon,\delta)$-differentially
private. Consider any set of functions $f_i: {U} \times {Y}^T \rightarrow {X}'$. Then the mechanism ${M}^{'}$
 that outputs to
each agent $i: f_i(D_i,{M}(D))$ is $(\epsilon,\delta)$-jointly differentially private.
\end{lemma}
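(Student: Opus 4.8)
The plan is to derive joint differential privacy of $M'$ directly from plain differential privacy of $\mathcal{M}$ by invoking the closure of differential privacy under data-independent post-processing. Fix an arbitrary agent $i \in [n]$ and arbitrary $i$-neighbors $D, D' \in U^n$; by definition $D_j = D'_j$ for every $j \neq i$, i.e.\ $D_{-i} = D'_{-i}$. What must be shown is that for every $S_{-i} \subseteq X_{-i}$ we have $\Pr[M'_{-i}(D) \in S_{-i}] \le e^{\epsilon}\Pr[M'_{-i}(D') \in S_{-i}] + \delta$.

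The first step is to exhibit $M'_{-i}$ as a post-processing of the billboard output $\mathcal{M}(\cdot)$. Since the coordinates $D_j$ with $j \neq i$ are just fixed elements of $U$ that are identical in $D$ and $D'$, define the map $g : Y^T \to X_{-i}$ by $g(y) = \big( f_j(D_j, y) \big)_{j \neq i}$ (if the $f_j$'s use internal randomness, let $g$ carry its own independent randomness). Because $M'$ hands agent $j$ exactly $f_j(D_j, \mathcal{M}(D))$, we get the identities $M'_{-i}(D) = g(\mathcal{M}(D))$ and $M'_{-i}(D') = g(\mathcal{M}(D'))$, and the point is that the \emph{same} $g$ appears in both: $g$ only reads the shared coordinates $D_{-i} = D'_{-i}$ and never touches the $i$-th entry.

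The second step is the post-processing argument. For a deterministic $g$ and any $S_{-i}$, $\Pr[g(\mathcal{M}(D)) \in S_{-i}] = \Pr[\mathcal{M}(D) \in g^{-1}(S_{-i})] \le e^{\epsilon}\Pr[\mathcal{M}(D') \in g^{-1}(S_{-i})] + \delta = e^{\epsilon}\Pr[g(\mathcal{M}(D')) \in S_{-i}] + \delta$, using that $\mathcal{M}$ is $(\epsilon,\delta)$-differentially private. For a randomized $g$, condition on its internal randomness $r$ to get a deterministic $g_r$, apply the deterministic bound pointwise, and average over $r$, which is legitimate because $r$ is independent of $\mathcal{M}$'s randomness. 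Substituting the identities from the first step yields $\Pr[M'_{-i}(D) \in S_{-i}] \le e^{\epsilon}\Pr[M'_{-i}(D') \in S_{-i}] + \delta$, which is precisely the definition of $(\epsilon,\delta)$-joint differential privacy, and since $i$, $D$, $D'$, $S_{-i}$ were arbitrary we are done.

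The only genuinely delicate point — and essentially the entire content of the lemma — is verifying that $g$ is a bona fide data-independent post-processing map: it must depend on the dataset only through the block $D_{-i}$ that $D$ and $D'$ share, and not at all on $D_i$. This is exactly why the mechanism is required to output to agent $i$ only its own $f_i(D_i, \cdot)$ evaluated on its own data and the public billboard, and it is also why joint differential privacy (suppressing only agent $i$'s component) is the correct target rather than full differential privacy (the $i$-th component $f_i(D_i, \mathcal{M}(D))$ genuinely does depend on $D_i$ and cannot be hidden). I would write out the randomized-$g$ case explicitly since the $f_i$'s are permitted to be randomized, but this is a routine averaging argument and poses no real obstacle.
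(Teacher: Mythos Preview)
Your proof is correct and is the standard argument for the Billboard Lemma: fix $i$-neighbors, observe that the outputs to agents other than $i$ are a post-processing of the billboard $\mathcal{M}(D)$ by a map that only reads $D_{-i}=D'_{-i}$, and invoke closure of $(\epsilon,\delta)$-differential privacy under post-processing. Note, however, that the paper does not supply its own proof of this statement; it is quoted as a known result from Hsu et al., so there is no paper proof to compare against beyond confirming that your argument is the canonical one.
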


\section{Private dual multiplicative weight algorithm}
\label{sec:algorithm}

Our algorithm is a noisy version of the multiplicative weight update algorithm running in the dual space.
First, recall the partial Lagrangian objective of the packing problem:
\begin{algorithm*}[!htb]
\caption{Private Dual Multiplicative Update Method (Pri-DMW)}
\label{alg:pridmw}
\begin{algorithmic}[1]
\STATE
\textbf{require:}
$b \ge \tilde{O}\big(\tfrac{\sqrt{m}}{\alpha \epsilon}\big)$; we assume $n \ge b$ as the problem is trivial otherwise.
\STATE
\textbf{input:}
objective $\pi$; demands $a_{ijk}$ for all $i\in[n],j\in[m],k\in[\ell_i]$; supply  $b$; approximation parameter $\alpha$; privacy parameter $\epsilon$ and $\delta$.
\STATE
\textbf{parameters:}
initial dual $p^{(1)}=\frac{p_{\max}}{m+1}\cdot \textbf{1}$; number of rounds $T=\frac{\epsilon^2 n^2}{m}$;
step size $\eta=\frac{\ln(m+1)}{\alpha b T}$;
range of dual $p_{\max}=\frac{4 n}{b}$;
privacy parameter in each step $\epsilon'=\frac{\epsilon}{\sqrt{8Tm\ln(2/\delta)}}$;
width $\nabla_{\max} = n + \tfrac{\ln(T)}{\epsilon^{\prime}}$.
%
%
\FORALL{$t = 1, \dots, T$}
\STATE
Let $x^{(t)}_i = x_i^*(p^{(t)})$, i.e., for all $1 \le i \le n$ and $1\le k \le l_i$, let :

$$
x^{(t)}_{ik}= \begin{cases} 1, & \mbox{if $\pi^{(t)}_{ik} - \sum_{j=1}^m a_{ijk} p^{(t)}_j \geq 0$ and   }
 \mbox{ $k = \argmax_{k' \in [\ell_i]} \pi^{(t)}_{ik'} - \sum_{j=1}^m a_{ijk'} p^{(t)}_j$;} \\
0, &  \mbox{otherwise.} \end{cases}
$$
\STATE Sample Laplace noise $\nu_j^{(t)} \sim \textrm{Lap}(\tfrac{1}{\epsilon^{\prime}})$ for $1 \le j \le m$.
\STATE Compute noisy sub-gradient $\nabla_j{\hat{D}}(p^{(t)}) = b - \sum_{i\in[n]}\sum_{k\in[\ell_i]}a_{ijk}x^{(t)}_{ik}+\nu_j^{(t)}$ for $1 \le j \le m$;\\
\STATE Compute truncated noisy sub-gradient $\nabla_j{\bar{D}}(p^{(t)})$ as follows :
$$
\nabla_j{\bar{D}}(p^{(t)}) = \begin{cases}
\nabla_j{\hat{D}}(p^{(t)}) & \mbox{if $- \nabla_{\max} \le \nabla_j{\hat{D}}(p^{(t)}) \le \nabla_{\max}$ ~;} \\
- \nabla_{\max} & \mbox{if $\nabla_j{\hat{D}}(p^{(t)}) < -\nabla_{\max}$~;} \\
\nabla_{\max} & \mbox{if $\nabla_j{\hat{D}}(p^{(t)}) > \nabla_{\max}$~.}
\end{cases}
$$
As for the dummy $(m+1)$-th dimension, let $\nabla_{m+1} {\bar{D}}(p^{(t)}) = 0$.
\STATE Compute $p^{(t+1)}$ such that $p_j^{(t+1)}=\frac{1}{\phi^{(t)}} ~ p_j^{(t)} \big( 1-\eta\nabla_j\bar{D}(p^{(t)}) \big)$ for $1 \le j \le m+1$, where $\phi^{(t)} = \sum_{j = 1}^{m+1} p_j^{(t)} \big( 1-\eta\nabla_j\bar{D}(p^{(t)}) \big)/p_{\max}$ is a normalization term.
\ENDFOR
\STATE \textbf{output:} $\bar{x}_{ik} = \tfrac{1}{T} \sum_{t = 1}^T x^{(t)}_{ik}$ for every $i\in [n],k\in[\ell_i]$, where agent $i$ observes $\bar{x}_i$.
\end{algorithmic}
\end{algorithm*}
\begin{align*}
L(x, p) & \textstyle = \sum_{i \in [n]} \sum_{k \in [\ell_i]} \pi_{ik} x_{ik}
  - \sum_{j \in [m]} p_j \big( \sum_{i \in [n]} \sum_{k \in [\ell_i]} a_{ijk} x_{ik} - b_j \big) \\
&\textstyle= \sum_{i \in [n]} \sum_{k \in [\ell_i]} x_{ik} \big( \pi_{ik} - \sum_{j \in [m]} a_{ijk} p_j \big)
+ \sum_{j \in [m]} b_j p_j
\end{align*}

For simplicity, we assume $b_j = b$ for all $j \in [m]$.
It is straightforward to extend our results to general values of $b_j$'s.
Then, it is without loss to assume that $n \ge b$ as otherwise the optimal solution is trivial with $x_{ik^*} = 1$ where $k^*=\argmax_{k\in[\ell_i]} \pi_{ik}$, for all $i$ .

For convenience of discussion, we add a dummy constraint $\langle 0, x \rangle \le 0$ as the $(m+1)$-th constraint.
As a result, there is a new dual variable $p_{m+1}$ corresponding to the new constraint and, thus, $p$ becomes a $m+1$ dimension vector.
We will restrict $p$ such that its $\ell_1$ norm equals some appropriately chosen $p_{\max}$.

The multiplicative weight update algorithm finds a set of dual prices $p$ that approximately minimizes the dual objective $D(p) = \max_{x \in X} L(x, p)$.
In the process, it also finds an approximately optimal primal solution.
Concretely, it starts with an initial $p^{(1)} = \tfrac{p_{\max}}{m+1} \cdot \mathbf{1}$, where $\mathbf{1}$ is the all-$1$ vector.
In each round $t$, it first computes a sub-gradient of the dual objective $\nabla D(p^{(t)})$ using the envelope theorem, which boils down to computing the best response of the agents to the current dual prices.
Then, it computes $p^{(t+1)}$ by multiplying each entry of $p^{(t)}$ by $1-\eta \nabla_j D(p^{(t)})$ for some appropriately chosen step size $\eta$, and normalizing it to have $\ell_1$ norm $p_{\max}$.
(This is equivalent to a projection back to the simplex $\|p\|_1 = p_{\max}$ with respect to the Kullback-Leibler divergence.)
In order to get joint differential privacy, we use a noisy version of the sub-gradient in our algorithm and show that the error introduced by the Laplacian noise can be bounded.
The algorithm is presented as Algorithm \ref{alg:pridmw}.

\subsection{Proof of Theorem~\ref{thm:algorithm} (privacy)}


The privacy part follows from the next Lemma \ref{lem:privacy} and the Billboard Lemma (Lemma \ref{prilem3}).

\begin{lemma}
\label{lem:privacy}
The sequence of duals $p^{(1)}, \dots ,p^{(T)}$ given by Pri-DMW are $(\epsilon,\delta)$-differentially private.
\end{lemma}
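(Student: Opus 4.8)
The plan is to present the entire execution of Pri-DMW as an adaptive composition of $Tm$ instances of the Laplace mechanism, followed by post-processing. First I would note that $p^{(1)}=\tfrac{p_{\max}}{m+1}\mathbf 1$ is data-independent, and that for each $t$ the map carrying $p^{(t)}$ together with the noisy sub-gradient vector $\big(\nabla_j\hat D(p^{(t)})\big)_{j\in[m]}$ to $p^{(t+1)}$ is deterministic: it clips each coordinate to $[-\nabla_{\max},\nabla_{\max}]$, sets the $(m{+}1)$-th coordinate to $0$, multiplies $p^{(t)}$ coordinate-wise by $1-\eta\nabla_j\bar D(p^{(t)})$, and renormalizes to $\ell_1$ norm $p_{\max}$. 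By induction the whole sequence $p^{(1)},\dots,p^{(T)}$ is a deterministic function of the collection of noisy sub-gradients $\{\nabla_j\hat D(p^{(t)})\}_{t\in[T],\,j\in[m]}$, so by the post-processing property of differential privacy it suffices to show that this collection, released adaptively, is $(\epsilon,\delta)$-differentially private.

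Next I would bound the per-coordinate sensitivity. Fix a round $t$ and condition on the noisy sub-gradients of rounds $1,\dots,t-1$; then $p^{(t)}$, and hence every best response $x^{(t)}_i=x_i^*(p^{(t)})$, is fixed. For any agent $i$ and coordinate $j$, since $x^{(t)}_i\in X_i$ is a single-bundle indicator with $\sum_k x^{(t)}_{ik}\le 1$ and every $a_{ijk}\in[0,1]$, we have $\sum_{k\in[\ell_i]}a_{ijk}x^{(t)}_{ik}\in[0,1]$. Hence changing one agent's data changes $b-\sum_i\sum_k a_{ijk}x^{(t)}_{ik}$ by at most $1$, so the quantity perturbed at the pair $(t,j)$ has sensitivity at most $1$; since the added noise is $\mathrm{Lap}(1/\epsilon')$, Lemma~\ref{lem:laplace} shows that releasing $\nabla_j\hat D(p^{(t)})$ is $(\epsilon',0)$-differentially private for every fixing of the previous outputs.

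Finally I would apply the composition theorem to the full pool of all $Tm$ such releases at once (the $(m{+}1)$-th coordinate is deterministically $0$ and is free). By Lemma~\ref{lem:composition}, the $Tm$-fold adaptive composition of $(\epsilon',0)$-mechanisms with $\delta'=\delta/2$ is $(\epsilon'',\delta/2)$-differentially private, where
\[
\epsilon''=\epsilon'\sqrt{2Tm\ln(2/\delta)}+Tm\,\epsilon'(e^{\epsilon'}-1).
\]
Plugging in $\epsilon'=\epsilon/\sqrt{8Tm\ln(2/\delta)}$, the first term equals $\epsilon/2$, and since $\epsilon'$ is small the second term is at most $Tm(\epsilon')^2\cdot O(1)=O\big(\epsilon^2/\ln(1/\delta)\big)\le \epsilon/2$ in the relevant regime; thus $\epsilon''\le\epsilon$, the collection of noisy sub-gradients is $(\epsilon,\delta)$-differentially private, and so is $p^{(1)},\dots,p^{(T)}$ by post-processing.

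I do not expect a deep difficulty here, but the accounting must be done with care. The two points to get right are: (i) the sensitivity at each $(t,j)$ is $1$, not $m$ — this hinges on $x^{(t)}$ being a single-bundle indicator — and (ii) advanced composition must be applied to the full pool of $Tm$ Laplace releases simultaneously, rather than basic-composing the $m$ coordinates within a round and then advanced-composing over rounds, which would lose a $\sqrt m$ factor and violate the supply bound. One must also phrase everything as an adaptive composition, since $p^{(t)}$ depends on noise drawn in earlier rounds; this is precisely what Lemma~\ref{lem:composition} handles, so conditioning on the earlier transcript and bounding the conditional sensitivity by $1$ is all that is needed.
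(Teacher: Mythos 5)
Your proposal is correct and follows essentially the same route as the paper's proof: reduce to the privacy of the adaptively released noisy sub-gradients, note each is a sensitivity-$1$ Laplace release at scale $1/\epsilon'$, and apply the advanced composition theorem to the pool of $Tm$ mechanisms. You merely spell out the sensitivity bound and the composition arithmetic that the paper leaves implicit.
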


\begin{proof}
Note that the sequence of dual price vectors are determined by a sequence of noisy sub-gradients $\nabla \hat{D}(p^{(t)})$'s.
Hence, it suffices to show the sequence of noisy sub-gradients is $(\epsilon,\delta)$-differentially private.

Next, observe that the noisy sub-gradient $\nabla_j \hat{D}(p^{(t)})$ of each step $t$ is computed by adding Laplace noise of scale $\tfrac{1}{\epsilon'}$ to the sub-gradient $\nabla_j D(p^{(t)}) = b - \sum_{i \in[n]}\sum_{k\in [\ell_i]} a_{ijk} x_{ik}^{(t)}$.
By our assumption, the sub-gradient has sensitivity $1$.
Hence, given $p^{(t)}$, the computation of the noisy sub-gradient $\nabla_j \hat{D}(p^{(t)})$ satisfies $(\epsilon^{\prime}, 0)$-differential privacy (Lemma \ref{lem:laplace}).
Further, $p^{(t)}$ is determined by the noisy sub-gradients $\nabla_j \hat{D}(p^{(1)}), \dots, \nabla_j \hat{D}(p^{(t-1)})$ in previous rounds. Hence, the sequence of noisy sub-gradients are computed via an adaptive composition of $Tm$ Laplacian mechanisms each of which is $(\epsilon^{\prime}, 0)$-differentially private.
The lemma then follows by the composition theorem (Lemma~\ref{lem:composition}).
\end{proof}

\subsection{Proof of Theorem~\ref{thm:algorithm} (approximation)}

In this subsection, we will show that our algorithm violates each constraint by at most $\alpha b$ and is approximately optimal up to an $\alpha n$ additive factor.
Then, to get exact feasibility as in Theorem~\ref{thm:algorithm}, we simply run Pri-DMW with $(1 - \alpha) b$ as the supply per constraint, noting that doing so decreases the optimal objective by at most an $1 - \alpha$ multiplicative factor.

We first introduce some useful facts and technical lemmas.

\begin{lemma}
\label{lem:1}
For any $p$, we have that:
\[
L(x^{(t)},p^{(t)})-L(x^{(t)},p) = \langle p^{(t)} - p, \nabla D(p^{(t)}) \rangle
\]
\end{lemma}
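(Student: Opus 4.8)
The plan is to exploit the fact that, for a \emph{fixed} allocation $x$, the partial Lagrangian $L(x,\cdot)$ is an affine function of the price vector $p$. Indeed, from the second displayed form of $L$,
\[
L(x,p) \;=\; \sum_{j} p_j\Big(b_j - \sum_{i,k} a_{ijk} x_{ik}\Big) \;+\; \sum_{i,k} \pi_{ik} x_{ik},
\]
so $\nabla_p L(x,p)$ is the constant vector (independent of $p$) whose $j$-th coordinate equals $b_j - \sum_{i,k} a_{ijk} x_{ik}$; in particular the dummy $(m+1)$-th coordinate is $0$, matching the convention used in Pri-DMW.

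First I would observe that, since $x^{(t)} = x^*(p^{(t)})$ by construction, Lemma~\ref{lem:envelope} tells us that $\nabla_p L\big(x^{(t)},p^{(t)}\big)$ is exactly the sub-gradient of $D$ at $p^{(t)}$ that the algorithm uses, i.e.\ $\nabla D(p^{(t)}) = \nabla_p L\big(x^{(t)},p^{(t)}\big)$, which by the previous paragraph is the constant vector above evaluated at $x=x^{(t)}$. Then, because an affine function coincides everywhere with its first-order expansion about any point, for every $p$ we get
\[
L(x^{(t)},p) \;=\; L(x^{(t)},p^{(t)}) + \big\langle p - p^{(t)},\, \nabla_p L(x^{(t)},p^{(t)}) \big\rangle \;=\; L(x^{(t)},p^{(t)}) + \big\langle p - p^{(t)},\, \nabla D(p^{(t)}) \big\rangle.
\]
Rearranging yields $L(x^{(t)},p^{(t)}) - L(x^{(t)},p) = \langle p^{(t)} - p,\, \nabla D(p^{(t)})\rangle$, as claimed.

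There is no genuine obstacle here; this is essentially a one-line consequence of affineness. The only points that merit care are the bookkeeping of the dummy $(m+1)$-th coordinate, and making explicit that ``$\nabla D(p^{(t)})$'' in the statement denotes the particular sub-gradient $\nabla_p L\big(x^*(p^{(t)}),p^{(t)}\big)$ picked out by the lexicographic tie-breaking rule — for that choice the displayed relation is an exact equality rather than merely the sub-gradient inequality.
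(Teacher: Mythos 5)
Your proof is correct and follows essentially the same route as the paper: both use that $L(x^{(t)},\cdot)$ is affine in $p$, so the difference equals $\langle p^{(t)}-p,\nabla_p L(x^{(t)},p^{(t)})\rangle$, and then invoke the envelope theorem (Lemma~\ref{lem:envelope}) to identify $\nabla_p L(x^{(t)},p^{(t)})$ with the sub-gradient $\nabla D(p^{(t)})$. Your added remarks about the dummy coordinate and the specific choice of sub-gradient are fine but not needed beyond what the paper does.
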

\begin{proof}
By the definition of $L(x, p)$, we have:
\begin{equation*}
L(x^{(t)},p^{(t)})-L(x^{(t)},p)
= \langle p^{(t)} - p, \nabla_p L(x^{(t)},p^{(t)}) \rangle = \langle p^{(t)} - p, \nabla D(p^{(t)}) \rangle ~,
\end{equation*}
where the second equality is due to the envelope theorem (Lemma \ref{lem:envelope}).
\end{proof}

\begin{lemma}
\label{lem:2}
If $b\geq \frac{20\ln(T)\sqrt{m \ln(m+1)\ln(6/\beta) \ln(2/\delta)}}{\alpha \epsilon}$, then we have $\eta \nabla_{\max} < 1$.
\end{lemma}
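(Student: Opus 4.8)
The plan is purely computational: substitute the parameter settings from Algorithm~\ref{alg:pridmw} and bound the result under the stated hypothesis on $b$. Split $\eta\nabla_{\max} = \eta n + \eta\ln(T)/\epsilon'$ and show that each of the two summands is at most $\tfrac12$.

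For the first summand, use $T = \epsilon^2 n^2/m$ and $\eta = \ln(m+1)/(\alpha b T)$ to get $\eta n = m\ln(m+1)/(\alpha b \epsilon^2 n)$. By the standing assumption $n \ge b$ this is at most $m\ln(m+1)/(\alpha b^2 \epsilon^2)$, and plugging in the square of the hypothesised bound, namely $b^2 \ge 400\,(\ln T)^2 m\ln(m+1)\ln(6/\beta)\ln(2/\delta)/(\alpha^2\epsilon^2)$, gives $\eta n \le \alpha/\big(400\,(\ln T)^2\ln(6/\beta)\ln(2/\delta)\big)$, which is far below $\tfrac12$ (recall $\alpha \le 1$ and the logarithmic factors are bounded away from $0$, while the absolute constant $400$ is generous). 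For the second summand, first substitute $\epsilon' = \epsilon/\sqrt{8Tm\ln(2/\delta)}$, so that $\eta\ln(T)/\epsilon' = \ln(m+1)\ln(T)\sqrt{8m\ln(2/\delta)}/(\alpha b \epsilon \sqrt T)$, and then use $\sqrt T = \epsilon n/\sqrt m$ to rewrite this as $\sqrt8\,m\ln(m+1)\ln(T)\sqrt{\ln(2/\delta)}/(\alpha b \epsilon^2 n)$. Applying $n \ge b$ and the lower bound on $b^2$ exactly as before yields $\eta\ln(T)/\epsilon' \le \sqrt8\,\alpha/\big(400\,\ln(T)\ln(6/\beta)\sqrt{\ln(2/\delta)}\big) < \tfrac12$. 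Adding the two bounds gives $\eta\nabla_{\max} < 1$.

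There is essentially no obstacle beyond bookkeeping. The points to check are that the factor $m$ sitting under the square root in the hypothesis on $b$ cancels the $m$ produced by $T = \epsilon^2 n^2/m$ in the first term and the $\sqrt m$ from $\epsilon'$ together with the $\sqrt m$ from $1/\sqrt T$ in the second term; that the $\ln(m+1)$ in the hypothesis cancels the numerator of $\eta$; and that the remaining logarithmic factors $\ln T$, $\ln(6/\beta)$, $\ln(2/\delta)$ merely supply slack, so the numerical constant $20$ (hence $400$ after squaring) comfortably dominates the stray constants $\sqrt8$ and the factor $2$. (This inequality is exactly what is needed for the multiplicative weight update to make sense: since $|\nabla_j\bar D(p^{(t)})| \le \nabla_{\max}$, it guarantees $1-\eta\nabla_j\bar D(p^{(t)}) > 0$ for every $j$, so the updated weights stay positive and the normalization $\phi^{(t)}$ is well defined.)
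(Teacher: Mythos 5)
Your proposal is correct and follows essentially the same route as the paper's proof: plug in the values of $\eta$, $\nabla_{\max}$, $T$, and $\epsilon'$, use the standing assumption $n \ge b$, and compare against the hypothesized lower bound on $b$ (your version, which bounds $\eta n$ and $\eta\ln(T)/\epsilon'$ separately by $\tfrac12$, is in fact a cleaner bookkeeping of the same calculation). The only implicit assumption you share with the paper is that $\alpha \le 1$ and the logarithmic factors ($\ln T$, $\ln(6/\beta)$, $\ln(2/\delta)$) are bounded below by a constant, which is standard in this parameter regime.
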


\begin{proof}
Plug in the value of $\eta, \nabla_{\max},T$.
We have:
\begin{align*}
\begin{split}
\eta \nabla_{\max}
\textstyle
&=\frac{\ln(m+1)}{\alpha bT}(n+\frac{\ln(T)\sqrt{8Tm\ln(2/\beta)}}{\epsilon})\\
&\textstyle
=\frac{\ln(m+1)\sqrt{8m}\ln(T)\ln(2/\delta)}{\alpha \epsilon^2 n b} \leq \frac{\ln(m+1)\sqrt{8m}\ln(T)\ln(2/\delta)}{\alpha \epsilon^2 b^2} ~.
\end{split}
\end{align*}
So if $b \geq \frac{20\ln(T)\sqrt{m \ln(m+1)\ln(6/\beta) \ln(2/\delta)}}{\alpha \epsilon}$, we have $\eta \nabla_{\max} < 1$.
\end{proof}

\begin{lemma}
\label{lem:3}
For any $p$ with $\| p \|_1 = p_{\max}$, and $\eta \nabla_{\max} < 1$, we have:
\begin{equation*}
 D_{KL}(p \| p^{(t+1)})-D_{KL}(p \| p^{(t)})
 \leq -\eta \big\langle p^{(t)}-p,\nabla \bar{D}(p^{(t)}) \big\rangle
 + \eta^2 p_{\max} \nabla_{\max}^2 ~.
\end{equation*}
\end{lemma}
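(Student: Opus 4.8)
The plan is to prove the standard multiplicative-weight potential drop for the KL divergence, using $D_{KL}(p \| q) = \sum_{j=1}^{m+1} p_j \ln \tfrac{p_j}{q_j}$ on the simplex scaled to have $\ell_1$-norm $p_{\max}$. Writing out the difference, I would first use the update rule $p_j^{(t+1)} = \tfrac{1}{\phi^{(t)}} p_j^{(t)}\bigl(1 - \eta \nabla_j \bar D(p^{(t)})\bigr)$ to get
\[
D_{KL}(p \| p^{(t+1)}) - D_{KL}(p \| p^{(t)}) = \sum_{j=1}^{m+1} p_j \ln \frac{p_j^{(t)}}{p_j^{(t+1)}} = \sum_{j=1}^{m+1} p_j \ln \frac{\phi^{(t)}}{1 - \eta \nabla_j \bar D(p^{(t)})}.
\]
Since $\sum_j p_j = p_{\max}$, the $\phi^{(t)}$ term contributes $p_{\max} \ln \phi^{(t)}$, and the remaining term is $-\sum_j p_j \ln\bigl(1 - \eta \nabla_j \bar D(p^{(t)})\bigr)$.

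Next I would bound each piece. For the logarithm term, the hypothesis $\eta \nabla_{\max} < 1$ together with the truncation $|\nabla_j \bar D(p^{(t)})| \le \nabla_{\max}$ guarantees $1 - \eta \nabla_j \bar D(p^{(t)}) > 0$ and in fact lies in a bounded interval away from $0$, so I can apply the inequality $-\ln(1-z) \le z + z^2$ valid for, say, $|z| \le \tfrac12$ (or a version valid on the relevant range); here $z = \eta \nabla_j \bar D(p^{(t)})$. That yields
\[
- \sum_{j} p_j \ln\bigl(1 - \eta \nabla_j \bar D(p^{(t)})\bigr) \le \eta \sum_j p_j \nabla_j \bar D(p^{(t)}) + \eta^2 \sum_j p_j \nabla_j \bar D(p^{(t)})^2 \le \eta \langle p, \nabla \bar D(p^{(t)}) \rangle + \eta^2 p_{\max} \nabla_{\max}^2.
\]
For the normalization term I would use $\ln \phi^{(t)} \le \phi^{(t)} - 1$; since $\phi^{(t)} = \tfrac{1}{p_{\max}}\sum_j p_j^{(t)}\bigl(1 - \eta \nabla_j \bar D(p^{(t)})\bigr) = 1 - \tfrac{\eta}{p_{\max}} \langle p^{(t)}, \nabla \bar D(p^{(t)}) \rangle$, this gives $p_{\max} \ln \phi^{(t)} \le -\eta \langle p^{(t)}, \nabla \bar D(p^{(t)}) \rangle$. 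Adding the two bounds produces exactly $-\eta \langle p^{(t)} - p, \nabla \bar D(p^{(t)}) \rangle + \eta^2 p_{\max} \nabla_{\max}^2$, as claimed.

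The main obstacle I anticipate is making the quadratic-remainder step on $-\ln(1-z)$ clean: the bare hypothesis is only $\eta\nabla_{\max} < 1$, which pins $z$ into $(-\eta\nabla_{\max}, \eta\nabla_{\max})$ but does not a priori give $|z| \le \tfrac12$, so I would either invoke the inequality $-\ln(1-z) \le z + z^2$ on the full range $z \in (-1,1)$ where it still holds (it does, since $-\ln(1-z) - z - z^2$ is nonpositive there — worth a one-line check of the derivative), or tighten the lemma hypothesis to $\eta\nabla_{\max} \le \tfrac12$, which is harmless because Lemma~\ref{lem:2}'s supply bound already forces $\eta\nabla_{\max}$ to be tiny. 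The $\phi^{(t)} > 0$ positivity and the $\ln x \le x - 1$ step are routine once the range of $z$ is controlled. Everything else is bookkeeping with $\|p\|_1 = \|p^{(t)}\|_1 = p_{\max}$.
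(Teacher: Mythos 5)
Your proof is essentially the paper's own: the same split of the KL difference into $p_{\max}\ln\phi^{(t)}$ plus the per-coordinate log terms, the same $\ln u \le u-1$ (equivalently $\ln(1-x)\le -x$) bound for the normalization term, and a quadratic bound on $-\ln(1-z)$ for the rest --- the paper merely routes that last step through a convexity interpolation in $\nabla_j\bar{D}(p^{(t)})/\nabla_{\max}$ before applying $\ln\tfrac{1}{1-x}\le x+x^2$ at $x=\eta\nabla_{\max}$, whereas you apply it pointwise, which is equally fine. One correction to a side remark: $-\ln(1-z)\le z+z^2$ does \emph{not} hold on all of $(-1,1)$ (it fails for $z\gtrsim 0.68$), so you should take your second option --- strengthen the working hypothesis to $\eta\nabla_{\max}\le\tfrac12$, which Lemma~\ref{lem:2} supplies with much room to spare; note the paper's own use of $\ln\tfrac{1}{1-x}\le x+x^2$ at $x=\eta\nabla_{\max}$ quietly relies on the same restriction.
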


This lemma follows by the standard analysis of multiplicative weight update.
We include the proof in Appendix~\ref{sec:concentration} for the sake of completeness.
The proofs of the next three lemmas are also deferred to Appendix~\ref{sec:concentration}.

\begin{lemma}
\label{lem:4}
For any $q^{(1)}, \dots, q^{(T)}$ such that $\| q^{(t)} \|_1 = p_{\max}$ and that $q^{(t)}$ depends only on $\nu^{(1)}, \dots, \nu^{(t-1)}$, we have that:
\[
\textstyle
\Pr \bigg[ \sum_{t=1}^T \langle q^{(t)}, \nu^{(t)} \rangle \ge \frac{p_{\max}\sqrt{8 T \ln(\frac{6}{\beta})}}{\epsilon^{\prime}} ~ \bigg] \le \tfrac{\beta}{6} ~,
\]
and
\[
\textstyle
\Pr \bigg[ \sum_{t=1}^T \langle q^{(t)}, \nu^{(t)} \rangle \le - \frac{p_{\max}\sqrt{8 T \ln(\frac{6}{\beta})}}{\epsilon^{\prime}} ~ \bigg] \le \tfrac{\beta}{6} ~.
\]
\end{lemma}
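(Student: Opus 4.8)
The plan is to view $M_t := \langle q^{(t)}, \nu^{(t)} \rangle$ as a martingale difference sequence with respect to the filtration $\mathcal{F}_t = \sigma(\nu^{(1)}, \dots, \nu^{(t)})$. Since $q^{(t)}$ is $\mathcal{F}_{t-1}$-measurable by hypothesis and the coordinates of $\nu^{(t)}$ are independent mean-zero $\textrm{Lap}(1/\epsilon')$ variables independent of $\mathcal{F}_{t-1}$, we have $\mathbb{E}[M_t \mid \mathcal{F}_{t-1}] = 0$, and I will bound $\sum_{t=1}^T M_t$ by a Chernoff argument on its moment generating function. It suffices to handle the upper tail; the lower tail is obtained by replacing each $q^{(t)}$ with $-q^{(t)}$.

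The heart of the argument is a conditional MGF bound. Conditioned on $\mathcal{F}_{t-1}$ the vector $q^{(t)}$ is fixed, so $M_t$ is a fixed linear combination of independent Laplace variables; applying the Laplace MGF $\mathbb{E}[e^{\mu Z}] = (1 - \mu^2/\epsilon'^2)^{-1}$ (valid for $|\mu| < \epsilon'$) coordinatewise, together with $(1-x)^{-1} \le e^{2x}$ for $0 \le x \le 1/2$ and the norm bound $\sum_j (q^{(t)}_j)^2 \le \|q^{(t)}\|_1^2 = p_{\max}^2$ (which in particular gives $|q^{(t)}_j| \le p_{\max}$ for every $j$), yields
\[
\mathbb{E}\!\left[ e^{\lambda M_t} \,\middle|\, \mathcal{F}_{t-1} \right] \;\le\; \exp\!\left( \frac{2\lambda^2 p_{\max}^2}{\epsilon'^2} \right) \qquad \text{for all } \ |\lambda| \le \frac{\epsilon'}{\sqrt{2}\, p_{\max}} .
\]
Iterating this estimate via the tower property over $t = T, T-1, \dots, 1$ gives $\mathbb{E}\big[\exp(\lambda \sum_{t=1}^T M_t)\big] \le \exp(2 \lambda^2 T p_{\max}^2 / \epsilon'^2)$ in the same range of $\lambda$.

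Combining this with Markov's inequality, $\Pr[\sum_t M_t \ge s] \le \exp(-\lambda s + 2\lambda^2 T p_{\max}^2/\epsilon'^2)$, and taking $\lambda = \epsilon' \sqrt{\ln(6/\beta)}/(\sqrt{2T}\, p_{\max})$ makes the exponent equal $-\ln(6/\beta)$ and produces exactly the claimed threshold $s = p_{\max} \sqrt{8 T \ln(6/\beta)}/\epsilon'$; the symmetric sign gives the lower tail. I expect the only genuine subtleties to be (i) verifying that this optimal $\lambda$ lies in the admissible window $|\lambda| \le \epsilon'/(\sqrt{2}\,p_{\max})$ where the MGF bound holds — unwinding the choice, this is equivalent to $T \ge \ln(6/\beta)$, which holds comfortably since $T = \epsilon^2 n^2 / m \ge \epsilon^2 b^2 / m$ exceeds $\ln(6/\beta)$ under the supply requirement of Lemma~\ref{lem:2} — and (ii) making sure the conditional-MGF inequality is genuinely uniform over the random vectors $q^{(t)}$, which is precisely why the argument uses only $\|q^{(t)}\|_1 = p_{\max}$ and no further structure, so that the tower-property iteration goes through verbatim. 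Everything else is the standard sub-exponential/Bernstein-type martingale concentration calculation.
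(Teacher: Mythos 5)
Your proof is correct and follows essentially the same route as the paper's: a Chernoff/MGF argument with iterated conditioning, the Laplace moment generating function bounded via $(1-x)^{-1}\le e^{2x}$, the bound $\sum_j (q^{(t)}_j)^2 \le \|q^{(t)}\|_1^2 = p_{\max}^2$, and the same optimized choice of $\lambda$ (the paper writes it as $\lambda=\tfrac{\epsilon'}{p_{\max}}\sqrt{\ln(6/\beta)/2T}$). Your explicit check that this $\lambda$ lies in the admissible window is a small bonus the paper leaves implicit.
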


\begin{lemma}
\label{lem:5}
For any $q^{(1)}, \dots, q^{(T)} \ge 0$ such that $\| q^{(t)} \|_1 \le  2 p_{\max}$ and $\| q^{(t)} \|_\infty \le  p_{\max}$, and that $q^{(t)}$ depends only on $\nu^{(1)}, \dots, \nu^{(t-1)}$ for $1 \le t \le T$, we have that with probability at most $\beta/3$
\[
\begin{split}
\textstyle
 ~ \sum_{t=1}^T \sum_{j=1}^m q^{(t)}_j \cdot \max \big\{0, \nu^{(t)}_j - \tfrac{\ln(T)}{\epsilon^{\prime}} \big\}
\ge \frac{2p_{\max}\sqrt{8 T \ln(\frac{6}{\beta})}}{\epsilon^{\prime}}
\end{split}
\]
and with probability at most $\beta/3$
\[
\textstyle
 ~ \sum_{t=1}^T \sum_{j=1}^m q^{(t)}_j \cdot \max \big\{0, - \nu^{(t)}_j - \tfrac{\ln(T)}{\epsilon^{\prime}} \big\} \ge  \frac{2p_{\max}\sqrt{8 T \ln(\frac{6}{\beta})}}{\epsilon^{\prime}}  ~
\]
\end{lemma}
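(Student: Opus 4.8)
By the symmetry of the Laplace distribution ($-\nu$ has the same joint law as $\nu$ and the same ``$q^{(t)}$ depends only on $\nu^{(1)},\dots,\nu^{(t-1)}$'' structure), the second inequality reduces to the first, so I focus on the first. Write $b = 1/\epsilon'$ and $c = \ln(T)/\epsilon'$, and set $Y^{(t)}_j := \max\{0,\nu^{(t)}_j - c\}$, $W^{(t)} := \sum_{j=1}^m q^{(t)}_j Y^{(t)}_j$, and $s := \tfrac{2 p_{\max}\sqrt{8 T\ln(6/\beta)}}{\epsilon'} = 2 b p_{\max}\sqrt{8 T\ln(6/\beta)}$. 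The goal is $\Pr\big[\sum_{t=1}^T W^{(t)} \ge s\big] \le \beta/3$, and the plan is a Chernoff-type bound on the martingale with increments $W^{(t)}$, using an exponential-moment estimate tailored to the truncation level $c$.

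The central calculation is the moment generating function of a single $Y^{(t)}_j$. Since $c$ is calibrated so that $\Pr[\nu^{(t)}_j > c] = \tfrac12 e^{-c/b} = \tfrac{1}{2T}$, a direct integration gives, for every $\theta$ with $0 \le \theta b \le \tfrac12$,
\[
\mathbb{E}\big[e^{\theta Y^{(t)}_j}\big] \;=\; 1 - \tfrac{1}{2T} + \tfrac{1}{2T(1-\theta b)} \;=\; 1 + \tfrac{1}{2T}\cdot\tfrac{\theta b}{1-\theta b} \;\le\; 1 + \tfrac{\theta b}{T} \;\le\; e^{\theta b/T}.
\]
The factor $1/T$ here is the crux of the whole argument: it is (twice) the probability that the raw noise exceeds the threshold $c$, and it makes the per-round effect of the truncated noise of order $1/T$ rather than of order $1$; this is precisely why the threshold is set to $\ln(T)/\epsilon'$.

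Now I assemble the increments. Fix $\lambda$ with $\lambda p_{\max} b \le \tfrac12$. Since $q^{(t)}$ is determined by $\nu^{(1)},\dots,\nu^{(t-1)}$, is nonnegative, and satisfies $\|q^{(t)}\|_\infty \le p_{\max}$ and $\|q^{(t)}\|_1 \le 2p_{\max}$, conditioning on $\mathcal{F}_{t-1} := \sigma(\nu^{(1)},\dots,\nu^{(t-1)})$ makes $W^{(t)}$ a sum of independent terms with fixed coefficients $\theta = \lambda q^{(t)}_j \le \lambda p_{\max}$, so the moment bound and independence give
\[
\mathbb{E}\big[e^{\lambda W^{(t)}}\mid\mathcal{F}_{t-1}\big] \;=\; \prod_{j=1}^m \mathbb{E}\big[e^{\lambda q^{(t)}_j Y^{(t)}_j}\mid\mathcal{F}_{t-1}\big] \;\le\; \prod_{j=1}^m e^{\lambda q^{(t)}_j b/T} \;=\; e^{\lambda b\,\|q^{(t)}\|_1/T} \;\le\; e^{2\lambda b p_{\max}/T}.
\]
Iterating over $t=T,T-1,\dots,1$ through the tower rule yields $\mathbb{E}\big[e^{\lambda\sum_t W^{(t)}}\big] \le e^{2\lambda b p_{\max}}$, and Markov's inequality then gives $\Pr\big[\sum_t W^{(t)}\ge s\big] \le e^{-\lambda s + 2\lambda b p_{\max}}$. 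Choosing $\lambda=\tfrac{1}{2 p_{\max} b}$ and substituting $s$ produces $\Pr\big[\sum_t W^{(t)}\ge s\big] \le e^{1 - \sqrt{8 T\ln(6/\beta)}}$. Finally, the supply requirement (cf.\ Lemma~\ref{lem:2}) forces $\epsilon b \ge \sqrt{m\ln(6/\beta)}$, hence $T = \epsilon^2 n^2/m \ge \epsilon^2 b^2/m \ge \ln(6/\beta)$, so $\sqrt{8T\ln(6/\beta)} \ge 2\sqrt{2}\,\ln(6/\beta)$ and the bound is at most $e\,(\beta/6)^{2\sqrt 2} \le \beta/3$ for $\beta \le 1$; the second inequality follows verbatim by symmetry. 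The main obstacle is the unboundedness of $\max\{0,\nu^{(t)}_j - c\}$, which rules out Azuma/Hoeffding; the exact moment identity above, with its $1/T$ factor, is exactly what converts the argument into a sub-exponential concentration bound at the $\sqrt{T\ln(1/\beta)}$ scale. A minor secondary point, handled automatically by the parameter setting, is that a bound of this form genuinely needs $T=\Omega(\ln(1/\beta))$.
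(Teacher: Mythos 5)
Your proof is correct and takes essentially the same route as the paper's: the exact moment-generating-function computation for the truncated Laplace noise (giving the crucial $1+\tfrac{\theta}{T\epsilon'}$-type factor from the threshold $\ln(T)/\epsilon'$), conditioning on the past with the tower rule, and a Chernoff/Markov bound with $\lambda=\tfrac{\epsilon'}{2p_{\max}}$. The only cosmetic difference is the final bookkeeping: the paper bounds the accumulated MGF by a constant and compares against the smaller threshold $\tfrac{2p_{\max}}{\epsilon'}\ln(6/\beta)$, while you compare directly against the stated $\sqrt{T}$-scale threshold and invoke $T\ge\ln(6/\beta)$ from the parameter setting, which is legitimate in the regime where the lemma is applied.
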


Putting the above pieces together, we have the following key lemma that is useful for showing approximate optimality and feasibility.

\begin{lemma}
\label{lem2}
For any $p$ such that $\| p \|_1 = p_{\max}$,
 with probability at least $1 - \beta$, we have:
\[
\textstyle \sum_{t=1}^T \left( L(x^{(t)},p^{(t)}) - L(x^{(t)},p) \right)\leq \frac{1}{\eta} D_{KL}(p \| p^{(1)})
 \textstyle +\eta \cdot T p_{\max} \nabla_{\max}^2+ \frac{20 p_{\max} T \sqrt{m \ln(\frac{6}{\beta}) \ln(\frac{2}{\delta})}}{\epsilon}
\]
\end{lemma}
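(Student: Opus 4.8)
The plan is to chain together three ingredients already established: the envelope-theorem identity (Lemma~\ref{lem:1}), the telescoping ``regret'' inequality of multiplicative weights (Lemma~\ref{lem:3}), and the two Laplace-noise concentration bounds (Lemmas~\ref{lem:4} and~\ref{lem:5}). By Lemma~\ref{lem:1} the left-hand side equals $\sum_{t=1}^T \langle p^{(t)}-p,\, \nabla D(p^{(t)})\rangle$, which I split as
\[
\textstyle \sum_{t=1}^T \langle p^{(t)}-p,\, \nabla \bar D(p^{(t)})\rangle \;+\; \sum_{t=1}^T \langle p^{(t)}-p,\, \nabla D(p^{(t)}) - \nabla \bar D(p^{(t)})\rangle .
\]
For the first sum I would apply Lemma~\ref{lem:3} at every round (its hypothesis $\eta\nabla_{\max}<1$ holds by Lemma~\ref{lem:2} under the stated supply requirement, which is precisely where ``$b\ge\tilde O(\sqrt m/\alpha\epsilon)$'' enters) and telescope over $t=1,\dots,T$; since $D_{KL}(p\|p^{(T+1)})\ge 0$ this gives $\sum_t \langle p^{(t)}-p, \nabla\bar D(p^{(t)})\rangle \le \tfrac1\eta D_{KL}(p\|p^{(1)}) + \eta T p_{\max}\nabla_{\max}^2$, matching the first two terms of the claim.

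It remains to bound the discrepancy term $\sum_t \langle p^{(t)}-p,\, \nabla D(p^{(t)})-\nabla\bar D(p^{(t)})\rangle$ by the last term with probability at least $1-\beta$. Writing $\nabla_j\hat D(p^{(t)}) = \nabla_j D(p^{(t)}) + \nu^{(t)}_j$, I decompose $\nabla_j D(p^{(t)}) - \nabla_j\bar D(p^{(t)}) = -\nu^{(t)}_j + r^{(t)}_j$ with truncation correction $r^{(t)}_j := \nabla_j\hat D(p^{(t)}) - \nabla_j\bar D(p^{(t)})$. The elementary but crucial observation is that the true subgradient is bounded, $|\nabla_j D(p^{(t)})| = |b - \sum_{i,k} a_{ijk} x^{(t)}_{ik}| \le n$ (using $b\le n$, $a_{ijk}\in[0,1]$, $\sum_k x^{(t)}_{ik}\le 1$), while the clip threshold is $\nabla_{\max} = n + \ln(T)/\epsilon'$; hence $r^{(t)}_j\ne 0$ only when $|\nu^{(t)}_j| > \ln(T)/\epsilon'$, and then $r^{(t)}_j \le \max\{0,\nu^{(t)}_j - \ln(T)/\epsilon'\}$ and $-r^{(t)}_j \le \max\{0,-\nu^{(t)}_j - \ln(T)/\epsilon'\}$. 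Using $p^{(t)},p\ge 0$ and $\|p^{(t)}\|_1 = \|p\|_1 = p_{\max}$, I bound the ``bulk noise'' contribution $-\sum_t\langle p^{(t)},\nu^{(t)}\rangle + \sum_t\langle p,\nu^{(t)}\rangle$ by two applications of Lemma~\ref{lem:4} — legitimate because $p^{(t)}$ depends only on $\nu^{(1)},\dots,\nu^{(t-1)}$ and the constant sequence $p$ depends on no noise — and the ``tail correction'' contribution $\sum_t\langle p^{(t)}-p, r^{(t)}\rangle \le \sum_{t,j} p^{(t)}_j\max\{0,\nu^{(t)}_j-\tfrac{\ln T}{\epsilon'}\} + \sum_{t,j}p_j\max\{0,-\nu^{(t)}_j-\tfrac{\ln T}{\epsilon'}\}$ by Lemma~\ref{lem:5} applied to the sequences $(p^{(t)})_t$ and $(p)_t$ (both satisfy $\ell_1\le 2p_{\max}$, $\ell_\infty\le p_{\max}$, nonnegativity, and the adaptivity requirement). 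A union bound over this constant number of events — with the per-event failure probabilities chosen to sum to $\beta$ — followed by the substitution $1/\epsilon' = \sqrt{8Tm\ln(2/\delta)}/\epsilon$ collapses all the contributions into a single term $O\!\big(\tfrac{p_{\max}T\sqrt{m\ln(6/\beta)\ln(2/\delta)}}{\epsilon}\big)$, and tracking the constants yields the stated coefficient.

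The main obstacle is exactly this discrepancy term, i.e.\ controlling the bias introduced by truncating an unbounded noisy subgradient: the multiplicative-weights regret bound and the concentration inequalities all require bounded quantities, which forces one to clip $\nabla\hat D$, and one must then argue that clipping is both rare and small. This is what Lemma~\ref{lem:5} delivers, leveraging the exponential tails of the Laplace distribution together with the gap of $\ln(T)/\epsilon'$ between the range $[-n,n]$ of the true subgradient and the clip threshold $\nabla_{\max}$. A secondary, purely bookkeeping, difficulty is to split the discrepancy into pieces that each meet the $\ell_1$/$\ell_\infty$/nonnegativity/adaptivity hypotheses of Lemmas~\ref{lem:4} and~\ref{lem:5} while keeping the total failure probability at $\beta$.
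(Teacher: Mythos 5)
Your proposal is correct and follows essentially the same route as the paper: Lemma~\ref{lem:1} to rewrite the regret, Lemma~\ref{lem:3} telescoped for the multiplicative-weights term, Lemma~\ref{lem:4} for the raw Laplace noise, and Lemma~\ref{lem:5} for the truncation bias, exploiting exactly the gap $\nabla_{\max}-n=\ln(T)/\epsilon'$ and a union bound summing to $\beta$. The only cosmetic difference is bookkeeping: the paper splits the discrepancy as $(\nabla\hat D-\nabla\bar D)+(\nabla D-\nabla\hat D)$ and bounds the truncation piece via $|p^{(t)}_j-p_j|$ (using the $\ell_1\le 2p_{\max}$ slack in Lemma~\ref{lem:5}), whereas you write it as $-\nu^{(t)}+r^{(t)}$ and bound the two signs with the sequences $p^{(t)}$ and $p$ separately, which is the same argument.
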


\subsubsection{Approximate optimality}

\begin{lemma}
\label{lem:6}
For any $t$, we have $L(x^{(t)},p^{(t)}) \ge OPT$.
\end{lemma}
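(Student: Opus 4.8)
The plan is to prove this by a direct weak-duality argument: $x^{(t)}$ is by construction a maximizer of $L(\cdot, p^{(t)})$ over $X$, and plugging the optimal primal solution into $L(\cdot, p^{(t)})$ already yields a value of at least $OPT$ whenever the prices are nonnegative.

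First I would fix an optimal solution $x^{OPT}$ of the packing LP. Note that $x^{OPT}$ satisfies the per-agent constraints $\sum_{k \in [\ell_i]} x^{OPT}_{ik} \le 1$ and $x^{OPT} \ge 0$, hence $x^{OPT} \in X$. Next I would observe that the dual iterates remain nonnegative: each coordinate of $p^{(t+1)}$ is obtained by multiplying the corresponding coordinate of $p^{(t)}$ by $1 - \eta \nabla_j \bar{D}(p^{(t)})$ and renormalizing so that $\|p^{(t+1)}\|_1 = p_{\max}$; since $|\nabla_j \bar{D}(p^{(t)})| \le \nabla_{\max}$ and $\eta \nabla_{\max} < 1$ by Lemma~\ref{lem:2}, every such factor is strictly positive, so by induction from $p^{(1)} = \tfrac{p_{\max}}{m+1}\mathbf{1} > 0$ we get $p^{(t)} \ge 0$ for all $t$ (the dummy coordinate only makes things easier, as it contributes $0$ to $L$).

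Then I would use the expression
\[
L(x^{OPT}, p^{(t)}) = \sum_{i \in [n]} \sum_{k \in [\ell_i]} \pi_{ik} x^{OPT}_{ik} - \sum_{j \in [m]} p^{(t)}_j \Big( \sum_{i \in [n]} \sum_{k \in [\ell_i]} a_{ijk} x^{OPT}_{ik} - b \Big) \ge OPT,
\]
where the inequality holds because $\sum_{i}\sum_{k} \pi_{ik} x^{OPT}_{ik} = OPT$, because feasibility of $x^{OPT}$ gives $\sum_{i}\sum_{k} a_{ijk} x^{OPT}_{ik} - b \le 0$ for every $j$, and because $p^{(t)}_j \ge 0$. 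Finally, since $x^{(t)} = x^*(p^{(t)}) \in \argmax_{x \in X} L(x, p^{(t)})$ and $x^{OPT} \in X$, we conclude $L(x^{(t)}, p^{(t)}) \ge L(x^{OPT}, p^{(t)}) \ge OPT$.

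I do not expect a real obstacle here; this is essentially weak Lagrangian duality. The only point that needs care is verifying nonnegativity of $p^{(t)}$, which relies on the hypothesis $\eta \nabla_{\max} < 1$ (Lemma~\ref{lem:2}) so that the multiplicative update never flips a sign; everything else is immediate from the definitions of $L$, $X$, and $x^*(\cdot)$.
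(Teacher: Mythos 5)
Your proof is correct and follows essentially the same route as the paper: plug the optimal feasible solution into $L(\cdot, p^{(t)})$, use its feasibility together with nonnegativity of the dual prices, and then invoke the fact that $x^{(t)}$ maximizes $L(\cdot, p^{(t)})$ over $X$. Your explicit check that the multiplicative updates keep $p^{(t)} \ge 0$ (via $\eta \nabla_{\max} < 1$) is a detail the paper leaves implicit, but it does not change the argument.
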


\begin{proof}
Let $x^{*}$ be the optimal primal solution.
%
%
Recall the definition of the Lagrangian objective.
We have that:
\[
 \textstyle
L(x^{(t)},p^{(t)}) =\sum_{i\in[n]}\sum_{k\in[\ell_i]}x^{(t)}_{ik}(\pi_{ik}-\sum_{j\in[m]} a_{ijk}p_j^{(t)})
 +\sum_{j\in[m]} p_j^{(t)}b_{j}~.
\]
Since that $x^{(t)}$ maximizes $L(x,p^{(t)})$, we get that:
\[
 \textstyle
L(x^{(t)},p^{(t)}) \geq \sum_{i\in[n]}\sum_{k\in[\ell_i]}x_{ik}^{*}(\pi_{ik}-\sum_{j \in[m]} a_{ijk}p_j^{(t)}) +\sum_{j\in[m]} p_j^{(t)}b_j~.
\]
Rearranging terms, this is further equal to %
\[
 \textstyle
\sum_{j\in[m]} p^{(t)}_j\left(b_j-\sum_{i\in[n]}\sum_{k\in[\ell_i]} a_{ijk}x_{ik}^{*}\right) +\sum_{i\in[n]}\sum_{k\in[\ell_i]}\pi_{ik}x_{ik}^{*} ~.
\]
Finally, note that by the definition $x^*$, the first term equals $OPT$ and the second term is greater than or equal to zero.
So the lemma follows.
\end{proof}


\begin{proof}{\em (Approximate optimality)}~
Recall that we add a dummy constraint $\langle  0,x \rangle \leq 0$, we let $p_{m+1}=p_{\max}$ and $p_{j}=0$ for all $j\leq m$.
By Lemma \ref{lem2}, the following holds with probability at least $1-\beta$:
\begin{align*}
\textstyle
\sum_{t=1}^T \left( L(x^{(t)},p^{(t)}) - L(x^{(t)},p) \right)
& \leq
\frac{1}{\eta} D_{KL}(p \| p^{(1)}) +\eta \cdot T p_{\max} \nabla_{\max}^2  + \frac{20 p_{\max} T \sqrt{m \ln(6/\beta) \ln(2/\delta)}}{\epsilon}\\
&
\textstyle
=\frac{1}{\eta}p_{\max}\ln(m+1)+\eta\cdot Tp_{\max}(n+\frac{\ln(T)}{\epsilon^{'}})^2
 +\frac{20 p_{\max} T \sqrt{m \ln(6/\beta) \ln(2/\delta)}}{\epsilon}\\
&
\textstyle
=4\alpha Tn+\frac{T\ln(T)^2\ln(2/\delta)\ln(m+1)mn}{\alpha \epsilon^2b^2} +\frac{20nT \sqrt{m \ln(6/\beta) \ln(2/\delta))}}{b\epsilon}.
\end{align*}
If $b\geq \frac{20\ln(T)\sqrt{m \ln(m+1)\ln(6/\beta) \ln(2/\delta)}}{\alpha \epsilon}$, we further bound the 2nd and 3rd terms by
\[
\textstyle
\frac{T\ln(T)^2\ln(2/\delta)\ln(m+1)mn}{\alpha \epsilon^2b^2}\leq \alpha T n ~,
\text{and }\quad  \frac{20nT \sqrt{m \ln(6/\beta) \ln(2/\delta)}}{b\epsilon}\leq \alpha Tn ~.
\]
So we get that
\[
\textstyle
\sum_{t=1}^T \left( L(x^{(t)},p^{(t)}) - L(x^{(t)},p)\right)\leq 6\alpha Tn ~.
\]
Note that $\sum_{i\in[n]}\sum_{k\in[\ell_i]}\pi_{ik} \bar{x}_{ik}=ALG$.
By our choice of $p$, we have $\sum_{t=1}^TL(x^{(t)},p)=T\cdot ALG$.
By Lemma \ref{lem:6}, we have $\sum_{t=1}^TL(x^{(t)},p^{(t)}) \ge T \cdot OPT$.
So we have:
\[
\textstyle
T(OPT-ALG) \leq \sum_{t=1}^T\left(L(x^{(t)},p^{(t)})-L(x^{(t)},p)\right)\leq 6T \alpha n ~,
\]
So we have the desired approximate optimality guarantee.
\end{proof}

\subsubsection{Approximate feasibility}
\begin{proof}{\em (Approximate feasibility)}~
We choose $p$ to penalize the over-demands and, thus, make $L(x^{(t)},p)$ as small as possible.
We let $p_{j^{*}}=p_{\max}$, where $j^{*}$ is the most over-demanded constraint and let $p_{j}=0$ for any $j\neq j^{*}$.
Let $s = b - \sum_{i\in[n]}\sum_{k\in[\ell_i]} a_{ij^*k} \bar{x}_{ik}$ be the over-demand of $j^*$.
By Lemma \ref{lem2} and the choice of $p$ and $p^{(1)}$, with probability at least $1-\beta$, we have:
\begin{align}
\begin{split}
\label{eq7}
&\textstyle
\sum_{t=1}^T \left( L(x^{(t)},p^{(t)}) - L(x^{(t)},p) \right) \\
&
\textstyle \qquad \qquad
\leq
\frac{1}{\eta} D_{KL}(p \| p^{(1)}) +\eta \cdot T p_{\max} \nabla_{\max}^2  + \frac{20 p_{\max} T \sqrt{m \ln(6/\beta) \ln(2/\delta)}}{\epsilon} \\
&
\textstyle \qquad \qquad
=
\frac{1}{\eta} p_{\max} \ln(m+1) +\eta \cdot T p_{\max} \nabla_{\max}^2  + \frac{20 p_{\max} T \sqrt{m \ln(6/\beta) \ln(2/\delta)}}{\epsilon}
~.
\end{split}
\end{align}
By the choice of $p$, we further get that:
%
\[
\textstyle
\sum_{t=1}^TL(x^{(t)},p)
\textstyle
=T\cdot ALG +p_{\max}\sum_{t=1}^T( b-\sum_{i\in[n]}\sum_{k\in[\ell_i]}a_{ij^{*}k}x_{ik}^{(t)}) =T(ALG-p_{\max}s) ~.
\]
Putting together with Lemma \ref{lem:6}, the LHS of \eqref{eq7} is at least $T(OPT-ALG+p_{\max}s)$.

Also note that $ALG \leq (1+\frac{s}{b})OPT$, because increasing the supply per resource from $b$ to $b+s$ increases the optimal packing objective by at most a $\frac{b+s}{b}$ factor.
So we have:
\begin{align*}
\mbox{ LHS of } (\ref{eq7}) &
\textstyle
\geq T\left(OPT-(1+\frac{s}{b})OPT+p_{\max} s\right) \\
&
\textstyle
=Ts\left(p_{\max}-\frac{1}{b} OPT\right)  \geq Ts(p_{\max}-\frac{n}{b})=\frac{3Tsp_{\max}}{4} ~.
\end{align*}
So we have that:
\[
\frac{3Tsp_{\max}}{4}
\leq  \textstyle \frac{1}{\eta} p_{\max} \ln(m+1) +\eta \cdot T p_{\max} \nabla_{\max}^2 + \frac{20 p_{\max} T \sqrt{m \ln(6/\beta) \ln(2/\delta)}}{\epsilon} ~.
\]
Plug in the choice of parameters, we get that
\[
\textstyle
\frac{s}{b} \le O\big( \alpha+\frac{\ln(T)^2\ln(m+1)\ln(2/\delta)m}{\alpha \epsilon^2 b^2}+\frac{\sqrt{m \ln(6/\beta) \ln(2/\delta)}}{b\epsilon} \big) ~.
\]
Therefore, the max violation per constraint is at most $s \leq O(\alpha b)$ as long as the supply per constraint is at least $b\geq \frac{20\ln(T)\sqrt{m \ln(m+1)\ln(6/\beta) \ln(2/\delta)}}{\alpha \epsilon}$.
\end{proof}

\section{Hardness (Theorem~\ref{thm:hardness})}
\label{sec:hardness}

\begin{theorem}[Theorem 1.1 of \cite{steinke2017between}]
	\label{thm:hardness-approximate-dp}
	Suppose there is an $(\epsilon, \delta)$-differentially private algorithm for answering $m$ arbitrary counting queries on a dataset of size $b$ with average error at most $O(\alpha b)$.
	Then, we have
	\[
	\textstyle
	b \ge \Omega \left( \frac{\sqrt{m \log(1/\delta)}}{\epsilon\alpha} \right) ~.
	\]
\end{theorem}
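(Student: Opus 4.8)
The plan is to prove the contrapositive by the \emph{fingerprinting} (tracing) method of Bun--Ullman--Vadhan, sharpened to the optimal $\delta$-dependence via the truncated ``score attack'' of Steinke--Ullman~\cite{steinke2017between}: I would show that any $(\epsilon,\delta)$-differentially private mechanism $M$ that is, in expectation over a carefully chosen random instance, $O(\alpha b)$-accurate for $m$ counting queries on datasets of size $b$ can be turned into a tracer that singles out a member of its own input, which $(\epsilon,\delta)$-joint privacy already forbids unless $b$ is large. The hard instance takes the data universe $\{0,1\}^m$ with the $j$-th counting query reporting $\sum_i x_{ij}$; a bias vector $p=(p_1,\dots,p_m)$ is drawn coordinate-wise i.i.d.\ from a prior $\mathcal P$ to be tuned, and each row $X_i\in\{0,1\}^m$ is then drawn with independent coordinates $X_{ij}\sim\mathrm{Bernoulli}(p_j)$. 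By Yao's principle it suffices to rule out DP mechanisms that are accurate in expectation over this instance.

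Let $a\in[0,1]^m$ be the mechanism's normalized answer vector, so accuracy means $\frac1m\sum_j |a_j-\frac1b\sum_i X_{ij}| = O(\alpha)$, and define the per-row score $Z_i := \langle a-p,\, X_i-p\rangle$. The argument contrasts two facts. First, a \emph{fingerprinting lemma}: for a suitable prior $\mathcal P$, \emph{any} $\alpha$-accurate $a$ --- even produced with no privacy --- satisfies $\E\big[\sum_{i\in[b]} Z_i\big]=\Omega(m)$, up to lower-order terms controlled by $\alpha$ and $b$, because an accurate summary must by Bayes' rule be correlated with the realized sample. Second, the \emph{null/concentration} behaviour of a fresh row: if $X'$ is an independent fresh draw from the same product law, then $\E[\langle a-p,X'-p\rangle \mid a,p]=0$, and conditionally on $(a,p)$ this out-of-sample score is a sum of independent bounded mean-zero terms, hence sub-Gaussian with variance proxy $O(\|a-p\|_2^2)$ --- and $\|a-p\|_2$ is itself small (of order $\alpha\sqrt m$ plus sampling error) precisely because $a$ is accurate.

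For the tracing step, fix a row $i$ and let $D^{(i)}$ be $D$ with $X_i$ replaced by an independent fresh row, so $D$ and $D^{(i)}$ are neighbours. Evaluate a \emph{clipped} score, truncated to $[-\tau,\tau]$, against both $M(D)$ and $M(D^{(i)})$: against $M(D^{(i)})$ the row $X_i$ is independent of the output, so the clipped score's expectation there is $\approx 0$ by the null property; against $M(D)$, the $(\epsilon,\delta)$-DP inequality bounds its expectation by $e^\epsilon$ times that near-zero quantity, plus an additive $\delta\tau$ (the range term), plus the clipping slack. Taking the clipping level $\tau$ of order $\sqrt{\log(1/\delta)}$ standard deviations of the null score --- i.e.\ $\tau=\Theta\big(\|a-p\|_2\sqrt{\log(1/\delta)}\big)$ --- is what ties the $\delta$-dependence to $\sqrt{\log(1/\delta)}$: by the sub-Gaussian tail, clipping at this level loses essentially no signal, while $\delta\tau$ stays negligible, which is exactly the Steinke--Ullman refinement over the crude ``$\delta\times(\text{range})$'' bound. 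Summing over $i\in[b]$ and tuning the spread of $\mathcal P$ so the $\Omega(m)$-scale signal of the fingerprinting lemma and the noise scale of the null score are simultaneously controlled, one gets an upper bound on $\sum_i\E[Z_i]$ of the form $b\cdot(\text{per-row privacy loss})$; matching it against the fingerprinting lower bound and rearranging gives $b\ge\Omega\!\big(\sqrt{m\log(1/\delta)}/(\epsilon\alpha)\big)$, with edge cases (e.g.\ $\alpha$ small enough that sampling noise dominates, or atypically large $\epsilon$) handled by adjusting $\mathcal P$.

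The main obstacle is concentrated in the two ingredients above. First, the fingerprinting lemma: showing that an \emph{arbitrary} accurate function of a Bernoulli sample must be correlated with that sample at total strength $\Omega(m)$, under a prior that simultaneously forces $\|a-p\|_2$ down to the accuracy floor, is a delicate second-moment computation in which the optimal trade-off among $m$, $\alpha$, and the prior's variance is won. Second, the $\delta$-bookkeeping in the tracer: getting $\delta$ to appear as $\sqrt{\log(1/\delta)}$ in the final bound rather than as a lossy additive term rests exactly on the sub-Gaussian control of the null score and the matched clipping threshold $\tau$. Everything else --- the Yao reduction, the union bound over the $b$ rows, and the surrounding arithmetic --- is routine.
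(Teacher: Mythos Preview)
The paper does not prove this statement at all: it is quoted verbatim as Theorem~1.1 of Steinke and Ullman~\cite{steinke2017between} and used as a black box in the reduction of Lemma~\ref{lem:hardness-reduction}. So there is no ``paper's own proof'' to compare against.

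That said, your sketch is the right one for the cited result. The fingerprinting construction with product-Bernoulli data under a random bias prior, the correlation lower bound $\E[\sum_i Z_i]=\Omega(m)$ for any accurate estimator, the replace-one-row coupling to a fresh sample, and in particular the clipped score at threshold $\tau=\Theta(\|a-p\|_2\sqrt{\log(1/\delta)})$ to convert the additive $\delta\cdot\mathrm{range}$ loss into a $\sqrt{\log(1/\delta)}$ factor --- these are exactly the ingredients of the Steinke--Ullman argument. The only point where your write-up is loose is the fingerprinting lemma itself: getting the full $\Omega(m)$ signal while simultaneously keeping $\|a-p\|_2$ at the accuracy scale requires a carefully chosen prior (in~\cite{steinke2017between} a rescaled arcsine-type law on $[0,1]$, not just any $\mathcal{P}$), and the statement ``up to lower-order terms controlled by $\alpha$ and $b$'' hides the place where the actual parameter balancing happens. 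But as a plan this is correct and complete; you are reproducing the source paper's proof, which the present paper simply cites.
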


\begin{lemma}
	\label{lem:hardness-reduction}
	Suppose for some $b$ and $m$, there is an $(\epsilon, \delta)$-jointly differentially private algorithm that with high probability outputs a feasible solution that is optimal up to an $\alpha n$ additive factor for $n = \Theta(b)$.
	Then, there is an $(\epsilon, \delta)$-differentially private algorithm for answering $m$ arbitrary counting queries on any dataset of size $\Theta(b)$ with average error at most $O(\alpha b)$.
\end{lemma}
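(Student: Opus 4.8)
The plan is to embed an arbitrary $m$-counting-query release task into the packing problem, so that a feasible near-optimal packing solution reveals all $m$ answers while differential privacy of those answers is inherited from joint differential privacy of the packing algorithm. Given a dataset $S=(s_1,\dots,s_N)$ with $N=\Theta(b)$ and counting queries $q_1,\dots,q_m$ (each $q_j(s_i)\in\{0,1\}$, so $q_j(S)=\sum_i q_j(s_i)\in[0,N]$), I build a packing instance with $m$ resources, each of supply $b$, and one agent per record: agent $i$, whose private data is $s_i$, has a single bundle of value $1$ demanding $a_{ij1}=q_j(s_i)$ units of resource $j$ for every $j$. Choosing $N\le b$ makes $\sum_i q_j(s_i)=q_j(S)\le b$, so selecting every agent is feasible and $OPT=n=N=\Theta(b)$; moreover an $i$-neighboring change of $S$ is an $i$-neighboring change of the packing dataset.

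For accuracy, run the assumed $(\epsilon,\delta)$-JDP packing algorithm and condition on its high-probability success: the output $\bar{x}$ is feasible, so $\sum_i q_j(s_i)\bar{x}_{i1}\le b$ for all $j$, and near-optimal, so $ALG=\sum_i\bar{x}_{i1}\ge OPT-\alpha n$, i.e.\ $\sum_i(1-\bar{x}_{i1})\le\alpha n$. Since $0\le q_j(s_i)(1-\bar{x}_{i1})\le 1-\bar{x}_{i1}$, the quantity $\sum_i q_j(s_i)\bar{x}_{i1}$ underestimates $q_j(S)$ by at most $\sum_i(1-\bar{x}_{i1})\le\alpha n=O(\alpha b)$, simultaneously for all $j$; so these quantities (or any $O(\alpha b)$-faithful post-processing of them) answer all $m$ queries with $O(\alpha b)$ error, in particular in average error.

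The crux is privacy. By definition of joint differential privacy, for every $i$ and every $i$-neighboring $S,S'$ the marginal of $\bar{x}_{-i}=(\bar{x}_{i'})_{i'\ne i}$ is $(\epsilon,\delta)$-differentially private, hence any fixed post-processing of $\bar{x}_{-i}$ is $(\epsilon,\delta)$-DP with respect to the record $s_i$. The difficulty is that the estimator above also reads agent $i$'s own coordinate $\bar{x}_{i1}$, which JDP does not protect: a near-optimal JDP packing algorithm may legitimately let $\bar{x}_{i1}$ depend on $s_i$, so $\sum_i q_j(s_i)\bar{x}_{i1}$ is not by itself differentially private, and masking this leakage with additive noise would inject error of order $\sqrt{m\ln(1/\delta)}/\epsilon$ — exactly the quantity we are lower-bounding — making the reduction vacuous. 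The fix I would pursue is to make the released answers depend on $\bar{x}$ only through $\bar{x}_{-i}$ for every real record $i$: augment the instance with $\Theta(b)$ auxiliary agents that carry no private record but compete with the real agents for the $m$ resources, and read each $q_j(S)$ off the auxiliary agents' allocations, which belong to $\bar{x}_{-i}$ for every $i$ indexing a real record. Then the released vector is a fixed post-processing of $\bar{x}_{-i}$, hence $(\epsilon,\delta)$-DP, with no group-privacy blow-up since a record change still touches a single agent. I expect the main technical hurdle to be calibrating the auxiliary agents' values and demands so that near-optimality of the packing objective forces them to report each $q_j(S)$ to within $O(\alpha b)$ average error, while keeping the real agents inside the near-optimal solution (so they are not profitably ``traded'' for auxiliary agents).

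Finally, the construction yields an $(\epsilon,\delta)$-differentially private algorithm answering $m$ arbitrary counting queries on a dataset of size $\Theta(b)$ with average error $O(\alpha b)$ (with high probability, which suffices). Plugging this into Theorem~\ref{thm:hardness-approximate-dp} gives $b\ge\Omega(\sqrt{m\ln(1/\delta)}/(\alpha\epsilon))$, which is the conclusion of the lemma and, together with $n=\Theta(b)$, of Theorem~\ref{thm:hardness}.
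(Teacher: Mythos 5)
You correctly identified the central obstacle (joint differential privacy does not protect agent $i$'s own coordinate, so the naive estimator $\sum_i q_j(s_i)\bar{x}_{i1}$ is not DP) and the right fix, which is exactly the paper's: add auxiliary agents with no private data, read the query answers off \emph{their} allocations (the paper's set $B$, with the response $\tilde{q}_j$ being $b$ minus the resources consumed by $B$), so that the release is a post-processing of $\bar{x}_{-i}$ for every real record $i$. But your proposal stops precisely where the real work begins, and what you defer as a ``technical hurdle'' is in fact the substance of the proof. Two structural facts must be established about \emph{every} near-optimal feasible solution: (i) it allocates to all but $O(\alpha n)$ of the real agents (otherwise $b$ minus the auxiliary consumption can undershoot $q_j(S)$), and (ii) it leaves at most $O(\alpha b m)$ total units of resource unallocated (otherwise it can overshoot). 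Neither follows from near-optimality alone; both hinge on a calibration of the auxiliary agents' bundles and values that you leave unspecified, and on a lower bound for $OPT$ showing that real agents plus auxiliary agents \emph{can} jointly use up essentially all the supply.

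Concretely, the paper chooses auxiliary agents who each demand an arbitrary size-$\frac{m}{2}$ subset of resources at value $\frac{1}{4}$, so their value per unit of resource ($\frac{1}{2m}$) is at most half that of any real agent ($\ge \frac{1}{m}$); this is what makes Lemma~\ref{lem:hardness-leftout-agents} work (dropping a real agent costs at least $\frac{1}{2}$ even after recycling his resources to auxiliary agents) and what makes Lemma~\ref{lem:hardness-unused-resources} work. More importantly, the $OPT$ lower bound (Lemma~\ref{lem:hardness-opt-bound}) requires showing that the auxiliary agents can pack an \emph{arbitrary, non-uniform} residual supply profile $\big(b - q_j(D')\big)_{j\in[m]}$ down to at most one leftover unit per resource; the paper proves this with a nontrivial inductive, case-based construction, and it is the reason the auxiliary bundles are taken to be all $\binom{m}{m/2}$ half-size subsets rather than something simpler. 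Without exhibiting such a calibration and proving these three lemmas (or equivalents), the accuracy half of your reduction is unsubstantiated, so the proposal as written has a genuine gap even though its architecture matches the paper's.
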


\begin{proof}
	Consider an arbitrary dataset of size $n' = \frac{b}{2}$, denoted as $D' = \{ d_1, \dots, d_{n'}\}$, and an arbitrary set of $m$ counting queries of sensitivity $1$, denoted as  $\mathcal{Q} = \{q_1, \dots, q_m\}$.
	Construct an instance of the packing problem with $n = \Theta(b)$ agents as follows:
	
	Let there be a set $n'$ agents, denoted as $A$, each of which demands a unique bundle, i.e., $|[\ell_i]| = 1$ and therefore we omit subscript $k$ in the following.
	The resource demanded in the bundle is $a_{ij} = q_j(d_i)$ and the value is $\pi_i = 1$.
	
	Further, let there be $2b$ agents, denoted as $B$, each of whom demands any subset of size $\frac{m}{2}$ and has value $\frac{1}{4}$.
	That is, $\ell_i = {m \choose {m}/{2}}$; for any subset $S \subseteq [m]$ of size $\frac{m}{2}$, let there be a $k$ such that $a_{ijk} = 1$ if $j \in S$ and $0$ otherwise, and $\pi_{ik} = \frac{1}{4}$.
	
	Note that by allocating a bundle to one of the agents in $A$, we get value at least $\frac{1}{m}$ per unit of resources.
	On the other hand, allocating a bundle to one of the agents in $B$ gets at only $\frac{1}{2m}$ value per unit of resources.
	
	\begin{lemma}
		\label{lem:hardness-opt-bound}
		$OPT \ge n' + \frac{1}{2m} \sum_{j \in [m]} \big( b - q_j(D') \big) - \frac{1}{2}$.
	\end{lemma}

	\begin{proof}
		We will prove by constructing a feasible solution with total value lower bounded by the RHS of the inequality.
		First, we will allocate to all agents in $A$ their desired bundles.
		We gain $n'$ total value by doing so, and has a remaining supply $b - q_j(D')$ of resource $j$ for any $j \in [m]$.
		
		Then, we claim that it is possible to allocate bundles to a subset of the agents in $B$ such that we use up all but at most $1$ unit of every resource.
		Given that, the lemma follows because allocating bundles to agents in $B$ gives precisely $\frac{1}{2m}$ value per unit of resources.
		
		In the rest of the proof, we will explain how to allocate bundles to agents in $B$.
		Note that after allocating bundles to agents in $A$, the maximum demand of any resource is at most $n' = \frac{b}{2}$, while the minimum demand of some resource could be $0$.
		
		We will inductively decide how to allocate bundles to agents in $B$ in $\frac{b}{2} - 1$ rounds such that after round $i$, $0 \le i \le \frac{b}{2} - 1$, the maximum demand of any resource will be at most $n' + i$, while the minimum demand of any resource will be at least $2i$.
		Then, at the end of the process, the maximum demand will be at most $b - 1$ and the minimum demand will be at least $b - 2$.
		We simply allocate to two more agents such that the first one gets resources $1$ to $\frac{m}{2}$ and the second one gets resources $\frac{m}{2} + 1$ to $m$.
		That is, we further use one unit of each resource.
		
		The claim is vacuously true for $i = 0$.
		Next, suppose we have finished the first $i - 1$ rounds for some $i \ge 1$.
		Let us explain how to allocate bundles in round $i$.
		
		Suppose the maximum and minimum demands of resources differs by at most $1$.
		We simply repeatedly allocate bundles to two agents in set $B$ so that one unit of each resource is allocated to exactly one of the two agents until the maximum demand equals $n' + i$.
		
		Otherwise, suppose the maximum and minimum demands, denoted as $d^+$ and $d^-$ respectively, differ by at least $2$.
		We will further divide it into three cases depending on the numbers of resources with demands $d^+$ and $d^-$ respectively, denoted as $k^+$ and $k^-$ respectively.
		Let us assume w.l.o.g.\ that the resources are sorted in ascending order of their current demands.
		E.g., resources $1$ to $k^-$ are those with demands equal $d^-$, and resource $m - k^+ + 1$ to $m$ are those with demands equal $d^+$.
		
		The first case is when there are $k^- \le \frac{m}{2}$ resources with demand $d^-$ and $k^+ > k^-$.
		In this case, consider two agents in $B$.
		Let us allocate items $1$ to $\frac{m}{2}$ to the first agent, and allocate items $1$ to $k^-$ together with items $\frac{m}{2} + 1$ to $m - k^-$ to the second one.
		Note that $m - k^- > m - k^+$ by our assumption on $k^-$ and $k^+$.
		We have increased (1) the demands of resources $1$ to $k^-$ by $2$ (i.e., from $k^-$ to $k^- + 2 \le k^+$), (2) the demands of resources $k^- + 1$ to $m - k^+$ by $1$ (i.e., from $< k^+$ to at most $k^+$), and (3) demands of a subset of the resources $m - k^+ + 1$ to $m$ by $1$ (i.e., from $k^+$ to $k^+ + 1$).
		Thus, we have achieved the desired goal in round $i$.

		The second case is when $k^- \le \frac{m}{2}$ and $k^+ \le k^-$.
		We consider the same two agents as in the previous case.
		After allocating to those two agents, we have increased (1) the demands of resources $1$ to $k^-$ by $2$ (i.e., from $k^-$ to $k^- + 2 \le k^+$), and (2) the demands of a subset of the resources $k^- + 1$ to $m - k^+$ by $1$ (i.e., from $< k^+$ to at most $k^+$).
		Then, we further allocate to two agents in set $B$ so that one unit of each resource is allocated to exactly one of the two agents.
		Then, we have increased (1) the demands of resources $1$ to $k^-$ by $3$ (i.e., from $k^-$ to at most $k^- + 3 \le k^+ + 1$), (2) the demands of resources $k^- + 1$ to $m - k^+$ by either $1$ or $2$ (i.e., from $< k^+$ to at most $k^+ + 1$), and (3) the demands of resources $m - k^+ + 1$ to $m$ by $1$ (i.e., from $k^+$ to at most $k^+ + 1$).
		Thus, we have achieved the desired goal in round $i$.
				
		The final case is when $k^- > \frac{m}{2}$.
		In this case, consider allocating to $4$ agents.
		The first and second agents get resources $1$ to $\frac{m}{2}$;
		the third agent gets resources $\frac{m}{2} + 1$ to $m$;
		and the fourth agent gets resources $k^- - \frac{m}{2} + 1$ to $k^-$.
		Then, we have increased (1) the demands of resources $1$ to $k^-$ by either $2$ or $3$ (i.e., from $k^-$ to at most $k^- + 3 \le k^+ + 1$), (2) the demands of all other resources by $1$ (i.e., from $\le k^+$ to at most $k^+ + 1$).
		Thus, we have achieved the desired goal in round $i$.
	\end{proof}
	
	\begin{lemma}
		\label{lem:hardness-leftout-agents}
		Any solution that is optimal up to an $\alpha n$ additive factor must allocate bundles to all but at most $2 \alpha n + 1$ agents in $A$.
	\end{lemma}
	
	\begin{proof}
		We will prove the lemma even for fractional solutions.
		As the solution is allowed to be fractional and there are plenty of agents in $B$, we can assume without loss that all resources are fully allocated.
		If we allocate to all agents in $A$, the objective would be $n' + \frac{1}{2m} \sum_{j \in [m]} \big( b - q_j(D') \big)$.
		Note that the value per unit of resource of allocating to agents in set $B$ is at most a half of that of allocating to agents in set $A$.
		Thus, for each agent $i \in A$ that remains unallocated in the solution, the objective decreases by at least $\frac{1}{2}$ even if we fully allocate the resources that were allocated to the $i$ to some other agents in $B$.
		Putting together with Lemma~\ref{lem:hardness-opt-bound} proves the lemma.
	\end{proof}

	\begin{lemma}
		\label{lem:hardness-unused-resources}
		Any solution that is optimal up to an $\alpha n$ additive factor must allocate all but at most $O(\alpha b m)$ units of the resources.
	\end{lemma}

	\begin{proof}
		Again, we will prove the lemma even for fractional solutions.
		If all resources are fully allocated and we optimally allocate to all agents in $A$, the objective is $n' + \frac{1}{2m} \sum_{j \in [m]} \big( b - q_j(D') \big)$.
		Since the value per unit of resources is at least $\frac{1}{2m}$ for any agent, putting together with Lemma~\ref{lem:hardness-opt-bound} proves the lemma.
	\end{proof}

	Now we are ready to introduce the reduction from differentially private query release to jointly differentially private packing.
	By solving the constructed packing instance in an $(\epsilon, \delta)$-jointly differentially private manner, the allocation for agents in $B$ is $(\epsilon, \delta)$-differentially private w.r.t.\ the data of agents in $A$ according to the definition of joint differential privacy.
	Then, we can output $b$ minus the number of units of resource $j$ allocated to agents in $B$, denoted as $\tilde{q}_j$, as the response for query $q_j$.
	Since this is a post-processing on the output of an $(\epsilon, \delta)$-differentially private algorithm, the responses are $(\epsilon, \delta)$-differentially private as well.
	
	It remains to analyze the accuracy of the responses.	
	On one hand, $\tilde{q}_j$ is greater than or equal to the number of units of resource $j$ allocated to agents in $A$, which, by Lemma~\ref{lem:hardness-leftout-agents} is at least $q_j(D') - O(\alpha n) = q_j(D') - O(\alpha b)$.
	On the other hand, $\tilde{q}_j$ is at most the number of units of resource $j$ allocated to agents in $A$ plus the number of unallocated units of resource $j$.
	The former is at most $q_j(D')$ while the latter is at most $O(\alpha b)$ on average according to Lemma~\ref{lem:hardness-unused-resources}.
	Putting together $\tilde{q}_j$'s have average error at most $\alpha$.
\end{proof}

\section{Private dual online multiplicative weight algorithm}
\label{sec:online}

\begin{algorithm*}[!htb]
\caption{Private Dual Online Multiplicative Update Method (Pri-DOMW)}
\label{alg:online-pridmw}
\begin{algorithmic}[1]
\STATE
\textbf{input:}
objective $\pi$; demands $a_{ijk}$ for all $i\in[n],j\in[m],k\in[\ell_i]$; supply $b$; approximation parameter $\alpha$; privacy parameter $\epsilon$ and $\delta$.
\STATE
\textbf{parameters:}
initial dual $p^{(1)}=\frac{p_{\max}}{m+1}\cdot \textbf{1}$;
step size $\eta = \frac{1}{\sqrt{n}\sigma}$;
range of dual $p_{\max} = \frac{\alpha n}{\sigma}$;
noise scale $\sigma = \frac{m}{\epsilon}$ (if $\delta = 0$) or $\frac{1}{\epsilon}{\sqrt{8 m \ln (1/\delta)}}$ (if $\delta > 0$) in each step;
width $\nabla_{\max} = 1 + \sigma \cdot \ln n$.
\STATE
\textbf{require:}
$b \ge \tilde{O}\big(\tfrac{\sqrt{n} \sigma}{\alpha}\big)$; we assume $n \ge b$ as the problem is trivial otherwise.
\STATE
Pick a permutation $\lambda$ over $[n]$ uniformly at random.
\FORALL{$t = 1, \dots, n$}
\STATE
Let $x_{\lambda(t)} = x^*_{\lambda(t)} (p^{(t)})$, i.e., for all $k \in [\ell_{\lambda(t)}]$, let
%
$$
\textstyle
x_{\lambda(t)k} =
\begin{cases}
1, & \mbox{if $\pi_{\lambda(t)k}-\sum_{j=1}^ma_{\lambda(t)jk}p^{(t)}_j \geq 0, $} \mbox{$k=\argmax_{k'\in [\ell_i]}{\pi_{\lambda(t)k'} - \sum_{j=1}^ma_{\lambda(t)jk'}p^{(t)}_j}$ ;}\\
0, & \mbox{otherwise.}
\end{cases}
$$
\STATE Sample Laplacian noise $\nu_{tj} \sim \textrm{Lap}\big(\sigma\big)$ for all $j \in [m]$.
\STATE Let $y_{t} = \sum_{k \in [\ell_{\lambda(t)}]} x_{\lambda(t)k} \cdot (a_{\lambda(t)1k}, \dots, a_{\lambda(t)mk})$ be the demand vector of agent $t$.
\STATE Let $z_{t} = y_{t} + \nu_{t}$ be the noisy demand vector.
\STATE Compute $p^{(t+1)}$ such that $p_j^{(t+1)} \propto p_j^{(t)} \big( 1 + \eta \cdot (z_{tj} - \frac{b}{n}) \big)$ for $j \in [m]$ and $p_{m+1}^{(t+1)} \propto 1$.
\ENDFOR
\STATE \textbf{output:} $x_{\lambda(1)},\dots,x_{\lambda(n)}$.
\end{algorithmic}
\end{algorithm*}

In this section we introduce an alternative algorithm for solving the packing problem in a jointly differentially private manner.
This alternative approach is similar to the previous one, with the following differences.
In each step, instead of computing the best responses of all agents for the current dual prices and, thus, compute the corresponding subgradient, we simply pick one of the agents and use his best response to compute a proxy subgradient.
The agent then gets the bundle specified by his best response.
We will choose a random ordering of the agents at the beginning and pick agents in that order.
As a result, the algorithm will update dual prices for only $n$ rounds as oppose to $\frac{\epsilon^2 n^2}{m}$ rounds in the previous approach.

There are both pros and cons of this alternative approach.

\begin{itemize}
	\item The main disadvantage is that it requires a much larger supply to get the same approximation guarantees.
	Intuitively, this is because (1) we use proxy subgradients in place of the actual subgradients and, thus, introduce some extra error, and (2) it goes over the each agent only once and, thus, does not optimize the number of rounds of dual updates.
	\item For the same two reasons that cause the above drawback, the approach in this section has the advantage that we get incentive compatibility for free if we charge the agent the corresponding dual prices in his round, since every agent gets the best response bundle.
	\item Further, it can be implemented in the online random-arrival setting, where the agents show up one by one in a random order and the algorithm must decide the allocation to each agent at his arrival.
	\item Last but not least, the approach in this section can be implemented in an $\epsilon$-jointly differentially private manner.
	Neither the dual multiplicative weight update approach in Section~\ref{sec:algorithm} nor the approach in previous work~\cite{DBLP:conf/soda/Hsu0RW16} can achieve $\epsilon$-joint differential privacy.
\end{itemize}

\subsection{Proof of Theorem~\ref{thm:online} (privacy)}

By standard privacy properties of the Laplace mechanism and the composition theorem, the noisy demand vectors $z_t$'s are $\epsilon$-differentially private if the noise scale is $\sigma = \frac{m}{\epsilon}$, and are $(\epsilon, \delta)$-differentially private if the noise scale is $\sigma = \frac{1}{\epsilon} \sqrt{8m \ln({1}/{\delta})}$.
Then, the joint differential privacy of Algorithm~\ref{alg:online-pridmw} follows by the Billboard Lemma (Lemma~\ref{prilem3}).

\subsection{Proof of Theorem~\ref{thm:online} (approximation)}

\subsubsection{Key lemma}

We will first establish a key lemma that is an analogue of Lemma~\ref{lem2} in the previous section.

\begin{lemma}
	\label{lem:online-key}
	For any given $p$ such that $\| p \|_1 = p_{\max}$, with high probability, we have:
	\[
	\textstyle
	OPT - L(x,p)
	 \leq
	\frac{1}{\eta} D_{KL}(p \| p^{(1)}) + \eta \cdot \tilde{O} \big( n p_{\max} \sigma^2 \big)  + \tilde{O} \big(\sqrt{n} p_{\max} \sigma \big) ~.	
\]
\end{lemma}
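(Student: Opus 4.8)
The plan is to imitate the offline argument (Lemma~\ref{lem2} combined with Lemma~\ref{lem:6}), with the weak-duality step of Lemma~\ref{lem:6} replaced by a concentration argument over the random permutation $\lambda$. Write the per-agent Lagrangian $L_i(x_i,p) = \sum_{k\in[\ell_i]} x_{ik}\big(\pi_{ik} - \sum_{j\in[m]} a_{ijk}p_j\big) + \tfrac{b}{n}\sum_{j\in[m]} p_j$, so that $L(x,p) = \sum_{i\in[n]} L_i(x_i,p)$ and, for the output $x$, $L(x,p) = \sum_{t=1}^n L_{\lambda(t)}\big(x_{\lambda(t)},p\big)$. Since $x_{\lambda(t)} = x^*_{\lambda(t)}(p^{(t)})$ is the best response to $p^{(t)}$ and $L_i$ is linear in $p$, the per-agent analogue of Lemma~\ref{lem:1} gives $L_{\lambda(t)}(x_{\lambda(t)},p^{(t)}) - L_{\lambda(t)}(x_{\lambda(t)},p) = \langle p^{(t)}-p,\, g^{(t)}\rangle$, where $g^{(t)}$ is the exact proxy subgradient with $g^{(t)}_j = \tfrac{b}{n} - y_{tj}$ for $j\in[m]$ and $g^{(t)}_{m+1}=0$; the update in Algorithm~\ref{alg:online-pridmw} is a multiplicative weight step whose loss vector is $g^{(t)}$ perturbed additively by the Laplacian noise $\nu_t$ and then truncated coordinatewise to $[-\nabla_{\max},\nabla_{\max}]$ (call the truncated version $\bar g^{(t)}$).

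\emph{Step 1 (MWU regret).} Running the standard KL-divergence telescoping of multiplicative weights (the $n$-round analogue of Lemma~\ref{lem:3}) on $\bar g^{(t)}$ yields $\sum_{t=1}^n \langle p^{(t)}-p,\, \bar g^{(t)}\rangle \le \tfrac{1}{\eta} D_{KL}(p\|p^{(1)}) + \eta\, n\, p_{\max}\nabla_{\max}^2$. To pass from $\bar g^{(t)}$ back to $g^{(t)}$ one pays two errors: the noise term $\sum_t\langle p^{(t)}-p,\,\nu_t\rangle$, bounded by the scale-$\sigma$, $T=n$ analogue of Lemma~\ref{lem:4}; and the truncation tails $\sum_t\langle p^{(t)}-p,\, g^{(t)}-\bar g^{(t)}\rangle$, which are supported on coordinates with $|\nu_{tj}| > \nabla_{\max}-1 = \sigma\ln n$ (using $|\tfrac{b}{n}-y_{tj}|\le 1$) and are bounded by the analogue of Lemma~\ref{lem:5}. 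Both are $\tilde{O}(\sqrt n\, p_{\max}\sigma)$ with high probability. Since $\nabla_{\max} = 1 + \sigma\ln n$, this gives $\sum_{t}\big(L_{\lambda(t)}(x_{\lambda(t)},p^{(t)}) - L_{\lambda(t)}(x_{\lambda(t)},p)\big) \le \tfrac1\eta D_{KL}(p\|p^{(1)}) + \eta\cdot\tilde{O}(n p_{\max}\sigma^2) + \tilde{O}(\sqrt n p_{\max}\sigma)$, i.e.\ $L(x,p) \ge \sum_t L_{\lambda(t)}(x_{\lambda(t)},p^{(t)}) - (\text{RHS of the lemma})$.

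\emph{Step 2 (the crux: random-permutation weak duality).} It remains to show $\sum_t L_{\lambda(t)}(x_{\lambda(t)},p^{(t)}) \ge OPT - \tilde{O}(\sqrt n p_{\max}\sigma)$ with high probability. Let $x^*$ be an optimal primal solution and $C_i = \big(\sum_k x^*_{ik}a_{ijk}\big)_{j\in[m]}$, so $\sum_i C_i \le b\mathbf 1$. Since $x_{\lambda(t)}$ is a best response, $L_{\lambda(t)}(x_{\lambda(t)},p^{(t)}) \ge \sum_k x^*_{\lambda(t)k}\big(\pi_{\lambda(t)k} - \sum_j a_{\lambda(t)jk}p^{(t)}_j\big) + \tfrac{b}{n}\sum_j p^{(t)}_j$, and summing over $t$ gives $\sum_t L_{\lambda(t)}(x_{\lambda(t)},p^{(t)}) \ge OPT - \sum_t \langle C_{\lambda(t)} - \tfrac{b}{n}\mathbf 1,\, p^{(t)}\rangle$. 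I would bound this last sum by splitting it into a martingale part $\sum_t\big(\langle C_{\lambda(t)},p^{(t)}\rangle - \mathbb{E}[\langle C_{\lambda(t)},p^{(t)}\rangle \mid \mathcal{F}_{t-1}]\big)$ and a predictable part. The martingale increments are $O(p_{\max})$ and have conditional variance $O\big(b\,p_{\max}^2/(n-t+1)\big)$ (using $\langle C_i,p^{(t)}\rangle\le p_{\max}$ and $\sum_{i\in R_t}\langle C_i,p^{(t)}\rangle \le b\,p_{\max}$, where $R_t$ is the set of not-yet-processed agents), so by Freedman's inequality their sum is $\tilde{O}\big(p_{\max}\sqrt{b}\,\big) = \tilde{O}(\sqrt n\, p_{\max})$. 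A short computation shows the predictable part equals $\sum_t \tfrac{\langle \tilde{S}^{(t-1)},\, p^{(t)}\rangle}{n-t+1}$ minus a nonnegative term, where $\tilde{S}^{(s)} = \sum_{t'\le s}\big(\tfrac1n\sum_i C_i - C_{\lambda(t')}\big)$ is a sampling-without-replacement random walk; a uniform Serfling/Hoeffding bound gives $\|\tilde{S}^{(s)}\|_\infty = \tilde{O}\big(\sqrt{s(n-s)/n}\,\big)$, whence $\sum_t \tfrac{\|\tilde{S}^{(t-1)}\|_\infty}{n-t+1} = \tilde{O}(\sqrt n)$ and the predictable part is $\tilde{O}(\sqrt n\, p_{\max})$.

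Combining the two steps, $OPT - L(x,p) = OPT - \sum_t L_{\lambda(t)}(x_{\lambda(t)},p) \le \big(\sum_t L_{\lambda(t)}(x_{\lambda(t)},p^{(t)}) + \tilde{O}(\sqrt n p_{\max}\sigma)\big) - \sum_t L_{\lambda(t)}(x_{\lambda(t)},p) \le \tfrac1\eta D_{KL}(p\|p^{(1)}) + \eta\cdot\tilde{O}(n p_{\max}\sigma^2) + \tilde{O}(\sqrt n p_{\max}\sigma)$, which is the claim. I expect Step 2 to be the main obstacle: the prices $p^{(t)}$ depend on the past of both the noise and the permutation, and one has to extract the needed cancellation without losing a $\sqrt n$ factor; a crude summation-by-parts bound of the form (total price movement)$\times$(maximal deviation of the partial consumption sums) would only give $\tilde{O}(n p_{\max})$, which is why the martingale route with the finite-population conditional-variance estimate seems necessary. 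A secondary technical point is establishing the $n$-round, noise-scale-$\sigma$ versions of Lemmas~\ref{lem:4} and~\ref{lem:5}, which should go through exactly as in Appendix~\ref{sec:concentration}.
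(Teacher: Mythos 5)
Your proposal is correct, and its Step 1 is essentially the paper's own argument: it is the chain of Lemma~\ref{lem:online-one-step-1}, Eqn.~\eqref{eqn:online-one-step}, and the martingale/Laplace concentration, i.e.\ Lemma~\ref{lem:online-one-step-2} (your explicit coordinatewise truncation to $[-\nabla_{\max},\nabla_{\max}]$ together with a Lemma~\ref{lem:5}-style tail bound makes rigorous what the pseudocode of Algorithm~\ref{alg:online-pridmw} leaves implicit, since the algorithm as written never truncates but the analysis uses the width $\nabla_{\max}$). Where you genuinely depart is the crux, the analogue of Lemma~\ref{lem:online-lagrangian-bound}. The paper conditions round by round: it writes $\E[L_{\lambda(t)}(x_{\lambda(t)},p^{(t)})\mid\lambda(1:t-1)]$ as the average of $L_i(x_i^*(p^{(t)}),p^{(t)})$ over the not-yet-processed agents, lower bounds it by the average of $L_i(x_i^*,p^{(t)})$, invokes a sampling-without-replacement bound at \emph{every} round $t$ (for the objective coefficients and for each of the $m$ demand coordinates) to compare the remaining-population average with the full average up to $\tilde{O}(p_{\max}/\sqrt{n-t+1})$, and finishes with Azuma on increments bounded by $2p_{\max}$. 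You instead note that after the best-response inequality $L_{\lambda(t)}(x_{\lambda(t)},p^{(t)})\ge L_{\lambda(t)}(x^*_{\lambda(t)},p^{(t)})$ the value part telescopes exactly to $OPT$ because $\lambda$ is a permutation, so only the price-weighted consumption $\sum_t\langle C_{\lambda(t)}-\tfrac{b}{n}\mathbf{1},p^{(t)}\rangle$ needs probabilistic control; your split into a martingale (Freedman with conditional variance $O(b\,p_{\max}^2/(n-t+1))$; plain Azuma with $O(p_{\max})$ increments would already give the needed $\tilde{O}(\sqrt{n}\,p_{\max})$) plus a compensator dominated, via $\sum_i C_i\le b\mathbf{1}$, by $p_{\max}\sum_t\|\tilde S^{(t-1)}\|_\infty/(n-t+1)$ with a uniform Serfling bound, is sound (the identity behind the compensator and the $O(\sqrt{n})$ summation both check out). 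Both routes yield the same $\tilde{O}(\sqrt{n}\,p_{\max})$ loss and hence the lemma; yours applies permutation concentration only once, uniformly over prefixes and coordinates, and avoids concentration for the value term altogether, which is a modest simplification, while the paper's per-round conditional-expectation argument makes more transparent how the random-arrival structure enters.
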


The proof of Lemma~\ref{lem:online-key} follows by a sequence of technical lemmas as follows.

\begin{lemma}
	\label{lem:online-one-step-1}
	For any $p$ with $\| p \|_1 = p_{\max}$, and $\eta \nabla_{\max} < 1$, we have:
	\begin{equation*}
	 D_{KL}(p \| p^{(t+1)})-D_{KL}(p \| p^{(t)}) \leq \eta \big\langle p^{(t)}-p, z_{t} - \tfrac{b}{n} \cdot \mathbf{1} \big\rangle + \eta^2 p_{\max} \nabla_{\max}^2 ~.
	\end{equation*}
\end{lemma}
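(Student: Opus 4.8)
The statement is the textbook one‑step (regret) inequality for the exponentiated‑gradient / multiplicative weight update, specialized to the proxy subgradient $g^{(t)} := z_t - \tfrac{b}{n}\mathbf{1}$ (padded with $g^{(t)}_{m+1}:=0$ on the dummy coordinate), and it is the direct analogue of Lemma~\ref{lem:3}. The plan is to run the standard argument in unnormalized form. First I would write $w^{(t+1)}_j = p^{(t)}_j\big(1+\eta g^{(t)}_j\big)$ for $j\in[m+1]$, so that $p^{(t+1)}_j = \tfrac{p_{\max}}{\Phi^{(t)}}\,w^{(t+1)}_j$ with normalizer $\Phi^{(t)} = \sum_j w^{(t+1)}_j = p_{\max} + \eta\,\langle p^{(t)}, g^{(t)}\rangle$, using $\|p^{(t)}\|_1 = p_{\max}$ and $g^{(t)}_{m+1}=0$. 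For this to be well defined (all weights positive, all logarithms below in range) I need $1+\eta g^{(t)}_j>0$, which follows from $\eta\nabla_{\max}<1$ once $|g^{(t)}_j|\le\nabla_{\max}$; I return to this bound below since it is the one non‑routine point.

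Next I would telescope the divergence. Because the $j$‑independent normalizing factor cancels from $\ln(p^{(t)}_j/p^{(t+1)}_j)$,
\[
D_{KL}(p\,\|\,p^{(t+1)}) - D_{KL}(p\,\|\,p^{(t)})
= \sum_j p_j \ln\frac{p^{(t)}_j}{p^{(t+1)}_j}
= p_{\max}\ln\frac{\Phi^{(t)}}{p_{\max}} - \sum_j p_j \ln\big(1+\eta g^{(t)}_j\big)~,
\]
where I used $\|p\|_1 = p_{\max}$ on the first term. Then I bound the two pieces separately: by $\ln(1+x)\le x$,
\[
p_{\max}\ln\frac{\Phi^{(t)}}{p_{\max}} = p_{\max}\ln\!\Big(1+\tfrac{\eta}{p_{\max}}\langle p^{(t)},g^{(t)}\rangle\Big) \le \eta\,\langle p^{(t)},g^{(t)}\rangle~,
\]
and by the second‑order estimate $\ln(1+x)\ge x-x^2$ (valid whenever $\eta\nabla_{\max}$ is bounded below $1$, tightening the constant in the hypothesis if needed),
\[
-\sum_j p_j \ln\big(1+\eta g^{(t)}_j\big) \le -\eta\,\langle p,g^{(t)}\rangle + \eta^2\sum_j p_j \big(g^{(t)}_j\big)^2 \le -\eta\,\langle p,g^{(t)}\rangle + \eta^2 p_{\max}\nabla_{\max}^2~,
\]
where the last step uses $|g^{(t)}_j|\le\nabla_{\max}$ together with $\|p\|_1=p_{\max}$. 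Adding the two bounds gives exactly $\eta\,\langle p^{(t)}-p,\,g^{(t)}\rangle + \eta^2 p_{\max}\nabla_{\max}^2$, i.e.\ the claimed inequality with $g^{(t)} = z_t - \tfrac{b}{n}\mathbf{1}$.

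The one step deserving care — and the only place the argument departs from the noiseless MWU calculation — is the boundedness $|z_{tj} - \tfrac{b}{n}|\le\nabla_{\max}$, since the Laplace noise $\nu_{tj}\sim\mathrm{Lap}(\sigma)$ is a priori unbounded. I would obtain it in one of two equivalent ways: either by treating $z_t$ as (implicitly) truncated to width $\nabla_{\max}$ about $\tfrac{b}{n}$, making the per‑step inequality unconditional at the cost of a truncation‑error term to be tracked in the proof of Lemma~\ref{lem:online-key} exactly as in Lemma~\ref{lem:5}; or by conditioning on the event $\max_{t\in[n],j\in[m]}|\nu_{tj}|\le \sigma\ln(\tfrac{nm}{\beta})$, which holds with probability $\ge 1-\beta$ by the exponential tail of $\mathrm{Lap}(\sigma)$. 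On that event $|z_{tj}-\tfrac{b}{n}| \le y_{tj} + |\nu_{tj}| + \tfrac{b}{n} \le 1 + \sigma\ln(\tfrac{nm}{\beta}) + 1$, which (up to the constants absorbed into $\nabla_{\max} = 1+\sigma\ln n$) is precisely why that width and the hypothesis $\eta\nabla_{\max}<1$ were chosen. Everything else in the lemma is routine algebra.
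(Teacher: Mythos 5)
Your proof is correct and is essentially the paper's own argument: the paper simply notes that the lemma follows by the same one-step KL/multiplicative-weight calculation as Lemma~\ref{lem:3}, with $-\nabla\bar{D}(p^{(t)})$ replaced by $z_t - \tfrac{b}{n}\cdot\mathbf{1}$, which is exactly the computation you carry out. Your closing remark about the boundedness $|z_{tj}-\tfrac{b}{n}|\le\nabla_{\max}$ (via truncation or conditioning on the high-probability noise event) is a point the paper glosses over by silently inheriting the width bound from Lemma~\ref{lem:3}, so flagging and handling it is a welcome, not a divergent, addition.
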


We will omit the proof of the above lemma because it is essentially the same as that of Lemma~\ref{lem:3}, replacing $- \nabla \bar{D}(p^{(t)})$ with $z_t - \frac{b}{n} \cdot \mathbf{1}$.

Next, we decompose the Lagrangian objective $L(x, p)$ into the sum of $n$ components $L_i(x_i, p)$'s, $i \in [n]$, as follows:
\[
\textstyle
L_i(x_i, p) = \sum_{k \in [\ell_i]} \pi_{ik} x_{ik} - \sum_{j \in [m]} p_j \big( \sum_{k \in [\ell_i]} a_{ijk} x_{ik} - \frac{b}{n} \big) ~.
\]
Then, we have:
\begin{equation}
\label{eqn:online-one-step}
\begin{split}
 \big\langle p^{(t)}-p, y_{t} - \tfrac{b}{n} \cdot \mathbf{1} \big\rangle
= L_{\lambda(t)}(x_{\lambda(t)}, p) - L_{\lambda(t)}(x_{\lambda(t)}, p^{(t)}) ~.
\end{split}
\end{equation}

\begin{lemma}
	\label{lem:online-one-step-2}
	For any $p$ with $\| p \|_1 = p_{\max}$, and $\eta \nabla_{\max} < 1$, any fixed permutation $\lambda$ and over the randomness of the Laplacian noise, we have that with high probability:
	\begin{equation*}
\begin{split}
	& \textstyle
	\sum_{t \in [n]} L_{\lambda(t)}(x_{\lambda(t)}, p^{(t)}) - L(x, p) \\
& \textstyle\quad \le \frac{1}{\eta} D_{KL}(p \| p^{(1)}) + \eta \cdot \tilde{O} \big( n p_{\max} \sigma^2 \big)
\tilde{O} \big(\sqrt{n} p_{\max} \sigma \big) ~,
\end{split}
	\end{equation*}
\end{lemma}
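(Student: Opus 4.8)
The plan is to start from the one-step inequality in Lemma~\ref{lem:online-one-step-1} and telescope it over $t = 1, \dots, n$. Summing the left-hand side gives $D_{KL}(p \| p^{(n+1)}) - D_{KL}(p \| p^{(1)}) \ge -D_{KL}(p \| p^{(1)})$, so rearranging yields
\[
\textstyle
\sum_{t \in [n]} \big\langle p^{(t)} - p, z_t - \tfrac{b}{n}\cdot\mathbf{1}\big\rangle \le \tfrac{1}{\eta} D_{KL}(p \| p^{(1)}) + \eta \cdot n p_{\max} \nabla_{\max}^2 ~.
\]
The next step is to pass from the noisy demand vectors $z_t = y_t + \nu_t$ back to the true demand vectors $y_t$. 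Writing $z_t - \frac{b}{n}\mathbf 1 = (y_t - \frac{b}{n}\mathbf 1) + \nu_t$, I would peel off the noise contribution $\sum_{t}\langle p^{(t)} - p, \nu_t\rangle$. Since $p^{(t)}$ depends only on $\nu_1, \dots, \nu_{t-1}$, this is a martingale-type sum of the same shape handled by Lemma~\ref{lem:4} (here with step count $n$ in place of $T$, and noise scale $\sigma$ in place of $\tfrac1{\epsilon'}$, and the $\ell_1$-bound $\|p^{(t)} - p\|_1 \le 2p_{\max}$), so it is bounded by $\tilde O(\sqrt n\, p_{\max}\sigma)$ with high probability. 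Finally, by the decomposition identity~\eqref{eqn:online-one-step}, $\langle p^{(t)} - p, y_t - \frac{b}{n}\mathbf 1\rangle = L_{\lambda(t)}(x_{\lambda(t)}, p) - L_{\lambda(t)}(x_{\lambda(t)}, p^{(t)})$, and summing $L_{\lambda(t)}(x_{\lambda(t)}, p)$ over $t$ gives exactly $L(x, p)$ since $\lambda$ is a permutation and $\sum_i L_i(x_i, p) = L(x, p)$. Moving that term to the right gives the claimed inequality, after checking $\eta \nabla_{\max}^2 \cdot n p_{\max} = \tilde O(\eta\, n p_{\max}\sigma^2)$ from $\nabla_{\max} = 1 + \sigma\ln n$ and plugging in the parameters so the $D_{KL}$ and noise terms fit the stated form.

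Before telescoping I would also need to verify that the hypothesis $\eta\nabla_{\max} < 1$ of Lemma~\ref{lem:online-one-step-1} actually holds for the chosen parameters; with $\eta = \frac{1}{\sqrt n \sigma}$ and $\nabla_{\max} = 1 + \sigma \ln n$, we get $\eta\nabla_{\max} = \frac{1}{\sqrt n \sigma} + \frac{\ln n}{\sqrt n}$, which is less than $1$ for $\sigma \ge 1$ and $n$ large enough — this is a routine check but worth recording. I would also state precisely which union bound over the (single) invocation of Lemma~\ref{lem:4} gives the ``with high probability'' qualifier, since the failure probability $\beta$ is absorbed into the $\tilde O(\cdot)$ notation through the $\ln(1/\beta)$ factors.

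The main obstacle I anticipate is not any single inequality but making sure the bookkeeping on $z_t$ versus $y_t$ is done in the right order: the one-step lemma is naturally phrased in terms of $z_t$ (because the multiplicative update uses $z_{tj}$), whereas the decomposition identity~\eqref{eqn:online-one-step} is phrased in terms of $y_t$ (the \emph{true} demand of agent $\lambda(t)$). So the noise term must be extracted \emph{after} telescoping but \emph{before} applying the decomposition, and one must confirm that the adaptivity structure needed by Lemma~\ref{lem:4} — namely that $q^{(t)} := p^{(t)} - p$ (or rather its nonnegative parts, to match the exact hypotheses) depends only on $\nu_1,\dots,\nu_{t-1}$ — is genuinely satisfied, which it is because $\lambda$ is fixed in this lemma and $p^{(t)}$ is a deterministic function of $p^{(1)}, \nu_1, \dots, \nu_{t-1}$. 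The only mild subtlety is that Lemma~\ref{lem:4} is stated for $q^{(t)}$ with $\|q^{(t)}\|_1 = p_{\max}$, while here $\|p^{(t)} - p\|_1$ is only bounded by $2p_{\max}$; this costs at most a factor of $2$ in the final bound, which is harmless inside $\tilde O(\cdot)$.
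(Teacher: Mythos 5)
Your proposal is correct and follows essentially the same route as the paper's proof: telescope the one-step bound of Lemma~\ref{lem:online-one-step-1}, convert via the decomposition identity~\eqref{eqn:online-one-step} using that $\lambda$ is a permutation, absorb $\eta^2 n p_{\max}\nabla_{\max}^2$ with $\nabla_{\max} = \tilde{O}(\sigma)$, and control $\sum_{t}\langle p^{(t)}-p,\nu_t\rangle$ by the Lemma~\ref{lem:4}-style martingale concentration (the factor-of-$2$ slack from $\|p^{(t)}-p\|_1 \le 2p_{\max}$ is handled the same way in the paper's proof of Lemma~\ref{lem2}). The only difference is cosmetic ordering — you split $z_t = y_t + \nu_t$ after telescoping rather than substituting the decomposition into each one-step bound before summing — which changes nothing in the argument.
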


\begin{proof}
	Recall that $z_t = y_t + \nu_t$.
	By Eqn.~\eqref{eqn:online-one-step} and Lemma~\ref{lem:online-one-step-1}, we have:
	\begin{align*}
	&
	D_{KL}(p \| p^{(t+1)})-D_{KL}(p \| p^{(t)}) \\
    & \qquad \textstyle \leq \eta \cdot \big( L_{\lambda(t)}(x_{\lambda(t)}, p) - L_{\lambda(t)}(x_{\lambda(t)}, p^{(t)}) \big) + \eta^2 p_{\max} \nabla_{\max}^2 + \eta \cdot \langle p^{(t)} - p, \nu_t \rangle ~.
	\end{align*}
	Summing over $t \in [n]$, we get that
	\begin{align*}
	& \textstyle D_{KL}(p \| p^{(n+1)})-D_{KL}(p \| p^{(1)}) \\
	& \qquad  \textstyle  \leq \eta \cdot \sum_{t \in [n]}\big( L_{\lambda(t)}(x_{\lambda(t)}, p) - L_{\lambda(t)}(x_{\lambda(t)}, p^{(t)}) \big) + \eta^2 n p_{\max} \nabla_{\max}^2+ \eta \cdot \sum_{t \in [n]} \langle p^{(t)} - p, \nu_t \rangle ~.
	\end{align*}
	
	Since $\lambda$ is a permutation, we have that $\sum_{t \in [n]} L_{\lambda(t)} (x_{\lambda(t)}, p) = \sum_{t \in [n]} L_t (x_t, p) = L(x, p)$.
	Further, recall that our choice of $\nabla_{\max} = \tilde{O}(\sigma)$.
	So the RHS further equals
	\[
\begin{split}
	& \textstyle
	\eta \cdot \big( L(x, p) - \sum_{t \in [n]} L_{\lambda(t)}(x_{\lambda(t)}, p^{(t)}) \big) + \eta^2 \cdot \tilde{O} \big( n p_{\max} \sigma^2 \big) + \eta \cdot \sum_{t \in [n]} \langle p^{(t)} - p, \nu_t \rangle ~.
\end{split}
	\]
	
	Finally, let us consider the randomness of the Laplacian noise.
	We will use the concentration bound for martingales \cite{azuma1967weighted} to bound the last term.
	By Lemma~\ref{lem:4}, the last term on the RHS is at most $\tilde{O}(\sqrt{n} p_{\max} \sigma)$ with high probability.
	Rearrange terms and the lemma follows.
\end{proof}

\begin{lemma}
	\label{lem:online-lagrangian-bound}
	With high probability, we have:
	\[
	\textstyle
	\sum_{t \in [n]} L_{\lambda(t)}(x_{\lambda(t)}, p^{(t)}) \ge OPT - O(\sqrt{n} p_{\max}) ~.
	\]
\end{lemma}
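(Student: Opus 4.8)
The plan is to lower-bound each summand by substituting the optimal primal solution, and then control the resulting error term by exploiting the random arrival order. Write $x^{\mathrm{opt}}$ for an optimal solution of the packing LP and let $w^{\mathrm{opt}}_i := \big(\textstyle\sum_k a_{i1k} x^{\mathrm{opt}}_{ik}, \dots, \sum_k a_{imk} x^{\mathrm{opt}}_{ik}\big) \in [0,1]^m$ be agent $i$'s demand vector under $x^{\mathrm{opt}}$, so that $\sum_{i\in[n]} w^{\mathrm{opt}}_{ij} \le b$ for every $j\in[m]$ by feasibility. Since $x_{\lambda(t)} = x^*_{\lambda(t)}(p^{(t)})$ maximizes $L_{\lambda(t)}(\,\cdot\,, p^{(t)})$ over $X_{\lambda(t)}$ and $x^{\mathrm{opt}}_{\lambda(t)} \in X_{\lambda(t)}$, we have $L_{\lambda(t)}(x_{\lambda(t)}, p^{(t)}) \ge L_{\lambda(t)}(x^{\mathrm{opt}}_{\lambda(t)}, p^{(t)})$ for all $t$. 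Expanding $L_i$ and using that $\lambda$ is a permutation,
\[
\textstyle
\sum_{t\in[n]} L_{\lambda(t)}(x_{\lambda(t)}, p^{(t)}) \;\ge\; OPT - \sum_{t\in[n]} \big\langle p^{(t)}, \, w^{\mathrm{opt}}_{\lambda(t)} - \tfrac{b}{n}\mathbf{1} \big\rangle ~,
\]
so it suffices to show $G := \sum_{t\in[n]} \big\langle p^{(t)}, w^{\mathrm{opt}}_{\lambda(t)} - \tfrac{b}{n}\mathbf{1} \big\rangle \le \tilde O(\sqrt{n}\, p_{\max})$ with high probability.

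Denote $g_t = \big\langle p^{(t)}, w^{\mathrm{opt}}_{\lambda(t)} - \tfrac{b}{n}\mathbf{1}\big\rangle$ and let $\mathcal F_{t-1}$ be the $\sigma$-algebra generated by $\lambda(1),\dots,\lambda(t-1)$ and $\nu_1,\dots,\nu_{t-1}$. Then $p^{(t)}$ is $\mathcal F_{t-1}$-measurable, while conditioned on $\mathcal F_{t-1}$ the agent $\lambda(t)$ is uniform over the set $R_t$ of the $n-t+1$ agents not yet processed. Because $\|p^{(t)}\|_1 = p_{\max}$, $w^{\mathrm{opt}}_i \in [0,1]^m$ and $b \le n$, each $|g_t| \le 2 p_{\max}$, so $\big(\sum_{s\le t} (g_s - \mathbb E[g_s \mid \mathcal F_{s-1}])\big)_t$ is a martingale with increments of size $O(p_{\max})$; Azuma's inequality (the same tool invoked in the proof of Lemma~\ref{lem:online-one-step-2}) gives $\sum_{t\in[n]}(g_t - \mathbb E[g_t \mid \mathcal F_{t-1}]) \le \tilde O(\sqrt{n}\, p_{\max})$ with high probability. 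It thus remains to bound the drift term $\sum_{t\in[n]} \mathbb E[g_t \mid \mathcal F_{t-1}] = \sum_{t\in[n]} \big\langle p^{(t)}, \tfrac{1}{|R_t|}\sum_{i\in R_t} w^{\mathrm{opt}}_i - \tfrac{b}{n}\mathbf{1} \big\rangle$.

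The crux is that the average demand of the \emph{surviving} agents never exceeds the fair share $b/n$ by much. Fix a coordinate $j$: $R_t$ is a uniformly random size-$(n-t+1)$ subset of $[n]$, so $\tfrac{1}{|R_t|}\sum_{i\in R_t} w^{\mathrm{opt}}_{ij}$ is the empirical mean of $n-t+1$ draws without replacement from $\{w^{\mathrm{opt}}_{ij}\}_{i\in[n]} \subseteq [0,1]$, whose population mean is $\tfrac1n\sum_i w^{\mathrm{opt}}_{ij} \le \tfrac bn$. By a standard concentration inequality for sums over a uniformly random subset (a Hoeffding/Serfling-type bound, valid since the indicators $\mathbf 1[i\in R_t]$ are negatively associated) together with a union bound over all $t\in[n]$ and $j\in[m]$, with high probability
\[
\textstyle
\tfrac{1}{|R_t|}\sum_{i\in R_t} w^{\mathrm{opt}}_{ij} \;\le\; \tfrac bn + \tilde O\big(\tfrac{1}{\sqrt{n-t+1}}\big) \qquad \text{for all } t\in[n],\ j\in[m] ~.
\]
Substituting this and using $\sum_{j\in[m]} p^{(t)}_j \le p_{\max}$ yields $\mathbb E[g_t \mid \mathcal F_{t-1}] \le \tilde O\big(p_{\max}/\sqrt{n-t+1}\big)$, hence $\sum_{t\in[n]} \mathbb E[g_t \mid \mathcal F_{t-1}] \le \tilde O(p_{\max}) \sum_{s=1}^n s^{-1/2} = \tilde O(\sqrt{n}\, p_{\max})$. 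Combining with the martingale bound gives $G \le \tilde O(\sqrt{n}\, p_{\max})$, which proves the lemma.

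I expect the drift estimate to be the main obstacle. Since the prices $p^{(t)}$ are chosen adaptively from the agents already seen, $\mathbb E[g_t\mid\mathcal F_{t-1}]$ is not simply nonpositive, and the naive bound $\sum_{i\in R_t} w^{\mathrm{opt}}_{ij} \le b$ only gives the far-too-weak $\tilde O(b\, p_{\max})$ (recall $b$ may be as large as $n$). The random-order assumption must be used, and in particular one must notice that the per-round slacks $\tilde O(1/\sqrt{n-t+1})$ sum to only $\tilde O(\sqrt{n})$ --- rather than $\tilde O(\log n)\cdot b$ --- to land on the claimed $\tilde O(\sqrt{n}\, p_{\max})$ loss.
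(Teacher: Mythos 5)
Your proof is correct and follows essentially the same route as the paper's: a best-response comparison against the offline optimal solution, a sampling-without-replacement (Hoeffding/Serfling) bound showing the remaining agents' averages deviate from the global averages by only $\tilde O\big(1/\sqrt{n-t+1}\big)$ per round, an Azuma-type martingale bound for the deviation from conditional expectations, and the observation that $\sum_{t\in[n]} 1/\sqrt{n-t+1} = O(\sqrt n)$. The only (cosmetic) difference is that you apply the best-response inequality pointwise, so the value terms telescope exactly to $OPT$ and only the demand vectors need concentration, whereas the paper lower-bounds $\E\big[L_{\lambda(t)}(x_{\lambda(t)},p^{(t)})\mid\lambda(1{:}t-1)\big]$ directly and applies the same concentration to both the value and the demand parts of $L_i(x^*,p^{(t)})$.
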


\begin{proof}
	We will proceed in two steps.
	Firstly we will show that the expectation of the LHS is at least $OPT - \tilde{O}(\sqrt{n} p_{\max})$.
	Then, we will use the standard concentration bound for martingales to bound the deviation of the LHS from its expectation.
	
	For any $t$, let us fix the randomness in the first $t-1$ rounds and, thus, fix $p^{(t)}$.
	Taking expectation over only the randomness of round $t$, we get that:
	\[
	\textstyle
	\E \big[ L_{\lambda(t)}(x_{\lambda(t)}, p^{(t)}) \,|\, \lambda(1:t-1) \big]  = \frac{1}{n - t + 1} \sum_{i \in [n] \setminus \lambda(1:t-1)} L_i(x_i^*(p^{(t)}), p^{(t)}) ~.
	\]
	
	Let $x^*$ be the offline optimal primal solution.
	Since $x^*_i(p^{(t)})$ is the best response to $p^{(t)}$, we have:
	\[
\begin{split}
	 \textstyle
	\E \big[ L_{\lambda(t)}(x_{\lambda(t)}, p^{(t)}) \,|\, \lambda(1:t-1) \big]  \ge \frac{1}{n - t + 1} \sum_{i \in [n] \setminus \lambda(1:t-1)} L_i(x_i^*, p^{(t)}) ~.
\end{split}
	\]
	
	Next, consider the difference between the above quantity and the actual average over $n$ agents, i.e., $\frac{1}{n} \sum_{i \in [n]} L_i(x_i^*, p^{(t)})$.
	We have:
	\begin{align*}
	&
	\textstyle
	\frac{1}{n - t + 1} \sum_{i \in [n] \setminus \lambda(1:t-1)} L_i \big( x_i^*, p^{(t)} \big) - \frac{1}{n} \sum_{i \in [n]} L_i \big( x_i^*, p^{(t)} \big) \\
	&
	\textstyle \qquad
	= \left( \frac{1}{n - t + 1} \sum_{i \in [n] \setminus \lambda(1:t-1)} \sum_{k \in [\ell_j]} \pi_{ik} x^*_{ik}   - \frac{1}{n} \sum_{i \in [n]} \sum_{k \in [\ell_j]} \pi_{ik} x^*_{ik} \right) \\
	&
	\textstyle
	\qquad \quad + \sum_{j \in [m]} p_j \cdot \left( \frac{1}{n-t+1} \sum_{i \in [n] \setminus \lambda(1:t-1)} \sum_{k \in [\ell_i]} a_{ijk} x^*_{ik}  - \frac{1}{n} \sum_{i \in [n]} \sum_{k \in [\ell_i]} a_{ijk} x^*_{ik} \right) \\
	&
	 \textstyle \qquad
	\ge \large\left( \frac{1}{n - t + 1} \sum_{i \in [n] \setminus \lambda(1:t-1)} \sum_{k \in [\ell_j]} \pi_{ik} x^*_{ik}  - \frac{1}{n} \sum_{i \in [n]} \sum_{k \in [\ell_j]} \pi_{ik} x^*_{ik} \right) \\
	&
	\textstyle
	\qquad \quad - p_{\max} \cdot  \max_{j \in [m]} \left| \frac{1}{n-t+1} \sum_{i \in [n] \setminus \lambda(1:t-1)} \sum_{k \in [\ell_i]} a_{ijk} x^*_{ik}  - \frac{1}{n} \sum_{i \in [n]} \sum_{k \in [\ell_i]} a_{ijk} x^*_{ik} \right| ~.
	\end{align*}

	Note that $[n] \setminus \lambda(1:t-1)$ is a random subset of $n-t+1$ elements in $[n]$.
	By the standard concentration bound for sampling without replacement \cite{DBLP:conf/soda/AgrawalD15}, with high probability over the randomness of $\lambda(1:t-1)$, we have that
	\[
	 \textstyle
	\frac{1}{n - t + 1} \sum_{i \in [n] \setminus \lambda(1:t-1)} \sum_{k \in [\ell_j]} \pi_{ik} x^*_{ik} - \frac{1}{n} \sum_{i \in [n]} \sum_{k \in [\ell_j]} \pi_{ik} x^*_{ik} \ge - \tilde{O} \big( \frac{1}{\sqrt{n-t+1}} \big) ~,
	\]
	and for any $j \in [m]$
	\[
	 \textstyle
	\left| \frac{1}{n-t+1} \sum_{i \in [n] \setminus \lambda(1:t-1)} \sum_{k \in [\ell_i]} a_{ijk} x^*_{ik}- \frac{1}{n} \sum_{i \in [n]} \sum_{k \in [\ell_i]} a_{ijk} x^*_{ik} \right| \le \tilde{O} \big( \frac{1}{\sqrt{n-t+1}} \big) ~.
	\]

	Putting together, we have:
	\begin{align*}
	\textstyle
	 \E \big[ L_{\lambda(t)}(x_{\lambda(t)}, p^{(t)}) \,|\, \lambda(1:t-1) \big]
	&
	\textstyle
	\ge \frac{1}{n - t + 1} \sum_{i \in [n] \setminus \lambda(1:t-1)} L_i \big( x_i^*, p^{(t)} \big) \\
	&
	\textstyle
	\ge \frac{1}{n} \sum_{i \in [n]} L_i \big( x_i^*, p^{(t)} \big) - \tilde{O} \big( \frac{p_{\max}}{\sqrt{n-t+1}} \big) \\
	&
	\textstyle
	\ge
	\frac{1}{n} OPT - \tilde{O} \big( \frac{p_{\max}}{\sqrt{n-t+1}} \big) ~,
	\end{align*}
	where the last inequality follows by the optimality of $x^*$.
	
%
%
	Summing over $t \in [n]$, noting that $\sum_{t \in [n]} \frac{1}{\sqrt{n - t + 1}} = \sum_{t \in [n]} \frac{1}{\sqrt{t}} = O(\sqrt{n})$, we have:
	\begin{equation}
	\label{eqn:online-lagrangian-bound}
	 \textstyle
	\E \big[ L_{\lambda(t)}(x_{\lambda(t)}, p^{(t)}) \,|\, \lambda(1:t-1) \big]
	\ge OPT - \tilde{O}(\sqrt{n} p_{\max}) ~.
	\end{equation}
	
	Then, consider a sequence of random variables $u_t \in [-2p_{\max}, 2p_{\max}]$'s as follows:
	\[
	\textstyle
	u_t = L_{\lambda(t)}(x_{\lambda(t)}, p^{(t)}) - \E \big[ L_{\lambda(t)}(x_{\lambda(t)}, p^{(t)}) \,|\, \lambda(1:t-1) \big] ~.
	\]
	
	Note that $\E[u_t|\lambda(1:t-1)]=0$, so $\sum u_t$ is a martingale so we have that with high probability
	\[
	\textstyle
	\sum_{t \in [n]} u_t \ge - \tilde{O}(\sqrt{n} p_{max}) ~.
	\]
	
	Summing over $t \in [n]$ together with Eqn.~\eqref{eqn:online-lagrangian-bound} proves the lemma.
\end{proof}

Putting together Lemma~\ref{lem:online-one-step-2} and Lemma~\ref{lem:online-lagrangian-bound} proves Lemma~\ref{lem:online-key}.

\subsubsection{Approximate optimality}

Let $p_{m+1}=p_{max}$ and $p_{j}=0$ for all $j\leq m$.
By Lemma~\ref{lem:online-key}, the following holds with high probability:
\begin{align*}
\textstyle
 OPT - ALG & = OPT - L(x,p) \\
&
\textstyle
\leq
\frac{1}{\eta} D_{KL}(p \| p^{(1)}) + \eta \cdot \tilde{O} \big( n p_{\max} \sigma^2 \big) + \tilde{O} \big(\sqrt{n} p_{\max} \sigma \big) \\
&
\textstyle
=\frac{1}{\eta}p_{\max}\ln(m+1) + \eta \cdot \tilde{O} \big( n p_{\max} \sigma^2 \big) + \tilde{O} \big(\sqrt{n} p_{\max} \sigma \big)~.
\end{align*}
Plug in our choice of parameters, the RHS further equals
\[
\tilde{O} \big( \sqrt{n} p_{\max} \sigma \big) = O(\alpha n) ~.
\]
So we have the desired approximate optimality guarantee.

\subsubsection{Approximate feasibility}

We choose $p$ to penalize the over-demands and, thus, make $L(x^{(t)},p)$ as small as possible.
We let $p_{j^{*}}=p_{\max}$, where $j^{*}$ is the most over-demanded constraint and let $p_{j}=0$ for any $j\neq j^{*}$.
Let $s = b - \sum_{i\in[n]}\sum_{k\in[\ell_i]} a_{ij^*k} x_{ik}$ be the over-demand of $j^*$.
By Lemma~\ref{lem:online-key} and the choice of $p$ and $p^{(1)}$, with high probability we have:
\begin{equation}
\label{eqn:online-feasibility-1}
OPT - L(x, p) \le \tilde{O} \big( \sqrt{n} p_{\max} \sigma \big)
\end{equation}
By the choice of $p$, we further get that:
%
\[
\textstyle
L(x,p) = ALG - p_{\max} s ~.
\]
%
Note that $ALG \leq (1+\frac{s}{b})OPT$, because increasing the supply per resource from $b$ to $b+s$ increases the optimal packing objective by at most a $\frac{b+s}{b}$ factor.
So we have:
\[
\textstyle
OPT - L(x, p)
\geq OPT-(1+\frac{s}{b})OPT+p_{\max} s = s \left(p_{\max}-\frac{1}{b} OPT\right) \geq\frac{sp_{max}}{2} ~,
\]
where the last inequality is due to $p_{\max} = \tilde{O}(\frac{\alpha \sqrt{n}}{\sigma})$, $b \ge \tilde{O} \big( \frac{\sqrt{n} \sigma}{\alpha} \big)$, and $OPT \le n$.
So we have that:
\[
\textstyle
\frac{sp_{max}}{2}
\leq \tilde{O} \big( \sqrt{n} p_{\max} \sigma \big) ~.
\]
So we have $s \le \tilde{O}(\sqrt{n} \sigma)$.
Recall that $b \ge \tilde{O} \big( \frac{\sqrt{n}\sigma}{\alpha} \big)$,
we have $s \le \alpha b$.

\bibliographystyle{plain}
\bibliography{main}
\appendix
\section{Missing proofs in Section~\ref{sec:algorithm}}
\label{sec:concentration}

\subsection{Proof of Lemma~\ref{lem:3}}
\begin{proof}
	By the definition of KL-divergence, we get that:
	\begin{align}
	\label{eq:1}
	\begin{split}
	D_{KL}(p \| p^{(t+1)})-D_{KL}(p \| p^{(t)})
	&=
	\textstyle
	\sum_{j=1}^{m+1}p_j\ln \big( \tfrac{p^{(t)}}{p^{(t+1)}} \big)\\
	&=
	\textstyle
	\sum_{j=1}^{m+1} p_j \ln \big( \tfrac{\phi^{(t)}}{1-\eta\nabla_j\bar{D}(p^{(t)})} \big)\\
	&=
	\textstyle
	p_{\max}\ln \phi^{(t)} + \sum_{j=1}^{m+1} p_j \ln \big( \tfrac{1}{1-\eta\nabla_j\bar{D}(p^{(t)})} \big) ~,
	\end{split}
	\end{align}
	where the last equality is due to $ \| p \|_1 = p_{\max}$.
	Next, we bound the two terms separately.
	The first term equals:
	\begin{align*}
	p_{\max}\ln\phi^{(t)}
	&
	\textstyle
	=
	p_{\max}\ln \big( \tfrac{1}{p_{\max}}\sum_{j=1}^mp_j^{(t)} \big( 1-\eta\nabla_j\bar{D}(p^{(t)}) \big) \big) \\
	&
	\textstyle
	=
	p_{\max} \ln \big(1-\tfrac{\eta}{p_{\max}}\big\langle p^{(t)},\nabla \bar{D}(p^{(t)}) \big\rangle \big) ~.
	\end{align*}
	By the definition of $\nabla \bar{D}(p^{(t)})$, we have that
	\[
	\tfrac{\eta}{p_{\max}} \big\langle p^{(t)},\nabla \bar{D}(p^{(t)}) \big\rangle  \le \tfrac{\eta}{p_{\max}} \big\langle p^{(t)}, \nabla_{\max} \cdot \mathbf{1} \big \rangle
 = \tfrac{\eta}{p_{\max}} \cdot \nabla_{\max} \| p^{(t)} \|_1
 = \eta \nabla_{\max} < 1 ~.
	\]
	and similarly $\tfrac{\eta}{p_{\max}} \big\langle p^{(t)},\nabla \bar{D}(p^{(t)}) \big\rangle \ge - \tfrac{1}{2}$.
	Note that $\ln(1-x) \le -x$ for any $x \le 1$, we have:
	\begin{equation}
	\textstyle
	\label{eq:2}
	p_{\max}\ln\phi^{(t)} \le -\eta \big\langle p^{(t)},\nabla \bar{D}(p^{(t)}) \big\rangle ~.
	\end{equation}
	Now we bound the second term using inequalities $\ln(\frac{1}{1-xy})\leq \ln(\frac{1}{1-x})y$ for any $0 \le x, y \le 1$ and $\ln(\frac{1}{1-xy})\leq \ln(1+x)y$ for any $0 \le x \le 1, -1\leq y \leq 0$.
	\begin{align*}
	&
	\textstyle
	\sum_{j=1}^{m+1}p_j\ln\left(\frac{1}{1-\eta\nabla_j \bar{D}(p^{(t)})}\right) \\
	& \quad\quad \leq
	\textstyle
	\sum_{j:\nabla_j\geq 0} p_j \ln \left(\frac{1}{1-\eta \nabla_{\max}}\right) \cdot \frac{\nabla_j\bar{D}(p^{(t)})}{\nabla_{\max}} + \sum_{j:\nabla_j<0} p_j \ln\left(1+\eta \nabla_{\max}\right) \cdot \frac{\nabla_j\bar{D}(p^{(t)})}{\nabla_{\max}}
	\end{align*}
	Then, we further upper bound the above using $\ln(\frac{1}{1-x})\leq x+x^2$ and $\ln(1+x) \ge x - x^2$, and get that:
	\begin{align*}
	\textstyle
	 \sum_{j=1}^{m+1}p_j\ln\left(\frac{1}{1-\eta\nabla_j \bar{D}(p^{(t)})}\right) & \leq
	\textstyle
	\sum_{j:\nabla_j\geq 0}\left(\eta p_{j} \nabla \bar{D}(p^{(t)})+\eta^2 \nabla_{\max}p_j\nabla_j\bar{D}(p^{(t)})\right) \\
	& \quad \quad
	\textstyle
	+ \sum_{j:\nabla_j < 0}\left(\eta p_{j} \nabla \bar{D}(p^{(t)})-\eta^2 \nabla_{\max}p_j\nabla_j\bar{D}(p^{(t)})\right)\\
	&
	\textstyle
	=
	\eta\big\langle p,\nabla \bar{D}(p^{(t)})\big\rangle+\eta^2\nabla_{\max} \big\langle p, |\nabla \bar{D}(p^{(t)})| \big\rangle ~.
	\end{align*}
	Finally, by that $\| p \|_1 \le p_{\max}$, and $\nabla_j \bar{D}(p^{(t)}) \le \nabla_{\max}$ for any $j$, $\big\langle p, |\nabla \bar{D}(p^{(t)})| \big\rangle$ is upper bounded by $p_{\max} \nabla_{\max}$.
	So we have:
	\begin{equation}
	\label{eq:3}
\begin{split}
	\textstyle
	\sum_{j=1}^{m+1}p_j\ln\left(\frac{1}{1-\eta\nabla_j \bar{D}(p^{(t)})}\right) & \le \eta \big\langle p,\nabla \bar{D}(p^{(t)}) \big\rangle  + \eta^2 p_{\max} \nabla_{\max}^2 ~.
\end{split}	
\end{equation}
	Putting \eqref{eq:1}, \eqref{eq:2}, and \eqref{eq:3} together proves the lemma.
\end{proof}

\subsection{Proof of Lemma \ref{lem:4}}
\label{prolem4}
\begin{proof}
In our case, $v^{(t)}_j$'s are unbounded but have exponentially small tail contributions.
We follow the standard strategy of proving Azuma-Hoeffding type of concentration bounds.
By symmetry of the random variables $v^{(t)}_j$'s, it suffices to show the first inequality.

Let $X_k = \sum_{t=0}^k \big\langle q^{(t)}, v^{(t)} \big\rangle$ for $0 \le k \le T$.
For a positive $\lambda < \frac{\epsilon^{\prime}}{p_{\max}}$ whose value will be determined later, we have:
\begin{align*}
&
\textstyle
\Pr \left[ ~ X_T - X_0 \ge \frac{p_{\max}\sqrt{8 T \ln(\frac{6}{\beta})}}{\epsilon^{\prime}} ~ \right] \\
&
\textstyle\quad\quad
=
\Pr \left[ \exp{ \left( \lambda \left( X_T - X_0 - \frac{p_{\max}\sqrt{8 T \ln(\frac{6}{\beta})}}{\epsilon^{\prime}} \right) \right)} \ge 1 \right] \\
&
\textstyle\quad\quad
\le
\E \left[ \exp{ \left( \lambda \left( X_T - X_0 - \frac{p_{\max}\sqrt{8 T \ln(\frac{6}{\beta})}}{\epsilon^{\prime}} \right) \right)} \right] ~.
\end{align*}
Next, we upper bound $\E \big[ \exp \big( \lambda \big( X_T - X_0 \big) \big) \big]$, which can be rewritten as:
\[
\begin{split}
& \E_{\nu^{(1)}} [\exp\big( \lambda (X_1 - X_0) \big) \cdot \E_{\nu^{(2)} \,|\, \nu{(1)}} [\exp\big( \lambda (X_2 - X_1) \big) \cdot \dots \E_{\nu^{(T)} \,|\, \nu^{(1:T-1)}}[ \exp \big(\lambda(X_T - X_{T-1})\big) ]\dots ]]~.
\end{split}
\]
For any $1 \le t \le T$, we have that:
\begin{align*}
 \textstyle
\E_{\nu^{(t)} \,|\, \nu^{(1:t-1)}} \exp \big( \lambda \big( X_t - X_{t-1} \big) \big)
=
\prod_{j=1}^m \E \big[ \exp{\big( \lambda q^{(t)}_j v^{(t)} \big)} \big] ~.
\end{align*}
Further, for any $1 \le j \le m$, we have:
\begin{align*}
~
\E \big[ \exp{\big( \lambda q^{(t)}_j v^{(t)} \big)} \big]
& =
\textstyle
\int_{-\infty}^{+\infty} \tfrac{\epsilon^{\prime}}{2} \exp{(- \epsilon^{\prime} |y|)} \exp{(\lambda q^{(t)}_j y)} dy \\
&  =
\textstyle
\int_0^{+\infty} \tfrac{\epsilon^{\prime}}{2} \exp{(- \epsilon^{\prime} y)} \exp(\lambda q^{(t)}_j y) dy \\
& \quad + \int_0^{+\infty} \tfrac{\epsilon^{\prime}}{2} \exp{(- \epsilon^{\prime} y)} \exp{(- \lambda q^{(t)}_j y)} dy\\
&  =
\textstyle \frac{\epsilon^{\prime}}{2(\epsilon^{\prime} - \lambda q^{(t)}_j)} + \frac{\epsilon^{\prime}}{2(\epsilon^{\prime} + \lambda q^{(t)}_j)} \\
&  =
\textstyle \frac{1}{1 - (\lambda q^{(t)}_j / \epsilon^{\prime})^2} < 1 + 2 (\lambda q^{(t)}_j / \epsilon^{\prime})^2 < \exp{\big( 2 (\lambda q^{(t)}_j / \epsilon^{\prime})^2 \big)} ~.
\end{align*}
The last second inequality holds for any $\lambda \leq \frac{\epsilon^{'}}{2 q_{j}^{(t)}}$.
Hence, we have:
\begin{align*}
\textstyle
\E_{\nu^{(t)} \,|\, \nu^{(1:t-1)}} \exp \big( \lambda \big( X_t - X_{t-1} \big) \big)
\textstyle
\le
\exp{\big( \sum_{j = 1}^m 2 (\lambda q^{(t)}_j / \epsilon^{\prime})^2 \big)} ~.
\end{align*}
Next, note that $\sum_{j=1}^m (q^{(t)}_j)^2 \le \big( \sum_{j=1}^m q^{(t)}_j \big)^2 = p_{\max}^2$.
We have:
\[
\textstyle
\E_{\nu^{(t)} \,|\, \nu^{(1:t-1)}} \exp \big( \lambda \big( X_t - X_{t-1} \big) \big)
\le
\exp{\big( 2 (\lambda p_{\max} / \epsilon^{\prime})^2 \big)} ~.
\]
and, thus,
\[
\textstyle
\E \big[ \exp \big( \lambda \big( X_T - X_0 \big) \big) \big]
\le
\exp{\big( 2 T (\lambda p_{\max} / \epsilon^{\prime})^2 \big)} ~.
\]
Thus, we get that:
\begin{align*}
\begin{split}
 \textstyle
\E \left[ \exp \left( \lambda \left( X_T - X_0 - \frac{p_{\max}\sqrt{8 T \ln(\frac{6}{\beta})}}{\epsilon^{\prime}} \right) \right) \right]
\le
\exp{\left(\lambda^2 \cdot 2 T \left(\tfrac{p_{\max}}{\epsilon^{\prime}}\right)^2 - \lambda \cdot \frac{p_{\max}\sqrt{8 T \ln(\frac{6}{\beta})}}{\epsilon^{\prime}} \right)} ~.
\end{split}
\end{align*}
The lemma then follows by choosing
\[
\textstyle
\lambda = \frac{\epsilon^{\prime}}{p_{\max}} \cdot \sqrt{\frac{\ln(\frac{6}{\beta})}{2T}} ~.
\]
\end{proof}
\subsection{ Proof of Lemma \ref{lem:5}}
\label{prolem5}
\begin{proof}
By symmetry, it suffices to show the first inequality by symmetry of the random variables $v^{(t)}_j$'s.
For a positive $\lambda $ to be determined later, we have:
\begin{align*}
&
\textstyle
\Pr \left[ ~ \sum_{t=1}^T \sum_{j=1}^m q^{(t)}_j \cdot \max \big\{0, v^{(t)}_j - \tfrac{\ln(T)}{\epsilon^{\prime}} \big\}
 \quad \ge \frac{2 p_{\max}}{\epsilon^{\prime}} \ln(\frac{6}{\beta}) ~ \right] \\
&
\textstyle \qquad
=
\Pr \left[ ~ \exp{\left( \lambda \left( \sum_{t=1}^T \sum_{j=1}^m q^{(t)}_j \cdot \max \big\{0, v^{(t)}_j - \tfrac{\ln(T)}{\epsilon^{\prime}} \big\}  \right) \right)}
\cdot \exp{\left( - \frac{2 \lambda p_{\max}}{\epsilon^{\prime}} \ln(\frac{6}{\beta}) \right)} \ge 1 ~ \right] \\
&
\textstyle \qquad
\le
\E \left[ ~ \exp{\left( \lambda \left( \sum_{t=1}^T \sum_{j=1}^m q^{(t)}_j \cdot \max \big\{0, v^{(t)}_j  - \tfrac{\ln(T)}{\epsilon^{\prime}} \big\} \right) \right)} ~
\cdot \exp{\left( - \frac{2 \lambda p_{\max}}{\epsilon^{\prime}} \ln(\frac{6}{\beta}) \right)}\right]\\
&
\textstyle \qquad
= \prod_{t=1}^T \prod_{j=1}^m E \left[ ~ \exp{\left( \lambda \left( q^{(t)}_j \cdot \max \big\{0, v^{(t)}_j - \tfrac{\ln(T)}{\epsilon^{\prime}} \big\} \right) \right) } | q^{(1)}, \dots q^{(t-1)} \right] \cdot \exp{\left( - \frac{2 \lambda p_{\max}}{\epsilon^{\prime}} \ln(\frac{6}{\beta}) \right)} ~.
\end{align*}
%
Next, we bound
 $E \left[ ~ \exp{\left( \lambda \left( q^{(t)}_j \cdot \max \big\{0, v^{(t)}_j - \tfrac{\ln(T)}{\epsilon^{\prime}} \big\} \right) \right) }  \right]$ by:
\begin{align*}
&
\textstyle
E \left[ ~ \exp{\left( \lambda \left( q^{(t)}_j \cdot \max \big\{0, v^{(t)}_j - \tfrac{\ln(T)}{\epsilon^{\prime}} \big\} \right) \right) }  \right] \\
&
\textstyle \qquad
=\int_{\frac{\ln(T)}{\epsilon^{\prime}}}^{+\infty} \tfrac{\epsilon^{\prime}}{2} \exp{\big(- \epsilon^{\prime} y \big)} \exp{\big( \lambda q^{(t)}_j \cdot (y - \frac{\ln(T)}{\epsilon^{\prime}}) \big)} dy+
\Pr \big[ v^{(t)}_j \le \frac{\ln(T)}{\epsilon^{\prime}} \big] \\
\textstyle
& \qquad
=
1 - \tfrac{1}{2T}
+
\tfrac{1}{2T} \cdot \tfrac{\epsilon^{\prime}}{\epsilon^{\prime} - \lambda q^{(t)}_j}
=
1 + \tfrac{1}{2T} \tfrac{\lambda q^{(t)}_j}{\epsilon^{\prime} - \lambda q^{(t)}_j}
\le
1 + \tfrac{1}{T} \tfrac{\lambda q^{(t)}_j}{\epsilon^{\prime}}
\le
\exp{\big( \tfrac{1}{T} \tfrac{\lambda q^{(t)}_j}{\epsilon^{\prime}} \big)} ~,
\end{align*}
where the second last inequality is due to $\lambda q^{(t)}_j < \lambda p_{\max} < \tfrac{\epsilon^{\prime}}{2}$.
Thus, we have:
\begin{align*}
 \textstyle
\prod_{j=1}^m E \bigg[ ~ \exp{\left( \lambda \left(q^{(t)}_j \cdot \max \big\{0, v^{(t)}_j - \tfrac{\ln(T)}{\epsilon^{\prime}} \big\} \right) \right) } ~|~ q^{(1)}, \dots q^{(t-1)} ~ \bigg]
&
\textstyle
<
\exp{\big( \sum_{j=1}^m \tfrac{1}{T} \tfrac{\lambda q^{(t)}_j}{\epsilon^{\prime}} \big)} \\
&
\textstyle
\le
\exp{\big( \tfrac{1}{T} \tfrac{\lambda p_{\max}}{\epsilon^{\prime}} \big)} \\
&
\textstyle
\le
\exp{\big( \frac{1}{2T} \big)} ~.
\end{align*}
So we have:
\[
\begin{split}
&\textstyle
\prod_{t=1}^T \prod_{j=1}^m E \left[ ~ \exp{\left( \lambda \left( q^{(t)}_j \cdot \max \big\{0, v^{(t)}_j - \tfrac{\ln(T)}{\epsilon^{\prime}} \big\} \right) \right) } | q^{(1)}, \dots q^{(t-1)} \right]
\le
\exp{\big( \frac{1}{2} \big)} < 2 ~.
\end{split}
\]
and, thus,
\begin{align*}
\textstyle
\Pr \left[ ~ \exp{\left( \lambda \left( \sum_{t=1}^T \sum_{j=1}^m q^{(t)}_j \cdot \max \big\{0, v^{(t)}_j - \tfrac{\ln(T)}{\epsilon^{\prime}} \big\} - \frac{2}{\epsilon^{\prime}} \ln(\frac{6}{\beta}) \right) \right)}
 \ge 1 ~ \right]
\le
2 \cdot \exp{\left( - \frac{2 \lambda p_{\max}}{\epsilon^{\prime}} \ln(\frac{6}{\beta}) \right)} ~,
\end{align*}
which equals $\frac{\beta}{3}$ due to our choice of $\lambda = \frac{\epsilon^{\prime}}{2 p_{\max}}$.
\end{proof}

\subsection{Proof of Lemma~\ref{lem2}}

\begin{proof}
	By Lemma~\ref{lem:1}, we have:
	\begin{align*}
	\textstyle
	\sum_{t=1}^T \left( L(x^{(t)},p^{(t)}) - L(x^{(t)},p) \right)
	& \textstyle\quad
	= \sum_{t=1}^T \big\langle p^{(t)} - p, \nabla D(p^{(t)}) \big\rangle \\
	& \textstyle\quad
	= \sum_{t=1}^T \big\langle p^{(t)} - p, \nabla \bar{D}(p^{(t)}) \big\rangle
	\\
	& \quad\quad\textstyle
	+ \sum_{t=1}^T\big\langle p^{(t)} - p, \nabla \hat{D}(p^{(t)}) - \nabla \bar{D}(p^{(t)}) \big\rangle \\
	& \quad\quad\textstyle
	+ \sum_{t=1}^T\big\langle p^{(t)} - p, \nabla D(p^{(t)}) - \nabla \hat{D}(p^{(t)}) \big\rangle ~.
	\end{align*}
	Then, applying  \ref{lem:3}, we further get that:
	\begin{align*}
	 \textstyle
	\sum_{t=1}^T \big\langle p - p^{(t)}, \nabla \bar{D}(p^{(t)}) \big\rangle
	& \leq \sum_{t=1}^T ( \frac{1}{\eta} \big( D_{KL}(p \| p^{(t)}) - D_{KL}(p \| p^{(t+1)}) \big) + \eta \cdot p_{\max} \nabla_{\max}^2 ) \\
	&
	\textstyle
	=
	\frac{1}{\eta} \big( D_{KL}(p \| p^{(1)}) - D_{KL}(p \| p^{(T+1)}) \big)
	+ \eta \cdot T p_{\max} \nabla_{\max}^2 \\
	&
	\textstyle
	\le
	\frac{1}{\eta} D_{KL}(p \| p^{(1)}) + \eta \cdot T p_{\max} \nabla_{\max}^2 ~.
	\end{align*}
	Putting together, we have:
	\begin{align}
	\label{eq:4}
	\begin{split}
	 \textstyle
	\sum_{t=1}^T \left( L(x^{(t)},p^{(t)}) - L(x^{(t)},p) \right)
	&\leq \frac{1}{\eta} D_{KL}(p \| p^{(1)}) +\eta \cdot T p_{\max} \nabla_{\max}^2 \\
	& \quad\quad\textstyle
	+ \sum_{t=1}^T\big\langle p^{(t)} - p, \nabla \hat{D}(p^{(t)}) - \nabla \bar{D}(p^{(t)}) \big\rangle \\
	& \quad\quad\textstyle
	+ \sum_{t=1}^T\big\langle p^{(t)} - p, \nabla D(p^{(t)}) - \nabla \hat{D}(p^{(t)}) \big\rangle ~.
	\end{split}
	\end{align}
	Next, we bound the last two terms separately.
	By the definition of $\nabla \hat{D}(p^{(t)})$, the last term can be rewritten as:
	\begin{align*}
	& \textstyle
	\sum_{t=1}^T\big\langle p^{(t)} - p, \nabla D(p^{(t)}) - \nabla \hat{D}(p^{(t)}) \big\rangle 
	=
	\sum_{t=1}^T\big\langle p - p^{(t)}, \nu^{(t)} \big\rangle ~.
	\end{align*}
	Applying Lemma \ref{lem:4} twice, we get that with probability at least $1 - \tfrac{\beta}{3}$,
	\begin{equation}
	\label{eq:5}
    \begin{split}
	 \textstyle
	\sum_{t=1}^T\big\langle p^{(t)} - p, \nabla D(p^{(t)}) - \nabla \hat{D}(p^{(t)}) \big\rangle
	\le
	\frac{2 p_{\max}\sqrt{8 T \ln(\frac{6}{\beta})}}{\epsilon^{\prime}} 
	=
	\frac{16 p_{\max} T \sqrt{m \ln(\frac{6}{\beta}) \ln(\frac{2}{\delta})}}{\epsilon} ~.
    \end{split}
	\end{equation}
	%
	\\
	As for the second last term, we can upper bound it as:
	\[
\begin{split}
	 \textstyle
	\sum_{t=1}^T\big\langle p^{(t)} - p, \nabla \hat{D}(p^{(t)}) - \nabla \bar{D}(p^{(t)}) \big\rangle \le
	\sum_{t=1}^T \sum_{j=1}^m \big| p^{(t)}_j - p_j \big| \cdot \big| \nabla_j \hat{D}(p^{(t)}) - \nabla_j \bar{D}(p^{(t)}) \big| ~.
\end{split}	
\]
	Note that by the definition of $\nabla_j \bar{D}(p^{(t)})$, $\big| \nabla_j \hat{D}(p^{(t)}) - \nabla_j \bar{D}(p^{(t)}) \big|$ can be rewritten as:
	\[
	 \max \big\{0, \nabla_j D(p^{(t)}) + \nu^{(t)}_j - \nabla_{\max} \big\} + \max \big\{0, - \nabla_j D(p^{(t)}) - \nu^{(t)}_j - \nabla_{\max} \big\} ~.
\]
	Further, $\nabla_j D(p^{(t)}) = b - \sum_{i\in[n]}\sum_{k\in [\ell_i]}a_{ij}x_{ik}^{(t)} \in [-n, b] \subseteq [-n, n]$.
	So the above is at most (recall that $\nabla_{\max} = n + \tfrac{\ln(T)}{\epsilon^{\prime}}$):
	\begin{align*}
	&
	\textstyle
	\max \big\{0, n + \nu^{(t)}_j - \nabla_{\max} \big\} + \max \big\{0, n - \nu^{(t)}_j - \nabla_{\max} \big\} \\
	&
	\textstyle\quad\quad
	=
	\max \big\{0, \nu^{(t)}_j - \tfrac{\ln(T)}{\epsilon^{\prime}} \big\} + \max \big\{0, - \nu^{(t)}_j - \tfrac{\ln(T)}{\epsilon^{\prime}} \big\} ~.
	\end{align*}
	Applying Lemma~\ref{lem:5} twice, we get that with probability at least $1-\tfrac{2\beta}{3}$,
	\begin{align}
	\begin{split}
	\label{eq:6}
	&
	\textstyle
	\sum_{t=1}^T\big\langle p^{(t)} - p, \nabla \hat{D}(p^{(t)}) - \nabla \bar{D}(p^{(t)}) \big\rangle \\
	&
	\textstyle\quad\quad
	\le
	\sum_{t=1}^T \sum_{j=1}^m \big| p^{(t)}_j - p_j \big| \cdot ( \max \big\{0, \nu^{(t)}_j - \tfrac{\ln(T)}{\epsilon^{\prime}} \big\}  + \max \big\{0, - \nu^{(t)}_j - \tfrac{\ln(T)}{\epsilon^{\prime}} \big\} ) \\
	&
	\textstyle\quad\quad
	\le
	\frac{4 p_{\max}}{\epsilon^{\prime}} \ln(\frac{6}{\beta}) \le
	\frac{4 p_{\max} T \sqrt{m \ln(\frac{6}{\beta}) \ln(\frac{2}{\delta})}}{\epsilon} ~.
	\end{split}
	\end{align}
	Putting together \eqref{eq:4}, \eqref{eq:5}, and \eqref{eq:6} proves the lemma.
\end{proof}

\end{document}